\DeclareMathOperator{\Cov}{Cov}
\DeclareMathOperator{\tr}{tr}
\newcommand{\field}[1]{\mathbb{#1}}
\newcommand{\R}{\field{R}}
\newcommand{\E}{\field{E}}
\newcommand{\C}{\field{C}}
\newcommand{\h}[1]{\boldsymbol{#1}}
\newcommand{\diag}{\mbox{diag}}
\newcommand{\hzero}{{\boldsymbol{0}}}
\newcommand{\hA}{{\boldsymbol{A}}}
\newcommand{\hf}{{\boldsymbol{f}}}
\newcommand{\hB}{{\boldsymbol{B}}}
\newcommand{\hC}{{\boldsymbol{C}}}
\newcommand{\hD}{{\boldsymbol{D}}}
\newcommand{\hE}{{\boldsymbol{E}}}
\newcommand{\hF}{{\boldsymbol{F}}}
\newcommand{\hH}{{\boldsymbol{H}}}
\newcommand{\hI}{{\boldsymbol{I}}}
\newcommand{\hP}{{\boldsymbol{P}}}
\newcommand{\hQ}{{\boldsymbol{Q}}}
\newcommand{\hR}{{\boldsymbol{R}}}
\newcommand{\hS}{{\boldsymbol{S}}}
\newcommand{\hV}{{\boldsymbol{V}}}
\newcommand{\hW}{{\boldsymbol{W}}}
\newcommand{\hX}{{\boldsymbol{X}}}
\newcommand{\hY}{{\boldsymbol{Y}}}
\newcommand{\hZ}{{\boldsymbol{Z}}}
\newcommand{\hu}{{\boldsymbol{u}}}
\newcommand{\halpha}{{\boldsymbol{\alpha}}}
\newcommand{\hbeta}{{\boldsymbol{\beta}}}
\newcommand{\hgamma}{{\boldsymbol{\gamma}}}
\newcommand{\vect}{\mathrm{vec}}
\theoremstyle{plain}
\newtheorem{theorem}{Theorem}
\newtheorem{corollary}[theorem]{Corollary}
\newtheorem{proposition}[theorem]{Proposition}
\newtheorem{lemma}[theorem]{Lemma}
\theoremstyle{definition}
\theoremstyle{definition}
\newcommand{\Rmnum}[1]{\expandafter\@slowromancap\romannumeral #1@}
\newcommand{\comments}[1]{}
\newcounter{regular} \setcounter{regular}{-1}
\newcounter{stable} \setcounter{stable}{-1}
\begin{document}

\centerline{{\Large\textbf{Autoregressive Models for Matrix-Valued Time Series}}}

\bigskip \centerline{Rong Chen, Han Xiao and Dan Yang\footnote{
Rong Chen is Professor at Department of
    Statistics, Rutgers University, Piscataway, NJ 08854. E-mail:
    rongchen@stat.rutgers.edu.
Han Xiao is Associate Professor at
    Department of Statistics, Rutgers University, Piscataway, NJ
    08854. E-mail: hxiao@stat.rutgers.edu.
 Dan Yang
    is Assistant Professor at Department of Statistics, Rutgers
    University, Piscataway, NJ 08854. E-mail:
    dyang@stat.rutgers.edu.
Rong Chen is the
    corresponding author. Chen's research is supported
in part by National Science Foundation
grants DMS-1503409, DMS-1737857 and IIS-1741390. Xiao's research is supported in part by a research grant from NEC Labs America.
Yang's research is
suppored in part under NSF grant IIS-1741390.
}} \medskip \centerline{Rutgers University}

\bigskip
\smallskip

\centerline{\bf{Abstract}}

\noindent
In finance, economics and many other fields, observations in a matrix form are often generated over time. For example, a set of key economic indicators are regularly reported in different countries every quarter.  The observations at each quarter neatly form a matrix and are observed over consecutive quarters. Dynamic transport networks with observations generated on the edges can be formed as a matrix observed over time.  Although it is natural to turn the matrix observations into long vectors, then use the standard vector time series models for analysis, it is often the case that the columns and rows of the matrix represent different types of structures that are closely interplayed. In this paper we follow the autoregression for modeling time series and propose a novel matrix autoregressive model in a bilinear form that maintains and utilizes the matrix structure to achieve a substantial dimensional reduction, as well as more interpretability. Probabilistic properties of the models are investigated.  Estimation procedures with their theoretical properties are presented and demonstrated with simulated and real examples.

\smallskip
\noindent KEYWORDS: Autoregressive; Bilinear; Economic Indicators;
Kronecker Product; Multivariate Time Series; Matrix-valued Time Series;
Nearest Kronecker Product Projection; Prediction; 

\newpage
\section{Introduction}

Multivariate time series is a classical area in time series analysis,
and has been extensively studied in the literature \citep[see][for an
overview]{hannan:1970,lutkepohl:2005,tsay:2014,tiao:1981}. Recently
there has been an emerging interest in modeling high dimensional time
series. Roughly speaking these works fall into two major categories:
(i) vector autoregressive modeling with regularization \citep[][among
others]{davis:2012,basu:2015,guo:2015,han:2015,han:2016,nicholson:2015,song:2011b,kock:2015,negahban:2011,nardi:2011},
and (ii) statistical or dynamic factor models \citep[][among
others]{bai:2002,forni:2005,lam:2011,lam:2012,fan:2013,wang:2018}.  In
most of these studies, the multiple observations at each time point
are treated as a vector.

Although it has been conventional to treat multiple observations as a
vector, often the inter-relationship among the time series exhibits
some more structure. For example, \cite{hallin2011dynamic} studied
subpanel structures in multivariate time series, and \cite{tsai:2010}
considered group constraints among the time series. When the time
series are collected under the intersections of two classifications,
they naturally form matrices.

\begin{figure}[h]
  \centering
  \includegraphics[width=12.5cm]{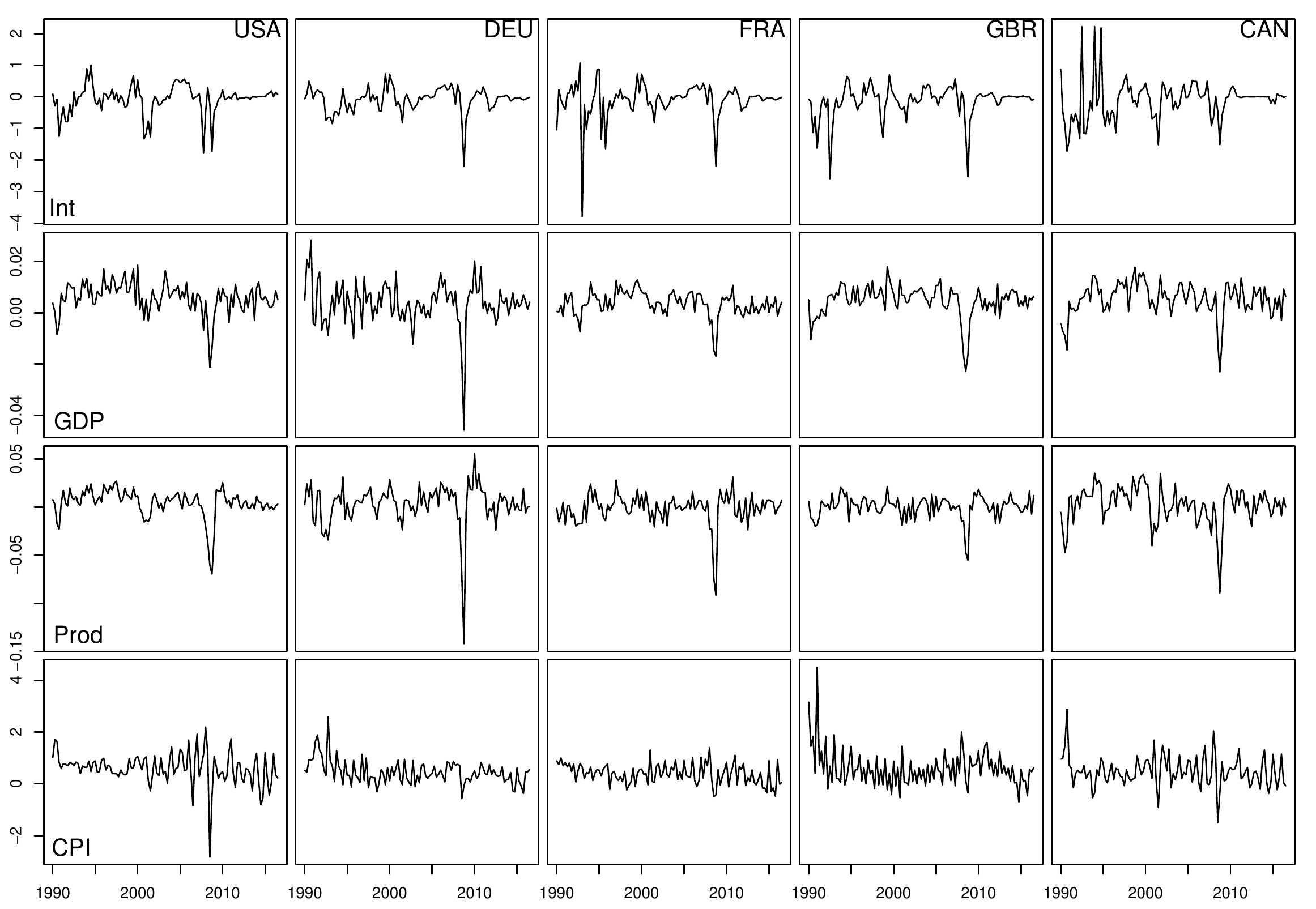}
  \caption{Time series of four economic indicators (first differenced 3 month interbank 
  interest rate, GDP growth (log difference), Total Manufacture 
  Production growth (log difference), and Total CPI growth from last period) from five countries.}
  \label{fig:1}
\end{figure}

In Figure~\ref{fig:1}, we plot four economic indicators from five
countries, resulting in a $4\times 5$ matrix observed at each time
point. In this example, the rows and columns correspond to different
classifications (economic indicators and countries).  Univariate time
series analysis would deal with individual time series separately
(e.g. US interest rate, or UK GDP). Panel time series analysis deals
with one row at a time (e.g. interest rates of the five countries), or
one column at a time (e.g. all economic indicators of US).  Obviously
every time series is related to all other time series in the matrix
and we wish to model them jointly. It is reasonable to assume that the
same economic indicator from different countries (the rows) form a
strong relationship, and at the same time economic indicators from the
same country (the columns) also naturally move together closely. Hence
there is a strong structure in the relationship among the time
series. If the matrices are concatenated into vectors, the underlying
structure is lost with significant impacts on the model {complexity}
and interpretations.



In this paper we propose to model
the matrix-valued time series under the autoregressive framework with
a bilinear form. Specifically, in this model, the conditional mean of
the matrix observation at time $t$ is obtained by multiplying the
previous observed matrix at time $t-1$ from both left and right by two
autoregressive coefficient matrices. Let $\hX_t$ be the $m\times n$ matrix observed at time $t$, our model takes the form
\begin{equation*}
  \h{X}_t = \h{A}\h{X}_{t-1}\h{B}' + \h{E}_t,
\end{equation*}
It can be extended to involve
the previous $p$ observed matrices to form an order $p$ autoregressive
model. If it involves $p$ previously observed matrices, we call it the
{\it matrix autoregressive model of order $p$}, with the acronym
MAR($p$). Compared with the traditional vector autoregressive models,
our approach has two advantages: (i) it keeps the original matrix
structure, and its two coefficient matrices have corresponding
interpretations; and (ii) it reduces the number of parameters in the
model significantly.

Similar bilinear models have been used in regression settings. For
instance, \cite{Wang+2018} considered the regression with
matrix-valued covariates for low-dimensional data. \cite{Zhao+2014a}
studied the bilinear regression with sparse coefficient vectors under
high-dimensional setting. \cite{Zhou+2013a} and \cite{Raskutti+2015}
mainly addressed the multi-linear regression with general tensor
covariates.

A major objective of our model is to take full advantage of the original matrix structure, so that the model is naturally interpretable. A similar concern has emerged in the econometrics literature when studying a large panel of data consisting of blocks. Hierarchical or multi-level factor models have been introduced to capture both the within-block and between-block variations \citep{moench2013dynamic,diebold2008global,giannone2008nowcasting}. Our model shares an interpretation as the hierarchical autoregression, which will be detailed in Section~\ref{sec:interpretation}.

Our model leads to a substantial dimension reduction as compared with a direct vector autoregressive model ($m^2+n^2$ vs $m^2n^2$). However, when the matrix observations themselves have large dimensions, it would be desirable to impose further constraints so that a greater dimension reduction can be achieved. There are a number of possible approaches. First, we may require both $\hA$ and $\hB$ to be sparse, and carryout the estimation with a $\ell_1$ penalty. Second, we can consider the additional assumption that both $\hA$ and $\hB$ are of low ranks. If we fix one of $\hA$ and $\hB$ and consider the other as the parameter, the model can be viewed as a reduced rank regression \citep{anderson:1951,izenman:1975}. This second approach is also related to the recent work \cite{wang:2018}, which studied the factor models for matrix-valued time series. These extensions are beyond the scope of this paper, and we would leave them for further research.

Since the error term in the matrix AR model is also
 a matrix, its (internal) covariances form a 4-dimensional tensor.  Here we
also consider to exploit the matrix structure to reduce the dimensionality of this
covariance tensor, by separating the row-wise and column-wise
dependencies of the error matrix.

In this paper we investigate some probabilistic properties of the
proposed model.  Several estimators for MAR(1) model are developed,
with different computing algorithms, under different assumptions on
the error covariance structure. Their asymptotic properties are
investigated.  We also compare the efficiencies of the estimators.  In
addition, the finite sample performances of the estimators are
demonstrated through simulation studies. The matrix time series of
four economic indicators from five countries, shown in
Figure~\ref{fig:1}, is analyzed in detail.

The rest of the paper is organized as follows. We introduce the
autoregressive model for matrix-valued time series in
Section~\ref{sec:model}, along with some of its
probabilistic properties.  The estimation procedures are presented  in
Section~\ref{sec:est}. Statistical inferences and the asymptotic
properties of the estimators will be considered in
Section~\ref{sec:asy}. Numerical studies are carried out in
Section~\ref{sec:num}. Section~\ref{sec:con} contains a short summary.
All the proofs are collected in Appendix.

\section{Autoregressive Model for Matrix-Valued Time Series}
\label{sec:model}

Consider a time series of length $T$, in which at each time $t$, a
$m\times n$ matrix $\h{X}_t$ is observed. Here we use $\h{X}$ in
boldface to emphasize the fact that it is a matrix. Let
$\mathrm{vec}(\cdot)$ be the vectorization of a matrix by
stacking its columns. The traditional vector autoregressive model
(VAR) of order 1 is directly applicable for $\vect(\hX_t)$. That is,
\begin{equation}
  \label{eq:var1}
  \mathrm{vec}(\h{X}_t) = \Phi\vect(\hX_{t-1}) + \vect(\hE_t).
\end{equation}

It is immediately seen that the roles of rows and columns are mixed in
the VAR model in \eqref{eq:var1}. Using the example shown in
Figure~\ref{fig:1}, the VAR model in \eqref{eq:var1} fails to
recognize the strong connections within the columns (same country) and
within the rows (same indicator).  The (large) $mn\times mn$
coefficient matrix $\Phi$ does not have any assumed structure; and the
model does not fully utilize the matrix structure, or any prior
knowledge of the potential relationship among the time series.  The
coefficient matrix $\Phi$ is also very difficult to interpret.

To overcome the drawback of the direct VAR modeling that requires
vectorization, and {to} take advantage of the original matrix
structure, we propose the {\it matrix autoregressive model (of order
  1)}, denoted by MAR(1), in the form
\begin{equation}
  \label{eq:mar1}
  \h{X}_t = \h{A}\h{X}_{t-1}\h{B}' + \h{E}_t,
\end{equation}
where $\h{A}=(a_{ij})$ and $\h{B}=(b_{ij})$ are $m\times m$ and
$n\times n$ autoregressive coefficient matrices, and
$\h{E}_t=(e_{t,ij})$ is a $m\times n$ matrix white noise. Clearly
the model can be extended to an order $p$ model in the form
\[
  \h{X}_t = \h{A}_1\h{X}_{t-1}\h{B}_1' + \cdots +
\h{A}_p\h{X}_{t-p}\h{B}_p' + \h{E}_t.
\]
We will defer the interpretations of $\hA$ and $\hB$ in (\ref{eq:mar1})
to Section 2.1.

The MAR(1) model
in (\ref{eq:mar1}) can be represented in the form of a vector
autoregressive model
\begin{equation}
  \label{eq:ovar1}
  \mathrm{vec}(\h{X}_t) = (\hB\otimes\hA)\vect(\hX_{t-1}) + \vect(\hE_t),
\end{equation}
where $\otimes$ denotes the matrix Kronecker product.  A thorough
discussion of the Kronecker product and its relationship with linear
matrix equations can be found in Chapter~4 of \cite{horn:1994}. In the Appendix, we collect
some basic properties of the Kronecker product in Proposition~\ref{thm:kronecker}. The representation
\eqref{eq:ovar1} means that the MAR(1) model can be viewed as a
special case of the classical VAR(1) model in (\ref{eq:var1}), with
its autoregressive coefficient matrix given by a Kronecker product. On
the other hand, comparing \eqref{eq:var1} and \eqref{eq:ovar1}, we see
that the MAR(1) requires $m^2+n^2$ coefficients as the entries of
$\hA$ and $\hB$, while an unrestricted VAR(1) needs $m^2n^2$ coefficients
for $\Phi$. Apparently the latter can be much larger when both $m$ and
$n$ are large.

There is an obvious identifiability issue with the MAR(1) model
in \eqref{eq:mar1}, regarding the two coefficient matrices $\hA$ and
$\hB$. The model remains unchanged if the two matrices $\hA$ and $\hB$
are divided and multiplied by the same nonzero constant
respectively. To avoid ambiguity,
we use the convention that $\hA$ is normalized so that its
Frobenius
norm is one.
On the other hand, the uniqueness always holds for the
Kronecker product $\hB\otimes\hA$.

The error matrix sequence $\{\hE_t\}$ is assumed to be a matrix white
noise, i.e. there is no correlation between $\hE_t$ and $\hE_s$ as
long as $t\neq s$. But $\hE_t$ is still allowed to have concurrent
correlations among its own entries.  As a matrix, its covariances form
a 4-dimensional tensor, which is difficult to express. In the
following we will discuss it in the form of
$\Sigma=\Cov(\vect(\hE_t))$, a $(mn)\times (mn)$ matrix. As the simplest case, we may assume the entries of $\hE_t$ are independent so that
$\Cov(\vect(\hE_t))$ is a diagonal matrix; and in general, we allow them to have arbitrary correlations. We also consider
a structured covariance matrix
\begin{equation}
  \label{eq:ocov}
  \Cov(\vect(\hE_t)) = \Sigma_c\otimes\Sigma_r,
\end{equation}
where $\Sigma_r$ and $\Sigma_c$ are $m\times m$ and $n\times n$
symmetric positive definite matrices. Under normality, this is
equivalent to assuming $\hE_t = \Sigma_r^{1/2}\h{Z}_t\Sigma_c^{1/2}$,
where all the entries of $\h{Z}_t$ are independent, and following the
standard Normal distribution.  Therefore, $\Sigma_r$ corresponds to
row-wise covariances and $\Sigma_c$ introduces column-wise
covariances.

\noindent {\bf Remarks: \ } There are many possible extensions of the
model. For example, the model can be extended to have multiple lag-one
autoregressive terms. That is
\begin{equation}
\label{multi-term}
  \h{X}_t = \h{A}_1\h{X}_{t-1}\h{B}_1' + \cdots +
\h{A}_d\h{X}_{t-1}\h{B}_d' + \h{E}_t.
\end{equation}
This is still an order-1 autoregressive model, but with more parallel terms.
In the stacked vector form, it corresponds to
\[
  \mathrm{vec}(\h{X}_t) = \left(\sum_{i=1}^d \hB_i\otimes\hA_i\right)\vect(\hX_{t-1})
+ \vect(\hE_t).
\]
Such a structure provides more flexibility to capture the different
interactions among rows and columns of the matrix time series, 
though it becomes more challenging, since there is obviously a more severe identifiability issue.

In this paper we focus on MAR(1) model \eqref{eq:mar1} in all our discussions. Extensions will be investigated elsewhere.

\subsection{Model interpretations}
\label{sec:interpretation}

The MAR(1) model is not a straightforward model. A thorough
discussion of the interpretations of its coefficient
matrices is needed. Here we offer interpretations from three different angles.

First, in model (\ref{eq:mar1}), the left matrix $\h{A}$ reflects
row-wise interactions, and the right matrix $\h{B}'$ introduces
column-wise dependence, and therefore the conditional mean in
\eqref{eq:mar1} combines the row-wise and column-wise interactions. It
is easier to see how the coefficient matrices $\hA$ and $\hB$ reflects
the row and column structures by looking at a few special cases.

To isolate the effect from the bilinear form in
\eqref{eq:mar1}, let us assume $\hA=\hI$. Then the model reduces to
$$\hX_t = \hX_{t-1}\hB' + \hE_t.$$
Consider the example shown in Figure~\ref{fig:1}, using columns
for countries and rows for economic indicators.
The conditional expectation of the first column of $\hX_t$ is given by
\begin{align*}
  &\quad\hbox{USA} \qquad\qquad\quad \hbox{USA} \quad\qquad\qquad\;\;\,
\hbox{DEU}\qquad\qquad\qquad\qquad\,  \hbox{CAN}\\
  &\begin{pmatrix}
    \hbox{Int} \\
    \hbox{GDP} \\
    \hbox{Prod} \\
    \hbox{CPI}
  \end{pmatrix}_t =
  b_{11}\begin{pmatrix}
    \hbox{Int} \\
    \hbox{GDP} \\
    \hbox{Prod} \\
    \hbox{CPI}
  \end{pmatrix}_{t-1} +
  b_{12}\begin{pmatrix}
    \hbox{Int} \\
    \hbox{GDP} \\
    \hbox{Prod} \\
    \hbox{CPI}
  \end{pmatrix}_{t-1}+\cdots+
  b_{1n}\begin{pmatrix}
    \hbox{Int} \\
    \hbox{GDP} \\
    \hbox{Prod} \\
    \hbox{CPI}
  \end{pmatrix}_{t-1},
\end{align*}
which means that at time $t$, the conditional expectation of an
economic indicator of one country is a linear combinations of the same
indicator from all countries at $t-1$, and this linear combination is
the same for different indicators. Therefore, this model (and
$\hB$) captures the column-wise interactions, i.e. interactions among the
countries. However, the interactions are refrained within each
indicator. There are no interactions among the indicators.

On the other hand, if we let $\hB=\hI$ in model \eqref{eq:mar1}, then
a similar interpretation can be obtained, where the matrix $\hA$
reflects the row-wise interactions, i.e. interactions among the
economic indicators within each country. There are no interactions
among the countries.

Second, we can interpret the model from a row-wise and column-wise VAR
model point of view. For example, if $\hB=\hI$, then the model becomes
$$\hX_t = \hA\hX_{t-1} + \hE_t.$$ In this case, each column of $\hX_t$
follows
\[
\hX_{t,\cdot j}=\hA\hX_{t-1,\cdot j}+\hE_{t, {\cdot j}}, \mbox{ \ \ }
j=1,\ldots n.
\]
That is, each {column} of $\hX_t$ follows the same VAR(1) model of
dimension {$m$}. Specifically, for the first two columns in the example
of Figure~\ref{fig:1}, the models are

\begin{align*}
  &\hbox{~~~~USA} \qquad\qquad\;\; \hbox{USA} \qquad\qquad\;\; \hbox{USA}\;\;
&
&\hbox{~~~~DEU} \qquad\qquad\;\; \hbox{DEU} \qquad\qquad\;\; \hbox{DEU}\;\; \\
  &\begin{pmatrix}
    \hbox{Int} \\
    \hbox{GDP} \\
    \hbox{Prod} \\
    \hbox{CPI}
  \end{pmatrix}_t =
  \hA\begin{pmatrix}
    \hbox{Int} \\
    \hbox{GDP} \\
    \hbox{Prod} \\
    \hbox{CPI}
  \end{pmatrix}_{t-1} +
    \begin{pmatrix}
    \hbox{e\_Int} \\
    \hbox{e\_GDP} \\
    \hbox{e\_Prod} \\
    \hbox{e\_CPI}
  \end{pmatrix}_{t}
& \mbox{and\ \ }  &\begin{pmatrix}
    \hbox{Int} \\
    \hbox{GDP} \\
    \hbox{Prod} \\
    \hbox{CPI}
  \end{pmatrix}_t =
  \hA\begin{pmatrix}
    \hbox{Int} \\
    \hbox{GDP} \\
    \hbox{Prod} \\
    \hbox{CPI}
  \end{pmatrix}_{t-1} +
    \begin{pmatrix}
    \hbox{e\_Int} \\
    \hbox{e\_GDP} \\
    \hbox{e\_Prod} \\
    \hbox{e\_CPI}
  \end{pmatrix}_{t}
\end{align*}

In other words, for each country, its economic indicators follow a VAR(1)
model (of its own past) of dimension $m$;
and different countries would follow the same VAR(1) model.

If $\hA=\hI$, then $$\hX_t = \hX_{t-1}\hB' + \hE_t.$$ In this case,
each row of $\hX_t$ (same indicator from different countries) would
follow a VAR(1) model. And the coefficient matrices corresponding to
different rows would be the same.

Obviously both models
$\hX_t = \hA\hX_{t-1} + \hE_t$ and  $\hX_t = \hX_{t-1}\hB' + \hE_t$
are too restrictive. It is difficult to reason that Germany's
economic indicators follow the same model as the US's, and there is
no interaction between Germany and US. There are two possible ways to
add flexibility. One can assume an additive interaction structure to make
the model as
\[
\hX_t = \hA\hX_{t-1} + \hX_{t-1}\hB' + \hE_t,
\]
which is essentially a special case of the multi-term model in
(\ref{multi-term}) with $d=2$ and $\hB_1=\hI$ and $\hA_2=\hI$.  Or we
can assume a one-term multiplicative interaction structure, which
leads to MAR(1). Of course, one can also use
\[
\hX_t = \hA_1\hX_{t-1} + \hX_{t-1}\hB_1'+\hA_2\hX_{t-1}\hB_2' + \hE_t,
\]
similar to the model with main effects plus two-way interactions. In
this paper we choose to work on MAR(1) in (\ref{eq:mar1}).

The third way to interpret MAR(1) is through a defined hierarchical structure. Multi-level or hierarchical factor models have been introduced in the econometric literature to study a large panel of data consisting of blocks or even sub-blocks \citep{moench2013dynamic,diebold2008global,giannone2008nowcasting}. Here we illustrate that our model shares a similar interpretation as hierarchical autoregression.
Let $\hY_{t-1}=\hX_{t-1}\hB'$. It would be the prediction of $\hX_t$
(or the conditional mean) if $\hA=\hI$. Since each column of
$\hY_{t-1}$ is based on the linear combination of all columns of
$\hX_{t-1}$ with no row (indicator) interaction, we can view each entry
in $\hY_{t-1}$ as the {\bf globally adjusted indicator}. For
example, $\hY_{t-1, GDP, US}$ is a linear combination of the GDPs of
all countries at time $t-1$. Next, we consider
$\hZ_{t-1}=\hA\hY_{t-1}$. This would be the prediction of $\hX_t$ {if}
the model is $\hX_t=\hA\hY_{t-1}+\hE_t$. It replaces each entry
(indicator) in $\hX_{t-1}$ by its corresponding globally adjusted indicator in
$\hY_{t-1}$. Each entry in $\hZ_{t-1}$ is a linear combination of the
adjusted indicators from the {\bf same} country. For example,
$\hZ_{t-1,GDP,US}$ is a linear combination of $\hY_{t-1,GDP, US}$,
$\hY_{t-1,INT,US}$ etc. It can be viewed as a second adjustment by
other indicators (within the same country). Putting everything together,
we have
\[
\hX_t=\hZ_{t-1}+\hE_t={\hA\hY_{t-1}}+\hE_t=\hA\hX_{t-1}\hB'+\hE_t.
\]

Note that, if $\hX_t$ follows the MAR(1) in (\ref{eq:mar1}), each entry
$x_{t,ij}$ in $\hX_t$ follows
\[
{x_{t,ij}=\sum_{k_1=1}^m\sum_{k_2=1}^n a_{ik_1}x_{t-1,k_1k_2}b_{jk_2}
+e_{t,ij}}.
\]
Hence $x_{t,ij}$ is controlled only by $i$-th row of $\hA$ and $j$-th
column of $\hB'$. In the example of Figure~\ref{fig:1},
the $i$-th row of $\hA$ can be viewed as the coefficient corresponding
to $i$-th indicator and the $j$-th column of $\hB'$ as the
coefficient corresponding to the $j$-th country. Their values can be
interpreted, as we will demonstrate in the real example in Section 4.

The error covariance matrix
$\Cov(\vect(\hE_t)) = \Sigma_c\otimes\Sigma_r$ in (\ref{eq:ocov}),
which consists of all pairwise covariances
$\Cov(e_{t,ij}, e_{t,kl})=\sigma_{c,jl}\sigma_{r,ik}$, has a similar
interpretation. For example, if $\Sigma_r=\hI$, then
$\hE_t = \hZ\Sigma_c^{1/2}$, which implies that the $\Sigma_c$ matrix
captures the concurrent dependence of the columns of shocks in
$\hE_t$.  Note that each row of $\hE_t$ in this case is
$\hE_{t,i \cdot}=\hZ_{t,i \cdot}\Sigma_c^{1/2}$. Hence
$\Cov(\hE_{t,i \cdot})=\Sigma_c$ for all rows $i=1,\dots, m$, and
therefore $\Sigma_c$ captures the covariance among the (column)
elements in each row. In parallel, $\Sigma_r$ captures the concurrent
dependence among the rows of shocks in $\hE_t$.

\subsection{Probabilistic properties of MAR(1)}

For any square matrix $\h{C}$, we use $\rho(\h{C})$ to denote its
spectral radius, which is defined as the maximum modulus of the
(complex) eigenvalues of $\h{C}$. Since the MAR(1) model can be
represented in the form \eqref{eq:ovar1}, we see that \eqref{eq:mar1}
admits a causal and stationary solution if the spectral radius of
$\hB\otimes\hA$, which is the product of the spectral radii of $\hA$
and $\hB$, is strictly less than 1. Hence we have

\begin{proposition}
  \label{thm:causal}
  If $\rho(\hA)\cdot\rho(\hB)<1$, then the MAR(1) model
  \eqref{eq:mar1} is stationary and causal.
\end{proposition}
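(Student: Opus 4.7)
The plan is to reduce the MAR(1) stationarity question to the well-known stationarity criterion for VAR(1) processes by exploiting the vectorized representation \eqref{eq:ovar1}. Once the reduction is in place, the proof becomes a short chain of standard facts about Kronecker products together with a classical result on linear difference equations driven by white noise.

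First, I would recall that equation \eqref{eq:ovar1} expresses $\vect(\hX_t)$ as a VAR(1) process in $\R^{mn}$ with coefficient matrix $\hB\otimes\hA$ and innovation $\vect(\hE_t)$. From the standard theory of vector autoregressions (see, e.g., \citealp{lutkepohl:2005}), such a VAR(1) admits a unique causal and stationary solution
\[
\vect(\hX_t) = \sum_{k=0}^{\infty} (\hB\otimes\hA)^k \vect(\hE_{t-k})
\]
provided that $\rho(\hB\otimes\hA) < 1$, in which case the series on the right converges in $L^2$ and almost surely. So the task reduces to controlling $\rho(\hB\otimes\hA)$.

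Next, I would invoke the Kronecker product eigenvalue identity from Proposition~\ref{thm:kronecker} in the Appendix: if $\lambda_1,\dots,\lambda_m$ are the eigenvalues of $\hA$ and $\mu_1,\dots,\mu_n$ are the eigenvalues of $\hB$, then the $mn$ eigenvalues of $\hB\otimes\hA$ are exactly the products $\mu_j\lambda_i$. Taking maximum moduli gives
\[
\rho(\hB\otimes\hA) = \max_{i,j} |\mu_j \lambda_i| = \left(\max_i |\lambda_i|\right)\left(\max_j |\mu_j|\right) = \rho(\hA)\cdot\rho(\hB).
\]
Under the hypothesis $\rho(\hA)\cdot\rho(\hB) < 1$, we therefore obtain $\rho(\hB\otimes\hA) < 1$, and the VAR(1) result in the previous step yields a causal and stationary solution $\vect(\hX_t)$, hence a causal and stationary $\hX_t$.

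There is no real obstacle here: both ingredients (the spectral radius of a Kronecker product, and the VAR(1) stationarity criterion) are classical. The only point that deserves a line of care is articulating that stationarity and causality of the vectorized process transfer back to the matrix-valued process, which is immediate because $\vect$ is a bijection between $\R^{m\times n}$ and $\R^{mn}$ and intertwines the time-shift. Thus the proof should be quite short.
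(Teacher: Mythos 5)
Your proof is correct and follows essentially the same route as the paper's: vectorize the model into the VAR(1) form \eqref{eq:ovar1}, use Proposition~\ref{thm:kronecker}(vi) to conclude $\rho(\hB\otimes\hA)=\rho(\hA)\cdot\rho(\hB)$, and invoke the classical VAR(1) causality criterion. The only difference is cosmetic --- you cite \cite{lutkepohl:2005} where the paper cites \S11.3 of \cite{brockwell:1991} --- so nothing further is needed.
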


The detailed proof of the proposition is given in the Appendix.

We remark that the property of being ``stationary and causal'' is
referred to as being ``stable'' in \cite{lutkepohl:2005}. Here we follow the terminology and definitions used in \cite{brockwell:1991}. The condition $\rho(\hA)\cdot\rho(\hB)<1$ will be referred to as the {\it causality condition} in the sequel.

When the condition of Proposition~\ref{thm:causal} is fulfilled, the
MAR(1) model in \eqref{eq:mar1} has the following causal representation
after vectorization:
\begin{equation}
  \label{eq:mar1_causal}
  \vect(\hX_t)=\sum_{k=0}^\infty {\left(\hB^k\otimes\hA^k\right)}\vect(\hE_{t-k}).
\end{equation}
It follows that the autocovariance matrices of \eqref{eq:mar1} is given
by
\begin{equation}
  \label{eq:autocov}
  \Gamma_k:=\Cov(\vect(\hX_t),\vect(\hX_{t-k})) = \sum_{l=0}^\infty {\left(\hB^{k+l}\otimes\hA^{k+l}\right)
  \Sigma\left(\hB^l\otimes\hA^l\right)'}, \quad k\geq 0,
\end{equation}
{where $\Sigma$ is the covariance matrix of $\vect(\hE_t)$}. The condition $\rho(\hA)\cdot\rho(\hB)<1$ guarantees that the
infinite matrix series is absolutely summable.

\medskip
It is known that a causal and a non-causal VAR(1) can lead to equivalent models under Gaussianity \citep{hannan:1970}. However, if the parameter space of \eqref{eq:var1} is restricted to $\Theta:=\{(\Phi,\Sigma):\;\rho(\Phi)<1, \hbox{ and }\Sigma \hbox{ is nonsingular}\}$, then the VAR(1) model \eqref{eq:var1} is identifiable in the sense that if two sets of parameters $(\Phi_1,\Sigma_1)\in\Theta$ and $(\Phi_2,\Sigma_2)\in\Theta$ generate the same autocovariance functions \eqref{eq:autocov}, then they must be identical. To see this, first note that under causality, the best linear prediction of $\vect(\hX_t)$ by $\vect(\hX_{t-1})$ is $\Phi_i\vect(\hX_{t-1})$, so the prediction error covariance matrices $\Sigma_1$ and $\Sigma_2$ must be the same. Second, under causality, $\Gamma_1=\Phi_i\Gamma_0$. Since $\Sigma_1=\Sigma_2$ are nonsingular, so is $\Gamma_0$. Therefore, both $\Phi_1$ and $\Phi_2$ would equal to $\Gamma_1\Gamma_0^{-1}$. Now we consider the identifiability regarding $\hA$ and $\hB$ over the parameter space $\{(\hA,\hB):\;\|\hA\|_F=1,\;\|\hB\|_F>0\}$. If $\hB_1\otimes\hA_1=\hB_2\otimes\hA_2$, then $\h{M}:=\vect(\hA_1)\vect(\hB_1)'=\vect(\hA_2)\vect(\hB_2)'$. Since $\h{M}$ is a nonzero matrix, and $\|\hA_1\|_F=\|\hA_2\|_F=1$, the uniqueness of the singular value decomposition of $\h{M}$ guarantees that $(\hA_1,\hB_1)=\pm(\hA_2,\hB_2)$, giving the desired identifiability up to a sign change. When $\Sigma$ is defined with the Kronecker product form \eqref{eq:ocov}, the identifiability regarding parameters $\Sigma_r$ and $\Sigma_c$ can be similarly showed if we require both of them to be nonsingular, and $\|\Sigma_r\|_F=1$.

\medskip
We discuss the impulse response function  with orthogonal innovations (oIRF) of the MAR(1) model.
Since MAR(1) is a special case of VAR(1), we follow the definition given in Section~2.10 of \cite{tsay:2014}. 
However, the standard approach requires fixing the order under which 
the innovations are orthogonalized, which is specially difficult to determine for matrix innovations. In this paper we 
adopt a simpler strategy. To obtain the oIRF with a shock at $(i,j)$-th series, we will put that series as the first 
series in the vectorized VAR model, and all other innovations will be orthogonalized with the $(i,j)$-th innovation fixed. The oIRF 
obtained this way actually does not depend on the order of the rest variables and its formulation is simple without the need to
perform Cholesky decomposition. 
We will call the oIRF obtained this way 
{\it the shock-first impulse response function with orthogonal innovations} (s1-oIRF). Specifically, 
let $\Sigma[\,,i]$ be the
$i$-th column of $\Sigma$, and $\sigma_{ij}$ be the $(i,j)$-th entry
of $\Sigma$. Then the s1-oIRF of a unit standard
deviation change in $e_{t,ij}$  (which is $\sigma^{1/2}_{m(j-1)+i,m(j-1)+i}$)  
is given by, in the vectorized form of the original matrix, 
\begin{equation} \label{first-oIRF}
    \hF_{i,j}(k)=\left(\hB^k\otimes\hA^k\right)\Sigma[\,, m(j-1)+i].
\end{equation}
Note that the s1-oIRF in \eqref{first-oIRF} depends on the series to which the shock occurs, just as 
the standard oIRF depends on the order of variables. The accumulated s1-oIRF is in the form
\[
\tilde{\hF}_{i,j}(K)=\left(\sum_{k=0}^K\hB^k\otimes\hA^k\right)\Sigma[\,, m(j-1)+i].
\]

This type of impulse response
function exhibits a special structure if $\Sigma$ has the Kronecker
product form \eqref{eq:ocov}. Let $\Sigma_{r,\cdot i}$ and
$\Sigma_{c, \cdot i}$ be the $i$-th columns of $\Sigma_r$ and
$\Sigma_c$, respectively. Then the effect of a unit standard deviation
change in $e_{t,ij}$ on the future $\vect(\hX_{t+k})$ is given
by $(\hB^k\Sigma_{c,\cdot j})\otimes(\hA^k\Sigma_{r,\cdot i})$.

To see the impact of this formulation, consider the case that a one-standard deviation shock occurs at
$(1,1)$ series. Let $f^{c}_{i}(k)=(\hB^k\Sigma_{c,\cdot 1})[i]$, the
$i$-th element of $\hB^k\Sigma_{c,\cdot 1}$ and
$f^{r}_{j}(k)=(\hA^k\Sigma_{r,\cdot 1})[j]$, the $j$-th element of
$\hA^k\Sigma_{r,\cdot 1}$.  Then the impulse response function of
$(i,j)$-th series at lag $k$ is
\[
f_{i,j}(k)= f^{r}_{i}(k)f^{c}_{j}(k)
=(\hA^k\Sigma_{r,\cdot 1})[i]\cdot(\hB^k\Sigma_{c,\cdot 1})[j].
\]
We further let $\hf^r(k)=[f^r_1(k),\ldots,f^r_m(k)]'$, and
$\hf^c(k)=[f^c_1(k),\ldots,f^c_n(k)]'$. (We use the boldfaced 
$\hf(\cdot)$ here 
to emphesize it is a vector of functions.) We can view
$\hf^{r}(k)$ as the column response function and
$\hf^{c}(k)$ as the row response function.  Using the example shown
in Figure~\ref{fig:1}, if there is a unit standard deviation shock in
the interest rate of US (location $(1,1)$ in the matrix), then its lag
$k$ effect on the four economic indicators for $j$-th country is
\[
[f_{1,j}(k), f_{2,j}(k), f_{3,j}(k), f_{4,j}(k)]'=
[f^{r}_{1}(k), f^{r}_{2}(k),f^{r}_{3}(k),f^{r}_{4}(k)]'f^{c}_{j}(k)=
f^c_j(k)\cdot \hf^r(k),
\]
which has the following interpretations. The effect of the shock on
four economic indicators in the same $j$-th country, a 4-dimensional
vector $[f_{1,j}(k), f_{2,j}(k), f_{3,j}(k), f_{4,j}(k)]'$, is
proportional to the vector
$\hf^{r}(k)$, for
all countries $1\leq j\leq 5$. Hence the five (4-dimensional economic
indicator) vectors corresponding to the five countries are parellel to
each other and only differ by the multiplier $f^{c}_{j}(k)$. This form
of impulse response function implies that the economies of different
countries have a co-movement as responses to a shock, but the impacts
on different countries are of different scales. Similarly, we have
\[
[f_{i,1}(k), \ldots, f_{i,5}(k)]=
f^{r}_{i}(k)\cdot[f^{c}_{1}(k), \ldots ,f^{c}_{5}(k)] = f^{r}_{i}(k)\cdot 
[\hf^c(k)]',\quad 1\leq i\leq 4.
\]
That is, the effect on five countries regarding each economic
indicator, which is a 5-dimensional row vector, is proportional to
$\hf^{c}(k)$, and the four vectors
corresponding to four indicators only differ by lengths
$f^{r}_{i}(k)$.

In general, for a shock that occurs at location $(i,j)$, the lag-$k$ row
and column response functions are
$\hf^{c,j}(k)=\hB^k\Sigma_{c,\cdot j}$ and
$\hf^{r,i}(k)=\hA^k\Sigma_{r,\cdot i}$, respectively.  In fact,
the response function in matrix form in this case is given by the
rank-one matrix:
\[
\hbox{\bf F}_{i,j}(k)=\hf^{r,i}(k)[\hf^{c,j}(k)]'.
\]

\section{Estimation}
\label{sec:est}

\subsection{Projection method}

To estimate the coefficient matrices $\hA$ and $\hB$, our first
approach is to view the MAR(1) model in \eqref{eq:mar1} as the
structured VAR(1) model in \eqref{eq:ovar1}. We first obtain the
maximum likelihood estimate or the least square estimate $\hat\Phi$ of
$\Phi$ in \eqref{eq:var1} without the structure constraint, then we find the
estimators by projecting $\hat\Phi$ onto the space of Kronecker
products under the Frobenius norm:
\begin{equation}
  \label{eq:nkp}
  (\hat\hA_1,\hat\hB_1)=\arg\min_{\hA,\hB}\|\hat\Phi-\hB\otimes\hA\|_F^2.
\end{equation}
This minimization problem is called the {\it nearest Kronecker
  product} (NKP) problem in matrix computation
\citep{vanloan:1993,vanloan:2000}. It turns out that an explicit
solution exists, which can be obtained through a singular value
decomposition (SVD) of a rearranged version of $\hat\Phi$.

Note that the set of all entries in $\hB\otimes \hA$ is exactly the
same as the set of all entries in $\vect(\hA)\vect(\hB)'$. The two matrices
have the same set of elements, and only
differ by the placement of the elements  in the matrices.
Define a re-arrangement
operator $\mathcal{G}: \R^{mn} \times \R^{{mn}} \rightarrow \R^{{m^2}}\times
\R^{{n^2}}$ such that
\[
{\cal G}(\hB\otimes \hA)=\vect(\hA)\vect(\hB)'.
\]
It is easy to see that the operator is a linear operator such that
${\cal G}(\hC_1+\hC_2)={\cal G}(\hC_1)+{\cal G}(\hC_2)$. We also note that
the Frobenius norm of a matrix only depends on the elements in the matrix, but
not the arrangement, hence $||{\cal G}(\hC)||_F=||\hC||_F$. Then we have
\begin{eqnarray*}
\min_{\hA,\hB}\|\hat\Phi-\hB\otimes\hA\|_F^2 & = &
\min_{\hA,\hB}\|{\cal G}(\hat\Phi)-{\cal G}(\hB\otimes\hA)\|_F ^2\\
&=& \min_{\hA,\hB}\|{\cal G}(\hat\Phi) - \vect(\hA)\vect(\hB)'\|_F^2 \\
&=& \min_{\hA,\hB}\|{\tilde \Phi} - \vect(\hA)\vect(\hB)'\|_F^2,
\end{eqnarray*}
where ${\tilde \Phi}={\cal G}(\hat\Phi)$ is the re-arranged
$\hat\Phi$.
It follows that the solution of \eqref{eq:nkp} can be obtained through
\begin{equation*}
  \vect(\hat\hA)\vect(\hat\hB)'=d_1\h{u}_1\h{v}_1',
\end{equation*}
where $d_1$ is the largest singular value of $\tilde\Phi$, and
$\h{u}_1$ and $\h{v}_1$ are the corresponding first left and right
singular vectors, respectively.  By converting the vectors into
matrices, we obtain corresponding estimators of $\hA$ and $\hB$,
denoted by $\hat\hA_1$ and $\hat\hB_1$, with
the normalization that $||\hat{\hA}_1||_F=1$. We call them {\it projection
  estimators}, and will use the acronym PROJ for later references.

We illustrate the re-arrangement operation
with a special case of $m=n=2$.
We first rearrange the entries of the Kronecker product
$\hB\otimes\hA$:
\begin{align*}
    \left[
    \begin{array}{cc|cc}
      b_{11}a_{11}&b_{11}a_{12}&b_{12}a_{11}&b_{12}a_{12} \\
      b_{11}a_{21}&b_{11}a_{22}&b_{12}a_{21}&b_{12}a_{22} \\\hline
      b_{21}a_{11}&b_{21}a_{12}&b_{22}a_{11}&b_{22}a_{12} \\
      b_{21}a_{21}&b_{21}a_{22}&b_{22}a_{21}&b_{22}a_{22} \\
    \end{array}\right]
    \longrightarrow
    \left[
    \begin{array}{cccc}
      b_{11}a_{11}&b_{21}a_{11}&b_{12}a_{11}&b_{22}a_{11} \\
      b_{11}a_{21}&b_{21}a_{21}&b_{12}a_{21}&b_{22}a_{21} \\
      b_{11}a_{12}&b_{21}a_{12}&b_{12}a_{12}&b_{22}a_{12} \\
      b_{11}a_{22}&b_{21}a_{22}&b_{12}a_{22}&b_{22}a_{22}
    \end{array}\right].
\end{align*}
We then rearrange the entries of $\hat\Phi$ in exactly the same way:
\begin{align*}
    \hat\Phi=\left[
    \begin{array}{cc|cc}
      \phi_{11}&\phi_{12}&\phi_{13}&\phi_{14}\\
      \phi_{21}&\phi_{22}&\phi_{23}&\phi_{24}\\\hline
      \phi_{31}&\phi_{32}&\phi_{33}&\phi_{34}\\
      \phi_{41}&\phi_{42}&\phi_{43}&\phi_{44}\\
    \end{array}\right]
    \longrightarrow
    \left[
    \begin{array}{cccc}
      \phi_{11}&\phi_{31}&\phi_{13}&\phi_{33}\\
      \phi_{21}&\phi_{41}&\phi_{23}&\phi_{43}\\
      \phi_{12}&\phi_{32}&\phi_{14}&\phi_{34}\\
      \phi_{22}&\phi_{42}&\phi_{24}&\phi_{44}\\
    \end{array}\right]=:\tilde\Phi.
\end{align*}
By abuse of notation, we omit the hat on each individual
$\hat\phi_{ij}$. Now it is clear that the NKP problem \eqref{eq:nkp}
is equivalent to
$\min_{\hA,\hB}\|\tilde \Phi - \vect(\hA)\vect(\hB)'\|_F^2$.

In fact, by obtaining the first $k$ largest singular values of  ${\tilde \Phi}={\cal G}(\hat\Phi)$ 
and their corresponding $k$-th left and right singular vectors 
$\h{u}_k$ and $\h{v}_k$, respectively, and then converting the vectors into
matrices, we obtain estimators of $\hA_i$ and $\hB_i$ in the multi-term model (\ref{multi-term}),
under proper model assumptions.

Note that this procedure requires the estimation of the $mn\times mn$
coefficient matrix $\Phi$ first. This task is often formidable and
inaccurate with
moderately large $m$ and $n$ and a finite sample size. Hence the
resulting projection estimator may not be very accurate. However, it can
serve as the initial value for a more elaborate iterative
procedure.

\subsection{Iterated least squares}


If we assume the entries of $\h{E}_t$ are i.i.d. normal with mean zero
and a constant variance, the maximum likelihood estimator, denoted by
$\hat\hA_2$ and $\hat\hB_2$, is the solution of the least squares
problem
\begin{equation}
  \label{eq:lse}
  \min_{\hA,\hB}\sum_{t}\|\hX_t-\hA\hX_{t-1}\hB'\|_F^2.
\end{equation}
We refer to this estimator as LSE for the rest of this paper. If the
error covariance matrix is arbitrary, the LSE is still an intuitive
and reasonable estimator. To see the connection between the two
estimators PROJ and LSE, define
\begin{equation}
\label{eq:design.mat}
\begin{aligned}
  \h{\mathcal{Y}}&=[\vect(\hX_2),\vect(\hX_3),\ldots,\vect(\hX_T)], \\
  \h{\mathcal{X}}&=[\vect(\hX_1),\vect(\hX_2),\ldots,\vect(\hX_{T-1})].
\end{aligned}
\end{equation}
The minimization problem \eqref{eq:lse} can be rewritten as
\begin{equation}
  \label{eq:inkp}
  \min_{\hA,\hB}\|\h{\mathcal{Y}}-(\hB\otimes\hA) \h{\mathcal{X}}\|_F^2.
\end{equation}
Comparing \eqref{eq:inkp} and \eqref{eq:nkp}, we see the
problem \eqref{eq:inkp} can be viewed as an inverse NKP
problem. Unfortunately it does not have an explicit SVD solution
\citep{vanloan:2000}.

There is another way to understand the minimization problem
\eqref{eq:lse}. Define
\begin{equation}
  \label{eq:lse_bla}
\begin{aligned}
  \h{\mathfrak{Y}}'&=[\hX_2',\hX_3',\ldots,\hX_T'], \\
  \h{\mathfrak{X}}'_{\hA}&=[\hX_1'\hA',\hX_2'\hA',\ldots,\hX_{T-1}'\hA'].
\end{aligned}
\end{equation}
With these notations, the least squares problem \eqref{eq:lse} is
equivalent to
\begin{equation}
  \label{eq:lse1}
  \min_{\hA}\left\{\min_{\hB}\|\h{\mathfrak{Y}}-\h{\mathfrak{X}}_{\hA}\hB\|_F^2\right\}.
\end{equation}
In other words, we aim to find the optimal $\hA$, so that the
projection of the columns of $\h{\mathfrak{Y}}$ on the column space of
$\h{\mathfrak{X}}_{\hA}$ is maximized.

Taking partial derivatives of \eqref{eq:lse} with respect to the entries
of $\hA$ and $\hB$ respectively, we obtain the gradient condition for the LSE
\begin{equation}
  \label{eq:lse_grad}
\begin{aligned}
   \sum_t\hA\hX_{t-1}\hB'\hB\hX_{t-1}'-\sum_t\hX_t\hB\hX_{t-1}'&=\hzero\\
    \sum_t\hB\hX_{t-1}'\hA'\hA\hX_{t-1}-\sum_t\hX_t'\hA\hX_{t-1}&=\hzero.
\end{aligned}
\end{equation}
The function $\sum_{t}\|\hX_t-\hA\hX_{t-1}\hB'\|_F^2$ is guaranteed to
have at least one global minimum, so solutions to \eqref{eq:lse_grad}
are guaranteed to exist. On the other hand, if $\hat\hA$ and $\hat\hB$
solve the equations in \eqref{eq:lse_grad},
so are
$\tilde\hA:=\hat\hA/c$ and $\tilde\hB:=\hat\hB\cdot c$, where $c$ is
any nonzero constant. We should regard them as the same solution
because they yield the same matrix product
$\hat\hA\hX_{t-1}\hat\hB'=\tilde\hA\hX_{t-1}\tilde\hB'$. Equivalently,
we say that $(\hat\hA,\hat\hB)$ and $(\tilde\hA,\tilde\hB)$ are the
same solution of \eqref{eq:lse_grad}, if
$\hat\hB\otimes\hat\hA=\tilde\hB\otimes\tilde\hA$. With this
convention, we argue that with probability one, the global minimum of
\eqref{eq:lse} is unique. For this purpose, we need the following
condition.

\noindent
({\bf Condition R:}) \
The innovations $\hE_t$ are independent and
identically distributed, and absolutely continuous with respect to
Lebesque measure.


If Condition ({\bf R}) is fulfilled, it holds that with probability
one, the solutions of \eqref{eq:lse_grad} have full ranks, and they
have no zero entries. Let us restrict our discussion to this event of
probability one. Without loss of generality, we fix the first entry of
$\hA$ at $a_{11}=1$. Let us use $\mathcal{Z}$ to denote the set of
entries of $\hA$ and $\hB$:
$$\mathcal{Z}:=\{a_{ij},b_{kl}:\;1\leq i,j\leq m, (i,j)\neq(1,1), 1\leq
k,l\leq n\}.$$ The matrix equations in \eqref{eq:lse_grad}
involves
$m^2+n^2$ individual equations, and each equation takes the form
$f(\mathcal{Z})=0$, where $f(\mathcal{Z})$ is a multivariate
polynomial in the polynomial ring $\C[\mathcal{Z}]$ over the complex
field $\C$. The collection $\h{V}$ of all solutions of
\eqref{eq:lse_grad}
is thus an affine variety in the space
$\C^{m^2+n^2-1}$. By computing a Groebner basis for the ideal
generated by the polynomials in \eqref{eq:lse_grad},
we see that $\hV$
is a finite set \citep[see for example, Theorem~6, page 251
of][]{cox:2015}. Equivalently, the equations in \eqref{eq:lse_grad}
have a finite number of solutions.

Now consider the uniqueness of the global minimum. Assume $\hA_1$ and
$\hA_2$ are two nonzero $m\times m$ matrices such that
$\hA_1\neq c\cdot\hA_2$ for any constant $c\in\R$. Define
$\h{\mathfrak{X}}_1$ and $\h{\mathfrak{X}}_2$ as in \eqref{eq:lse_bla}
with $\hA_1$ and $\hA_2$ respectively. We further define
$\h{\mathfrak{X}}(c)=c\h{\mathfrak{X}}_1+(1-c)\h{\mathfrak{X}}_2$. The
projection of $\h{\mathfrak{Y}}$ on the column space of
$\h{\mathfrak{X}}(c)$ is given by
$\h{\mathfrak{X}}(c)[\h{\mathfrak{X}}(c)'\h{\mathfrak{X}}(c)]^{-1}\h{\mathfrak{X}}(c)'\h{\mathfrak{Y}}$,
and its Frobenius norm 
\begin{equation}
  \label{eq:frob1}
  \mathrm{trace}\left\{\h{\mathfrak{Y}}'\h{\mathfrak{X}}(c)[\h{\mathfrak{X}}(c)'\h{\mathfrak{X}}(c)]^{-1}\h{\mathfrak{X}}(c)'\h{\mathfrak{Y}}\right\}
\end{equation}
is a rational function of $c\in\R$ with random
coefficients determined by $\hX_1,\hX_2,\ldots,\hX_T$. With
probability one, this rational function takes distinct
values at different local minima. Combining this fact and the
preceding argument, we see that with probability one, the least
squares problem \eqref{eq:lse} has an unique global minimum, and
finitely many local minima.

To solve \eqref{eq:lse}, we iteratively
update the two matrices $\hat\hA$ and $\hat\hB$,
by updating one of them in the least squares
\eqref{eq:lse}
while holding the other one fixed, starting with some initial $\hA$ and $\hB$.
By \eqref{eq:lse_grad}
the iteration of updating $\hB$ given $\hA$ is
\begin{equation*}
  \hB \leftarrow \left(\sum_{t}\hX_t'\hA\hX_{t-1}\right)\left(\sum_t\hX_t'\hA'\hA\hX_{t-1}\right)^{-1},
\end{equation*}
and similarly by \eqref{eq:lse_grad}
the iteration of updating $\hA$ given $\hB$ is
\begin{equation*}
  \hA \leftarrow \left(\sum_t \hX_t\hB\hX_{t-1}'\right)\left(\sum_t\hX_{t-1}\hB'\hB\hX_{t-1}'\right)^{-1}.
\end{equation*}
We denote these estimators by $\hat\hA_2$ and $\hat\hB_2$, with the
name {\it least squares estimators}, and the acronym LSE.

The iterative least squares may converge to a local minimum. In practice,
we suggest to use the PROJ estimators $\hat\hA_1$ and $\hat\hB_1$ as the
starting values of the iterations. On the other hand, by permuting the
entries of the corresponding matrices \citep{vanloan:2000}, the problem
\eqref{eq:lse} can be rewritten as a problem of best rank-one
approximation under a linear transform, which in turn can be viewed as
a generalized SVD
problem. 
The variable projection methods discussed in \cite{golub:1973} and
\cite{kaufman:1975} may also be applicable
here. 

\subsection{MLE under a structured covariance tensor}

When the covariance matrix of the error matrix $\hE_t$ assumes the
structure in \eqref{eq:ocov}, it can be utilized to improve the
efficiency of the estimators. The log likelihood under normality can
be written as
\begin{equation}
\label{eq:loglik}
  -m(T-1)\log|\Sigma_c| - n(T-1)\log|\Sigma_r| - \sum_{t}\tr\left(\Sigma_r^{-1}(\hX_t - \hA\hX_{t-1}\hB')\Sigma_c^{-1}(\hX_t - \hA\hX_{t-1}\hB')'\right).
\end{equation}
Four matrix parameters $\hA,\hB,\Sigma_r,\Sigma_c$ are involved in the
log likelihood function. The gradient condition at the MLE is given by
\begin{equation*}
  \begin{aligned}
    \hA\sum_t\hX_{t-1}\hB'\Sigma_c^{-1}\hB\hX_{t-1}'-\sum_t\hX_t\Sigma_c^{-1}\hB \hX_{t-1}' & =\hzero \\
    \hB\sum_t\hX_{t-1}'\hA'\Sigma_r^{-1}\hA\hX_{t-1}-\sum_t\hX_t'\Sigma_r^{-1}\hA \hX_{t-1} & =\hzero \\
    m(T-1)\Sigma_c-\sum_t(\hX_t - \hA\hX_{t-1}\hB')'\Sigma_r^{-1}(\hX_t - \hA\hX_{t-1}\hB') & = \hzero \\
    n(T-1)\Sigma_r-\sum_t(\hX_t - \hA\hX_{t-1}\hB')\Sigma_c^{-1}(\hX_t - \hA\hX_{t-1}\hB')' & = \hzero.
  \end{aligned}
\end{equation*}
To find the MLE, we iteratively update one, while
keeping the other three fixed. These iterations are given by
\begin{align*}
  \hA &\leftarrow \left(\sum_t\hX_t\Sigma_c^{-1}\hB \hX_{t-1}'\right)
        \left(\sum_t\hX_{t-1}\hB'\Sigma_c^{-1}\hB\hX_{t-1}'\right)^{-1}\\
  \hB &\leftarrow \left(\sum_t\hX_t'\Sigma_r^{-1}\hA \hX_{t-1}\right)
        \left(\sum_t\hX_{t-1}'\hA'\Sigma_r^{-1}\hA\hX_{t-1}\right)^{-1}\\
  \Sigma_c &\leftarrow \frac{\sum_t \hR_t'\Sigma_r^{-1}\hR_t}{m(T-1)},
             \mbox{ where } \hR_t = \hX_t - \hA\hX_{t-1}\hB'\\
  \Sigma_r & \leftarrow \frac{\sum_t \hR_t\Sigma_c^{-1}\hR_t'}{n(T-1)},
             \mbox{ where } \hR_t = \hX_t - \hA\hX_{t-1}\hB'
\end{align*}
The MLE for $\hA$ and $\hB$ under the covariance structure
\eqref{eq:ocov} will be denoted by $\hat\hA_3$ and $\hat\hB_3$, with
an acronym {MLEs}, where the ``s'' emphasizes the fact that it is the MLE under the special
structure \eqref{eq:ocov} of the error covariance matrix. 
Due to the unidentifiability 
of the pairs of $\hA, ~\hB$ and $\Sigma_c,~\Sigma_r$, 
to make sure the numerical computation is stable, after looping through 
$\hA, ~\hB, ~\Sigma_c$ and $\Sigma_r$ in each iteration,
we renormalize so that both $||\hA||_F=1$ and $||\Sigma_r||_F=1$.

\noindent
{\bf Remark: }
Note that the three estimators do not impose the causality condition $\rho(\hA)\rho(\hB)<1$ in the estimation procedure. Hence the resulting estimators may not necessarily satisfy the condition, even the underlying process is stationary and causal. For unitvariate ARMA modeling, a transformation of the estimator can be made to achieve causality, and the transformed model and the original one are equivalent under Gaussianity \citep[see for example,][Section~3.5]{brockwell:1991}. There is a similar result for VAR models (see for example \cite{hannan:1970}, Section~II.5). Unfortunately the approach does not work in general under the restricted form of MAR(1) model, because the autoregressive coefficient matrix of the equivalent causal VAR(1) model no longer has the form of a Kronecker product. The hope is that if the process is indeed causal, the consistencies of the estimators (see Section 4) guarantee that they will satisfy the causality condition with large probabilities. On the other hand, to retain a MAR(1) model with possibly non-causal coefficient matrices, non-causal vector autoregression may be considered \citep{davis2012noncausal,lanne2013noncausal}.
    
\section{Asymptotics, Efficiency and a Specification Test}
\label{sec:asy}

\subsection{Asymptotics and Efficiency}

Due to the identifiability issue regarding $\hA$ and $\hB$, we make
the convention that $\|\hA\|_F=1$, and the three estimators
$(\hat \hA_i,\hat\hB_i),\;1\leq i\leq 3$ are rescaled so that
$\|\hat\hA_i\|_F=1$. Since the Kronecker product $\hB\otimes\hA$ is
unique, we also state the asymptotic distributions of the
estimated Kronecker product $\hB\otimes\hA$, in addition to that of
$\hA$ and $\hB$.

We first present the central limit theorem for the projection
estimators $\hat\hA_1$ and $\hat\hB_1$. Following standard theory of
multivariate ARMA models \citep{hannan:1970,dunsmuir:1976}, the
conditions of Theorem~\ref{thm:clt1} guarantees that $\hat\Phi$
converges to a multivariate normal distribution:
\nocite{deistler:1978}
\begin{equation*}
  \sqrt{T}\mathrm{vec}(\hat\Phi-\hB\otimes\hA) \Rightarrow
  N(\hzero,\Gamma_0^{-1}\otimes\Sigma),
\end{equation*}
where $\Sigma$ is the covariance matrix of $\vect(\hE_t)$, and
$\Gamma_0$ is given in \eqref{eq:autocov}. Let
$\tilde\Phi={\cal G}(\hat\Phi)$ be the rearranged version of
$\hat\Phi$, and $\Xi_1$ be the asymptotic covariance matrix of
$\vect(\tilde\Phi)$. The matrix $\Xi_1$ is obtained by {rearranging}
the entries of $\Gamma_0^{-1}\otimes\Sigma$, and can be expressed
using permutation matrices and Kronecker products, but we omit the
explicit formula here. 
\begin{theorem}
  \label{thm:clt1}
  Consider model \eqref{eq:mar1}. Set $\halpha:=\vect(\hA)$,
  $\hbeta_1:=\vect(\hB)/\|\vect(\hB)\|$, and
  \begin{align*}
    \hV_0&:=
           \begin{pmatrix}
             \|\hB\|_F^{-1}[\hbeta_1'\otimes(\hI-\halpha\halpha')] \\
             \hI\otimes\halpha'
           \end{pmatrix},\\
    \hV_1&:=(\hbeta_1\hbeta_1')\otimes\hI + \hI\otimes(\halpha\halpha')
    - (\hbeta_1\hbeta_1')\otimes(\halpha\halpha').
  \end{align*}
  Note that both $\halpha$ and $\hbeta_1$ are unit vectors. Assume that $\hE_1,\ldots,\hE_T$ are iid with mean zero and finite
  second moments. Also assume the causality condition
  $\rho(\hA)\cdot\rho(\hB)<1$, and $\hA$, $\hB$ and $\Sigma$ are
  nonsingular. It holds that
  \begin{equation*}
    \sqrt{T}\begin{pmatrix}
      \vect(\hat\hA_1-\hA) \\
      \vect(\hat\hB_1-\hB)
    \end{pmatrix} \Rightarrow N(\hzero,\hV_0\Xi_1\hV_0'),
  \end{equation*}
  and
  \begin{equation*}
    \sqrt{T}\left[\vect(\hat\hB_1)\otimes\vect(\hat\hA_1)-\vect(\hB)\otimes\vect(\hA)\right] \Rightarrow
    N(\hzero,\hV_1\Xi_1\hV_1').
  \end{equation*}
\end{theorem}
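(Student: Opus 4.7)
The plan is to derive the asymptotic distributions as a straightforward consequence of the central limit theorem for the unrestricted VAR(1) least squares estimator $\hat\Phi$ combined with a first order perturbation analysis of the rank one SVD projection defined in \eqref{eq:nkp}. First I would invoke the standard asymptotic theory for stationary vector autoregressions \citep{hannan:1970,dunsmuir:1976}, which under the iid innovation and nonsingularity assumptions gives $\sqrt{T}\,\mathrm{vec}(\hat\Phi-\hB\otimes\hA)\Rightarrow N(0,\Gamma_0^{-1}\otimes\Sigma)$. Applying the linear rearrangement operator $\mathcal{G}$ produces $\sqrt{T}\,\mathrm{vec}(\tilde\Phi-\h{M})\Rightarrow N(0,\Xi_1)$, where $\h{M}:=\halpha(d_1^*\hbeta_1)'$ with $d_1^*=\|\hB\|_F$, and $\Xi_1$ is the stated permutation of $\Gamma_0^{-1}\otimes\Sigma$.

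The heart of the proof is a first order SVD perturbation. The target $\h{M}$ is rank one with a single nonzero singular value $d_1^*>0$, well separated from the zero singular values, and unit singular vectors $\halpha$ and $\hbeta_1$. Writing $\tilde\Phi=\h{M}+T^{-1/2}\hW+o_p(T^{-1/2})$ with $\mathrm{vec}(\hW)\sim N(0,\Xi_1)$, classical singular value/vector perturbation (with the sign convention $\hat u_1'\halpha>0$, which is available with probability tending to one) yields
\begin{equation*}
\vect(\hat\hA_1)-\halpha = T^{-1/2}d_1^{*-1}(\hI-\halpha\halpha')\hW\hbeta_1+o_p(T^{-1/2}),\qquad \hat d_1 - d_1^* = T^{-1/2}\halpha'\hW\hbeta_1+o_p(T^{-1/2}),
\end{equation*}
and $\hat v_1-\hbeta_1=T^{-1/2}d_1^{*-1}(\hI-\hbeta_1\hbeta_1')\hW'\halpha+o_p(T^{-1/2})$. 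Since $\vect(\hat\hB_1)=\hat d_1\hat v_1$, combining the last two expansions and using $\hW'\halpha=\hbeta_1(\halpha'\hW\hbeta_1)+(\hI-\hbeta_1\hbeta_1')\hW'\halpha$ collapses the two contributions into the single form $\vect(\hat\hB_1)-\vect(\hB)=T^{-1/2}\hW'\halpha+o_p(T^{-1/2})$.

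Translating to $\mathrm{vec}(\hW)$ via $\mathrm{vec}(CXD)=(D'\otimes C)\mathrm{vec}(X)$, the coefficients are precisely the two block rows of $\hV_0$: the identity $\mathrm{vec}\bigl((\hI-\halpha\halpha')\hW\hbeta_1\bigr)=[\hbeta_1'\otimes(\hI-\halpha\halpha')]\mathrm{vec}(\hW)$ gives the first row (including the $\|\hB\|_F^{-1}$ scaling from $d_1^{*-1}$), and $\hW'\halpha=(\hI\otimes\halpha')\mathrm{vec}(\hW)$ gives the second. Slutsky's theorem then yields the first claim with asymptotic covariance $\hV_0\Xi_1\hV_0'$. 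For the Kronecker product claim I would exploit the identity $\vect(\hB)\otimes\vect(\hA)=\mathrm{vec}(\h{M})$ and observe that $\vect(\hat\hA_1)\vect(\hat\hB_1)'$ is the Frobenius projection of $\tilde\Phi$ onto the variety of rank one matrices. To first order this equals the Frobenius projection of $\hW$ onto the tangent space at $\h{M}$, which is the linear span $\{\halpha\eta'+\xi\hbeta_1'\}$. Inclusion--exclusion on the two coordinate subspaces gives the projector $\mathcal{P}(\hW)=\halpha\halpha'\hW+\hW\hbeta_1\hbeta_1'-\halpha\halpha'\hW\hbeta_1\hbeta_1'$, whose vectorization is $\hV_1\mathrm{vec}(\hW)$ with $\hV_1$ exactly as stated in the theorem, producing the second CLT.

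The main obstacle is the careful justification of the SVD expansion under the normalization $\|\hA\|_F=1$ together with the sign ambiguity of singular vectors. Because $\h{M}$ has a simple nonzero singular value and $\tilde\Phi$ is consistent, the top singular value of $\tilde\Phi$ is simple with probability tending to one, so the sign of $\hat u_1$ can be pinned down unambiguously, after which Weyl's and Wedin's inequalities deliver the uniform $o_p(T^{-1/2})$ remainders and the delta method is applicable. The remaining steps are algebraic bookkeeping to verify that the rearrangement of $\Gamma_0^{-1}\otimes\Sigma$ into $\Xi_1$ commutes with all the Kronecker identities used above.
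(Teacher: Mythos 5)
Your proposal is correct and lands on exactly the first-order expansions that drive the paper's argument, but it reaches them by a visibly different route. The paper never invokes SVD perturbation theory: it writes down the stationarity (gradient) conditions of the nearest-Kronecker-product problem, substitutes $\tilde\Phi=\halpha\,\vect(\hB)'+(\tilde\Phi-\halpha\,\vect(\hB)')$, and linearizes those equations directly, using $\|\hat\halpha\|=\|\halpha\|=1$ to get $(\hat\halpha-\halpha)'\halpha=o_P(T^{-1/2})$ and then solving for $\hat\halpha-\halpha$ and $\vect(\hat\hB_1)-\vect(\hB)$; the representation of $\hV_1$ is obtained by multiplying the two expansions together rather than by a geometric argument. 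Your version instead cites the classical Weyl/Wedin expansions for the leading singular triplet of a perturbed rank-one matrix and identifies $\hV_1$ as the orthogonal projector onto the tangent space of the rank-one variety at $\halpha\,\vect(\hB)'$. The two derivations are equivalent --- the SVD is the solution of the NKP problem, so its perturbation expansion and the linearized gradient conditions encode the same derivative --- and your algebra checks out: collapsing $\hat d_1\hat v_1-\|\hB\|_F\hbeta_1$ into $T^{-1/2}\hW'\halpha$ reproduces the paper's second block row $\hI\otimes\halpha'$, and the inclusion--exclusion projector $\halpha\halpha'\hW+\hW\hbeta_1\hbeta_1'-\halpha\halpha'\hW\hbeta_1\hbeta_1'$ vectorizes to exactly the stated $\hV_1$. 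What your route buys is an off-the-shelf justification of the remainder terms (the simple-singular-value gap plus Wedin gives the $o_p(T^{-1/2})$ control that the paper leaves implicit) and a cleaner interpretation of $\hV_1$; what the paper's route buys is self-containedness, since it needs nothing beyond the optimality conditions it has already written down for the estimator. The only point to be careful about, which you do flag, is the sign convention for $\hat u_1$; with $\hat u_1'\halpha>0$ on an event of probability tending to one, both arguments go through.
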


\noindent The proof of the theorem is presented in Appendix.

Note that although the projection estimator does not utilize the MAR(1) model structure, 
Theorem 2 requires that the observed matrix time series follows model (\ref{eq:mar1}).


Now we consider the least squares estimators
$(\hat\hA_2,\hat\hB_2)$. Let $\halpha:=\vect(\hA)$,
$\hbeta:=\vect(\hB')$, 
and $\hgamma:=(\halpha',\hzero')'$ be a vector
in $\R^{m^2+n^2}$. Note that $\hbeta$ should not be confused with
$\hbeta_1$ defined in Theorem~\ref{thm:clt1}. In Theorem~\ref{thm:clt1} we present the result for $\vect(\hB)$, of which $\hbeta_1$ is the normalized version; while here in Theorem~\ref{thm:lse}, we give CLT for $\vect(\hB')$, which is denoted by $\hbeta$. Recall that $\Sigma$ is the covariance matrix of $\vect(\hE_t)$. We have the following result for $\hat\hA_2$ and $\hat\hB_2$.
\begin{theorem}
  \label{thm:lse}
  Consider model \eqref{eq:mar1}. Define $\hW_t':=[(\hB\hX_t')\otimes\hI:\hI\otimes(\hA\hX_t)]$,
  and $\hH:=\E(\hW_t\hW_t')+\hgamma\hgamma'$. Let
  $\Xi_2 := \hH^{-1}\E(\hW_t\Sigma\hW_t')\hH^{-1}$, and
  $\hV:=[\hbeta\otimes\hI,\hI\otimes\halpha]$. In addition to the
  conditions of Theorem~\ref{thm:clt1}, we also assume (R). It holds
  that
  \begin{equation*}
    \sqrt{T}\begin{pmatrix}
      \vect(\hat\hA_2-\hA) \\
      \vect(\hat\hB_2'-\hB')
    \end{pmatrix}
    \Rightarrow N(\hzero, \Xi_2);
  \end{equation*}
  and equivalently,
  \begin{equation*}
    \sqrt{T}\left[\vect(\hat{\hB}_2')\otimes\vect(\hat{\hA}_2)-\vect(\hB')\otimes\vect(\hA)\right] \Rightarrow
    N(0,\hV\Xi_2\hV').
  \end{equation*}
\end{theorem}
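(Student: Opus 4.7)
The plan is to treat the LSE as an M-estimator, linearize its first-order condition around the truth, and handle the scale unidentifiability through the normalization $\|\hA\|_F=1$. Using $\vect(\hA\hX_{t-1}\hB')=(\hB\hX_{t-1}'\otimes\hI)\vect(\hA)=(\hI\otimes\hA\hX_{t-1})\vect(\hB')$, the Jacobian of the vectorized conditional mean with respect to the stacked parameter $\htheta:=(\vect(\hA)',\vect(\hB')')'$ is exactly $\hW_{t-1}'$, with $\hW_{t-1}$ taking the form stated in the theorem (the subscript $t-1$ versus $t$ is immaterial under stationarity). The LSE equations \eqref{eq:lse_grad} are thus equivalent to $\sum_t \hW_{t-1}\vect(\hX_t-\hA\hX_{t-1}\hB')=\hzero$, whose residual at the true $\htheta_0$ is $\vect(\hE_t)$.

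First I would establish a CLT for the score. Since $\hE_t$ is independent of $\mathcal{F}_{t-1}$ and $\{\hX_t\}$ is stationary and ergodic (Proposition~\ref{thm:causal}), $\{\hW_{t-1}\vect(\hE_t)\}$ is a stationary ergodic martingale difference sequence with conditional covariance $\hW_{t-1}\Sigma\hW_{t-1}'$, so the martingale CLT yields $T^{-1/2}\sum_t \hW_{t-1}\vect(\hE_t)\Rightarrow N(\hzero,\E[\hW_t\Sigma\hW_t'])$. A first-order Taylor expansion of the FOC around $\htheta_0$ then produces
\begin{equation*}
T^{-1/2}\sum_t \hW_{t-1}\vect(\hE_t)=\Bigl(T^{-1}\sum_t\hW_{t-1}\hW_{t-1}'\Bigr)\sqrt{T}(\hat\htheta-\htheta_0)+o_p(1),
\end{equation*}
the cross-derivative remainder being negligible because $\partial\hW_{t-1}/\partial\htheta$ evaluated at $\htheta_0$ is multiplied by $\vect(\hE_t)$, which is mean zero and independent of $\hX_{t-1}$; the ergodic theorem gives $T^{-1}\sum_t\hW_{t-1}\hW_{t-1}'\to\E[\hW_t\hW_t']$.

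The crucial step is resolving the singular Hessian. Direct computation shows $\hW_{t-1}'(\vect(\hA)',-\vect(\hB')')' = \vect(\hA\hX_{t-1}\hB')-\vect(\hA\hX_{t-1}\hB')=\hzero$, so the scale-unidentifiability direction $u=(\halpha',-\hbeta')'$ lies in the kernel of $\E[\hW_t\hW_t']$; conversely, if $\hW_t'v=\hzero$ a.s.\ with $v=(\vect(V_1)',\vect(V_2)')'$, then $V_1\hX_t\hB'+\hA\hX_t V_2=\hzero$ for $\hX_t$ ranging over an open set (condition (R) gives it a density), forcing $\hB\otimes V_1 = -V_2'\otimes\hA$, which by uniqueness of Kronecker decomposition places $v$ on the line spanned by $u$. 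This direction is exactly killed by the normalization: $\|\hat\hA_2\|_F=\|\hA\|_F=1$ forces $\halpha'(\vect(\hat\hA_2)-\vect(\hA))=-\tfrac12\|\vect(\hat\hA_2)-\vect(\hA)\|^2=o_p(T^{-1/2})$, i.e.\ $\hgamma'(\hat\htheta-\htheta_0)=o_p(T^{-1/2})$. Adding $\hgamma\hgamma'(\hat\htheta-\htheta_0)$ to both sides of the linearized FOC thus introduces only an $o_p(1)$ perturbation (after $\sqrt{T}$-scaling) but replaces the singular $\E[\hW_t\hW_t']$ with $\hH$, which is nonsingular since any $v\in\ker\hH$ must satisfy both $v\propto u$ and $\hgamma'v=0$, yielding $v=\hzero$. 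Inversion produces $\sqrt{T}(\hat\htheta-\htheta_0)\Rightarrow N(\hzero,\Xi_2)$, and the Kronecker-product statement then follows by the delta method applied to $g(\htheta)=\vect(\hB')\otimes\vect(\hA)$, whose Jacobian $\partial g/\partial\htheta'=[\hbeta\otimes\hI,\hI\otimes\halpha]$ equals $\hV$.

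The main obstacle will be this third step, specifically justifying the $\hgamma\hgamma'$ augmentation as the formally correct substitute for the normalization constraint and controlling the Taylor remainder uniformly over a shrinking neighborhood of $\htheta_0$. A secondary issue is consistency of $\hat\htheta$, which the theorem implicitly requires: this should follow from uniform convergence of the objective on compact sets together with the identifiability-up-to-sign analysis from Section~\ref{sec:model}, with condition~(R) ensuring a unique global minimum with probability one.
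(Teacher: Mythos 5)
Your proposal follows essentially the same route as the paper's proof: linearize the gradient condition \eqref{eq:lse_grad} around the truth, apply the ergodic theorem to $T^{-1}\sum_t\hW_{t-1}\hW_{t-1}'$ and the martingale CLT to $\sum_t\hW_{t-1}\vect(\hE_t)$, repair the rank-one degeneracy of $\E(\hW_t\hW_t')$ in the direction $(\halpha',-\hbeta')'$ via the normalization $\|\hat\hA_2\|_F=1$ and the $\hgamma\hgamma'$ augmentation, and finish with the delta method for the Kronecker product. The only substantive difference is that the paper's preliminary step (its Lemma on the objective outside a $c_T/\sqrt{T}$-ball, exploiting convexity in $\bar\Phi$) delivers the $O_P(T^{-1/2})$ rate directly rather than mere consistency, which is what licenses the $o_P(\sqrt{T})$ remainder you flag as the remaining obstacle.
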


\noindent The proof of the theorem is in Appendix.

With the additional assumption \eqref{eq:ocov} on the covariance
structure of $\hE_t$, we have a similar result. Recall that
$\hW_t'=[(\hB\hX_t')\otimes\hI,\hI\otimes(\hA\hX_t)]$, and
$\hgamma=(\halpha',\hzero')'$. Let
$\tilde\hH:=\E(\hW_t\Sigma^{-1}\hW_t')+\hgamma\hgamma'$. 
Define
$\Xi_3 :=
\tilde\hH^{-1}\E(\hW_t\Sigma^{-1}\hW_t')\tilde\hH^{-1}$. Note that the
covariance matrix $\Sigma$ takes the form
$\Sigma=\Sigma_c\otimes\Sigma_r$. The MLEs $\hat\hA_3$ and $\hat\hB_3$
under the assumption \eqref{eq:ocov} have the following joint limiting
distribution.
\begin{theorem}
  \label{thm:mle}
  Under the same conditions of Theorem~\ref{thm:lse}, and the additional assumption \eqref{eq:ocov}, it holds
  that
  \begin{equation*}
    \sqrt{T}\begin{pmatrix}
      \vect(\hat\hA_3-\hA) \\
      \vect(\hat\hB_3'-\hB')
    \end{pmatrix}
    \Rightarrow N(\hzero, \Xi_3);
  \end{equation*}
  and equivalently,
  \begin{equation*}
    \sqrt{T}\left[\vect(\hat{\hB}_3')\otimes\vect(\hat{\hA}_3)-\vect(\hB')\otimes\vect(\hA)\right] \Rightarrow
    N(0,\hV\Xi_3\hV').
  \end{equation*}
\end{theorem}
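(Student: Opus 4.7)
The plan is to follow the strategy of Theorem~\ref{thm:lse}, with two modifications: replace the Euclidean norm by the $\Sigma^{-1}$-weighted quadratic form coming from the Gaussian log-likelihood \eqref{eq:loglik}, and profile out the nuisance covariance parameters $\Sigma_r,\Sigma_c$ in the course of the argument.

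First I would show that replacing the true $\Sigma_r,\Sigma_c$ by their closed-form iterative updates does not affect the first-order asymptotics of $(\hat\hA_3,\hat\hB_3)$. Direct differentiation of \eqref{eq:loglik} shows that the mixed second derivatives with respect to the mean parameters $(\hA,\hB)$ and the covariance parameters $(\Sigma_r,\Sigma_c)$ at the truth are linear in $\vect(\hE_t)$, hence have zero expectation because $\E\hE_t=\hzero$ and $\hE_t$ is independent of $\hX_{t-1}$. Consequently the Fisher information is asymptotically block-diagonal, and the standard profile-likelihood argument (together with consistency of the closed-form updates $\hat\Sigma_r,\hat\Sigma_c$) yields
$$\sqrt T\bigl(\hat\htheta_3 - \htheta_0\bigr) = \sqrt T\bigl(\hat\htheta_3^{\ast} - \htheta_0\bigr)+o_p(1),$$
where $\hat\htheta_3^{\ast}$ denotes the estimator computed with $\Sigma_r,\Sigma_c$ known.

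Second, with $\Sigma=\Sigma_c\otimes\Sigma_r$ treated as known, the Gaussian log-likelihood reduces to $-\tfrac12\sum_t\vect(\hR_t)'\Sigma^{-1}\vect(\hR_t)$ with $\hR_t:=\hX_t-\hA\hX_{t-1}\hB'$. The Kronecker identities
$\vect(\hA\hX_{t-1}\hB')=((\hB\hX_{t-1}')\otimes\hI)\halpha=(\hI\otimes(\hA\hX_{t-1}))\hbeta$
show that the score in $\htheta=(\halpha',\hbeta')'$ is exactly $\sum_t\hW_{t-1}\Sigma^{-1}\vect(\hR_t(\htheta))$. Linearising $\vect(\hR_t(\htheta))\approx\vect(\hE_t)-\hW_{t-1}'(\htheta-\htheta_0)$ and applying standard M-estimator algebra gives
\begin{equation*}
\sqrt T(\hat\htheta_3^{\ast}-\htheta_0) \approx \Bigl(\frac{1}{T}\sum_t \hW_{t-1}\Sigma^{-1}\hW_{t-1}'\Bigr)^{-1}\frac{1}{\sqrt T}\sum_t\hW_{t-1}\Sigma^{-1}\vect(\hE_t).
\end{equation*}
Stationarity and ergodicity of $\{\hX_t\}$, guaranteed by the causality condition in Proposition~\ref{thm:causal}, yield the law of large numbers $T^{-1}\sum_t\hW_{t-1}\Sigma^{-1}\hW_{t-1}'\to\E(\hW_t\Sigma^{-1}\hW_t')$, while a martingale central limit theorem with filtration $\sigma(\hX_s:s\le t)$ gives asymptotic normality of the score with covariance $\E(\hW_t\Sigma^{-1}\hW_t')$ (using $\Cov(\Sigma^{-1}\vect(\hE_t))=\Sigma^{-1}$). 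The identifiability normalisation $\|\hA\|_F=1$ enforced at each iteration contributes the rank-one correction $\hgamma\hgamma'$ to the empirical Hessian exactly as in Theorem~\ref{thm:lse}, producing the effective denominator $\tilde\hH=\E(\hW_t\Sigma^{-1}\hW_t')+\hgamma\hgamma'$ and the sandwich form $\Xi_3=\tilde\hH^{-1}\E(\hW_t\Sigma^{-1}\hW_t')\tilde\hH^{-1}$. The second display in the theorem then follows by the delta method applied to the smooth map $(\halpha,\hbeta)\mapsto\vect(\hB')\otimes\vect(\hA)$, whose Jacobian with respect to $(\halpha',\hbeta')'$ is $\hV=[\hbeta\otimes\hI,\hI\otimes\halpha]$.

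The main technical obstacle is the profile-likelihood decoupling in the first step: establishing that plugging in the iterative updates $\hat\Sigma_r,\hat\Sigma_c$ perturbs the score for $\htheta$ only by $o_p(T^{-1/2})$ uniformly in a $T^{-1/2}$-neighbourhood of the truth. This requires both consistency of the covariance estimators and control over higher-order cross terms, and relies on the Kronecker product structure $\Sigma=\Sigma_c\otimes\Sigma_r$ together with the normalisation $\|\Sigma_r\|_F=1$ imposed to resolve the identifiability within the Kronecker factorisation.
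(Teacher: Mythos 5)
Your overall architecture is sound, and the second half of your argument --- the score $\sum_t\hW_{t-1}\Sigma^{-1}\vect(\hR_t)$, the linearization, the ergodic law of large numbers, the martingale CLT, the rank-one correction $\hgamma\hgamma'$ coming from the normalization $\|\hA\|_F=1$, and the delta method via $\hV$ --- is exactly the computation the paper has in mind when it declares the remainder of the proof ``very similar'' to that of Theorem~\ref{thm:lse}. Your use of information orthogonality (the mixed mean/covariance blocks of the expected Hessian vanish because they are linear in $\vect(\hE_t)$) is a clean way to see why estimating $\Sigma_r,\Sigma_c$ costs nothing at first order, and it survives the quasi-likelihood setting (errors need not be Gaussian) because only the Hessian block, not the score variance, needs to be block-diagonal.

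The gap is precisely the step you defer to your final paragraph as ``the main technical obstacle'': consistency of $\hat\Sigma_r,\hat\Sigma_c$ and $\sqrt{T}$-consistency of $(\hat\hA_3,\hat\hB_3)$. This is not a routine add-on --- it is essentially the entire content of the paper's printed proof of Theorem~\ref{thm:mle}. The phrase ``consistency of the closed-form iterative updates'' cannot be invoked as a known fact: those updates are coordinate-descent steps whose fixed point is the joint maximizer of a nonconvex objective over a constrained set (Kronecker-structured precision matrix, unit Frobenius norms on $\hA$ and $\Sigma_r$), and nothing in your argument rules out that this maximizer stays away from the truth. The paper handles this by working with $h(\Omega,\hS)=-\log|\Omega|+\tr(\Omega\hS)$: convexity of $h(\cdot,\check\hS)$ in $\Omega$ over the positive-definite cone, a second-order Taylor expansion, and the almost-sure convergence $\check\hS\rightarrow\Sigma$ give consistency of the precision estimator (hence of $\hat\Sigma_r,\hat\Sigma_c$); then the decomposition $\hS(\bar\Phi)=\check\hS+(\bar\Phi-\check\Phi)\hat\Gamma_0(\bar\Phi-\check\Phi)'$ and the lower bound $h[\bar\Omega,\hS(\bar\Phi)]\geq h(\bar\Omega,\check\hS)+\lambda_{\min}(\bar\Omega)\,\lambda_{\min}(\hat\Gamma_0)\,\|\bar\Phi-\check\Phi\|_F^2$ force the profiled criterion to exceed its value at the truth uniformly over $\sqrt{T}\|\bar\Phi-\Phi\|_F\geq c_T$, yielding $\hat\hB_3\otimes\hat\hA_3=\hB\otimes\hA+O_P(T^{-1/2})$. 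Without an argument of this kind, the Taylor expansion of the score around the truth that drives your sandwich formula is not justified, so the proposal as written is incomplete at exactly the point where the paper does its work.
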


\noindent The proof of the theorem is in Appendix.

We remark that in these three versions of the central limit theorem,
there may be zero diagonal entries in the asymptotic covariance
matrix. For example, in Theorem~\ref{thm:lse}, there may be a zero on
the diagonal of the matrix $\hV\Xi_2\hV'$. It happens when the
corresponding true values of the entries $a_{i_1j_1}$ and $b_{i_2j_2}$
are both zero; and in this situation, the product estimator
$\hat a_{i_1j_1}\hat b_{i_2j_2}$ has a convergence rate of $1/T$
instead of $1/\sqrt{T}$.

We now compare the efficiencies of the LSE and the MLEs, when the
covariance matrix of $\vect(\hE_t)$ has the Kronecker product
structure \eqref{eq:ocov}. In Theorems~\ref{thm:lse} and
\ref{thm:mle}, both asymptotic covariance matrices take the
form $\hV\Xi_i\hV',\;i=2,3$. The following corollary asserts that the
MLEs is asymptotically more efficient under the structured covariance
matrix \eqref{eq:ocov}.
\begin{corollary}
  \label{thm:eff}
  Consider model \eqref{eq:mar1}, and assume the same conditions of Theorem~\ref{thm:mle}, 
  It holds that
  \begin{equation*}
    \Xi_2 \geq \Xi_3.
  \end{equation*}
\end{corollary}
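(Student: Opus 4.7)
My plan is to exploit the scaling unidentifiability $(\hA,\hB)\mapsto(c\hA,\hB/c)$ to produce a pivotal identity, and then to carry out a Gauss--Markov style coupling of the two asymptotic distributions on a common probability space so that $\Xi_2-\Xi_3$ emerges as a bona fide covariance matrix.

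The first step is to write the two limits in coupled linear form. From the proofs of Theorems~\ref{thm:lse} and~\ref{thm:mle} the linearization of the (constrained) estimating equations gives
\[
  \sqrt{T}\begin{pmatrix}\vect(\hat\hA_2-\hA)\\ \vect(\hat\hB_2'-\hB')\end{pmatrix}\Rightarrow \h{v}_2:=\hH^{-1}\h{g}_L,\qquad \sqrt{T}\begin{pmatrix}\vect(\hat\hA_3-\hA)\\ \vect(\hat\hB_3'-\hB')\end{pmatrix}\Rightarrow \h{v}_3:=\tilde{\hH}^{-1}\h{g}_M,
\]
where $\h{g}_L$ and $\h{g}_M$ are the joint weak limits of $T^{-1/2}\sum_t\hW_{t-1}\vect(\hE_t)$ and $T^{-1/2}\sum_t\hW_{t-1}\Sigma^{-1}\vect(\hE_t)$, both driven by the same innovation sequence. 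Using $\E[\vect(\hE_t)\vect(\hE_t)']=\Sigma$ and $\Sigma\Sigma^{-1}=\hI$, the multivariate martingale CLT yields $\Cov(\h{g}_L)=\E(\hW_t\Sigma\hW_t')$, $\Cov(\h{g}_M)=\E(\hW_t\Sigma^{-1}\hW_t')$, and the crucial cross-covariance $\Cov(\h{g}_L,\h{g}_M)=\E(\hW_t\Sigma\Sigma^{-1}\hW_t')=\E(\hW_t\hW_t')$.

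The crux is the identity $\hH^{-1}\hgamma=\tilde{\hH}^{-1}\hgamma=\h{\delta}$, where $\h{\delta}:=(\halpha',-\hbeta')'$ is the infinitesimal generator of the scaling symmetry. Using $\vect(ABC)=(C'\otimes A)\vect(B)$, one computes
\[
  \hW_t'\h{\delta}=[(\hB\hX_t')\otimes\hI]\halpha-[\hI\otimes(\hA\hX_t)]\hbeta=\vect(\hA\hX_t\hB')-\vect(\hA\hX_t\hB')=\hzero,
\]
so $\E(\hW_t\hW_t')\h{\delta}=\hzero=\E(\hW_t\Sigma^{-1}\hW_t')\h{\delta}$. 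Combined with $\hgamma'\h{\delta}=\halpha'\halpha=1$, this yields $\hH\h{\delta}=\tilde{\hH}\h{\delta}=\hgamma$, whence the identity by inversion.

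Using $\hH^{-1}\E(\hW_t\hW_t')=\hI-\hH^{-1}\hgamma\hgamma'$ and its analogue $\tilde{\hH}^{-1}\E(\hW_t\Sigma^{-1}\hW_t')=\hI-\tilde{\hH}^{-1}\hgamma\hgamma'$, the preceding identity produces
\[
  \Cov(\h{v}_2-\h{v}_3,\h{v}_3)=\hH^{-1}\E(\hW_t\hW_t')\tilde{\hH}^{-1}-\tilde{\hH}^{-1}\E(\hW_t\Sigma^{-1}\hW_t')\tilde{\hH}^{-1}=\bigl(\tilde{\hH}^{-1}\hgamma-\hH^{-1}\hgamma\bigr)\hgamma'\tilde{\hH}^{-1}=\hzero.
\]
Therefore $\Xi_2=\Cov(\h{v}_2)=\Cov(\h{v}_3)+\Cov(\h{v}_2-\h{v}_3)=\Xi_3+\Cov(\h{v}_2-\h{v}_3)\geq\Xi_3$, as desired. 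The principal obstacle is locating the identity $\hH^{-1}\hgamma=\tilde{\hH}^{-1}\hgamma=\h{\delta}$; once one recognizes that the scaling symmetry forces $\h{\delta}$ into the kernel of $\hW_t'$ and hence into the null spaces of both $\E(\hW_t\hW_t')$ and $\E(\hW_t\Sigma^{-1}\hW_t')$, the rest reduces to routine linear algebra.
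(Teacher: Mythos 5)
Your proof is correct, but it takes a genuinely different route from the paper's. The paper works directly with the two sandwich formulas: it diagonalizes $\Sigma=\hQ\,\diag\{\lambda_1,\ldots,\lambda_{mn}\}\hQ'$, writes $\E(\hW_t\Sigma^p\hW_t')=\sum_i\lambda_i^p\Theta_i$ for $p=-1,0,1$ with $\Theta_i=\E(\hu_i\hu_i')$ and $\hu_i$ the columns of $\hW_t\hQ$, and reduces $\Xi_2\geq\Xi_3$ to the positive semi-definiteness of $\sum_i\lambda_i^{-1}\Theta_i-\left(\sum_i\Theta_i\right)\left(\sum_i\lambda_i\Theta_i\right)^{-1}\left(\sum_i\Theta_i\right)$, which it obtains as the Schur complement of the positive semi-definite block matrix $\sum_i\left(\begin{smallmatrix}\lambda_i^{-1}\Theta_i&\Theta_i\\ \Theta_i&\lambda_i\Theta_i\end{smallmatrix}\right)$. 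You instead run a Gauss--Markov/H\'ajek-projection argument: represent the two limits as $\h{v}_2=\hH^{-1}\h{g}_L$ and $\h{v}_3=\tilde\hH^{-1}\h{g}_M$ with jointly Gaussian scores whose cross-covariance is $\E(\hW_t\hW_t')$, and show $\Cov(\h{v}_2-\h{v}_3,\h{v}_3)=\hzero$ via the identity $\hH^{-1}\hgamma=\tilde\hH^{-1}\hgamma=(\halpha',-\hbeta')'$, which rests on $\hW_t'(\halpha',-\hbeta')'=\hzero$ and $\hgamma'(\halpha',-\hbeta')'=1$; I have checked these computations and they are right. Each approach buys something: the paper's is self-contained matrix algebra that never needs the joint limit of the two score processes, while yours makes the statistical content visible (the LSE equals the MLEs plus asymptotically uncorrelated noise) and exhibits the gap explicitly as $\Xi_2-\Xi_3=\Cov(\h{v}_2-\h{v}_3)$. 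Your treatment of the rank-deficiency correction $\hgamma\hgamma'$ is also the more careful one: the paper's proof asserts $\hW_t'\hgamma=\hzero$ in order to conclude $\Theta_i\hgamma=\hzero$, but the vector annihilated by $\hW_t'$ is $(\halpha',-\hbeta')'$ rather than $\hgamma=(\halpha',\hzero')'$, a distinction your computation of $\hH^{-1}\hgamma$ handles exactly.
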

\noindent The proof of the Corollary is in Appendix.

Here the matrix relationship $\geq$ means that the difference of the
two matrices is positive semi-definite. Consequently, we see that when
the covariance structure is correctly specified by \eqref{eq:ocov},
the MLEs $\hat\hA_3$ and $\hat\hB_3$ are more efficient than the LSE
$\hat\hA_2$ and $\hat\hB_2$ asymptotically. A comparison of the
efficiencies of the projection estimators and least squares estimators
can also be made, where the least squares estimators are more
efficient. However, we skip this result here, because in practice we
suggest to use either LSE or MLEs, and only use PROJ as initial values
for the other two estimation methods.

\subsection{A Specification Test}

To assess the adequacy of the MAR(1) for a given dataset, it is natural to run some diagnostics based on the residuals. Since the MAR(1) model can be viewed as a special case of the VAR(1) model, standard diagnostics can be applied. Autocorrelation and cross correlation plots are useful to visualize the whiteness of the residual matrices. Portmanteau tests \citep{hosking:1980,hosking:1981,li:1981,poskitt:1982}, Lagrange multiplier test \citep{hosking:1981lm}, and the likelihood ratio test \citep{tiao:1981} can all be applied to test for serial correlations among the residual matrices. 

On the other hand, the fact that the MAR(1) model is a VAR(1) of a special form also makes it interesting to compare MAR(1) with the unrestricted VAR(1), and to examine whether the special form \eqref{eq:ovar1} is supported by the data. We propose a specification test based on the projection estimators $\hat\hA_1,\hat\hB_1$. We first state a corollary of Theorem~\ref{thm:clt1}, which will motivate the test statistic. The proof will be deferred to Appendix. Let $\h{M}^+$ be the Moore-Penrose inverse of a matrix $\h{M}$. Recall that in Theorem~\ref{thm:clt1}, we define $\halpha:=\vect(\hA)$ and $\hbeta_1:=\vect(\hB)/\|B\|_F$. 
Define the orthogonal projection matrix $\hP:=(\hI-\hbeta_1\hbeta_1')\otimes(\hI-\halpha\halpha')$. Note that $\hP=\hI-\hV_1$, where $\hV_1$ is defined in Theorem~\ref{thm:clt1}. 

\begin{corollary}
\label{thm:test}
  Assume the same conditions, and adopt the same notations of Theorem~\ref{thm:clt1}. Let
  \begin{equation*}
      \hat\hD:=\left[\tilde\Phi-\vect(\hat\hA_1)\vect(\hat\hB_1)'\right]
  \end{equation*}
  It holds that
  \begin{equation*}
      T\cdot\vect(\hat\hD)'\;\left(\hP\,\Xi_1\,\hP\right)^+\;\vect(\hat\hD)\Rightarrow\chi^2_{(m^2-1)(n^2-1)}.
  \end{equation*}
\end{corollary}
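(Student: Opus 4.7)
The plan is to show that $\sqrt{T}\vect(\hat\hD)$ is asymptotically Gaussian with a degenerate covariance of rank $(m^2-1)(n^2-1)$, and then invoke the standard quadratic-form identity for Moore--Penrose inverses to obtain the $\chi^2$ limit. Under the MAR(1) assumption, the true rearranged coefficient matrix is $\h{M}_0:=\vect(\hA)\vect(\hB)'$, a rank-one matrix with unique nonzero singular value $\sigma_1=\|\vect(\hB)\|$, unit left singular vector $\halpha=\vect(\hA)$, and unit right singular vector $\hbeta_1=\vect(\hB)/\|\vect(\hB)\|$. As used in Theorem~\ref{thm:clt1}, the random perturbation $\h{\Delta}_T:=\tilde\Phi-\h{M}_0$ satisfies $\sqrt{T}\vect(\h{\Delta}_T)\Rightarrow N(\hzero,\Xi_1)$, so $\|\h{\Delta}_T\|_F=O_p(T^{-1/2})$.

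The key technical step is a first-order perturbation expansion of the best rank-one approximation $\hat{\h{M}}_1:=\vect(\hat\hA_1)\vect(\hat\hB_1)'$ of $\tilde\Phi=\h{M}_0+\h{\Delta}_T$. Because $\sigma_1$ is a simple, strictly positive singular value of $\h{M}_0$, classical SVD perturbation theory applied to the top singular triple yields the expansions $\hat\sigma_1=\sigma_1+\halpha'\h{\Delta}_T\hbeta_1+O(\|\h{\Delta}_T\|_F^2)$, $\hat\halpha=\halpha+\sigma_1^{-1}(\hI-\halpha\halpha')\h{\Delta}_T\hbeta_1+O(\|\h{\Delta}_T\|_F^2)$ and $\hat\hbeta_1=\hbeta_1+\sigma_1^{-1}(\hI-\hbeta_1\hbeta_1')\h{\Delta}_T'\halpha+O(\|\h{\Delta}_T\|_F^2)$. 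Multiplying these and collecting linear terms gives
\[
  \hat\hD=\tilde\Phi-\hat{\h{M}}_1=(\hI-\halpha \halpha')\,\h{\Delta}_T\,(\hI-\hbeta_1 \hbeta_1')+O_p(T^{-1}),
\]
and vectorizing yields $\vect(\hat\hD)=\hP\,\vect(\h{\Delta}_T)+O_p(T^{-1})$ with $\hP$ exactly as in the statement. Geometrically, the best rank-one projection is differentiable at $\h{M}_0$ precisely because $\sigma_1>0$ is simple, and its derivative at $\h{M}_0$ is the orthogonal projection $\hI-\hP=\hV_1$ onto the tangent space to the rank-one variety; hence $\hat\hD$ picks out the normal component to first order.

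Combining the expansion with the CLT via Slutsky's theorem and the continuous mapping theorem yields $\sqrt{T}\vect(\hat\hD)\Rightarrow N(\hzero,\hP\Xi_1\hP)$, using that $\hP$ is symmetric and idempotent. Invoking the standard identity $\h{z}'\h{S}^+\h{z}\sim\chi^2_{\rk(\h{S})}$ for $\h{z}\sim N(\hzero,\h{S})$ together with continuous mapping, $T\cdot\vect(\hat\hD)'(\hP\Xi_1\hP)^+\vect(\hat\hD)$ converges to $\chi^2_r$ with $r=\rk(\hP\Xi_1\hP)$. Since $\Xi_1$ is a rearrangement of the positive definite matrix $\Gamma_0^{-1}\otimes\Sigma$ it is itself positive definite, so $r=\rk(\hP)$. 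The Kronecker factorization then gives $\rk(\hP)=\rk(\hI-\hbeta_1\hbeta_1')\cdot\rk(\hI-\halpha\halpha')=(n^2-1)(m^2-1)$, the claimed degrees of freedom.

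The main obstacle is the perturbation expansion itself, because the SVD of the rank-one matrix $\h{M}_0$ is not unique beyond its top singular pair and the subleading singular values of $\h{M}_0+\h{\Delta}_T$ are of the same order as $\|\h{\Delta}_T\|_F$ rather than being separated from zero. The expansion must therefore be phrased entirely in terms of the well-defined rank-one projections $\halpha\halpha'$ and $\hbeta_1\hbeta_1'$ onto the top singular subspaces, not in terms of individual complementary singular vectors. A cleaner alternative, which avoids this non-uniqueness entirely, is to apply the implicit function theorem to the first-order optimality conditions characterizing $(\hat\sigma_1,\hat\halpha,\hat\hbeta_1)$ as the minimizer of $\|\tilde\Phi-\sigma\h{u}\h{v}'\|_F^2$ over unit $\h{u},\h{v}$ and positive $\sigma$; the required non-degeneracy of the Jacobian at $(\sigma_1,\halpha,\hbeta_1)$ is exactly the condition that $\sigma_1$ is the unique positive singular value of $\h{M}_0$.
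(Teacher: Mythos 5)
Your proposal is correct and follows essentially the same route as the paper: the linearization $\hat\hD=(\hI-\halpha\halpha')(\tilde\Phi-\halpha\hubeta')(\hI-\hbeta_1\hbeta_1')+o_P(T^{-1/2})$ that you obtain from SVD perturbation of the top singular triple is exactly the expansion the paper already records as equation \eqref{eq:4} in the proof of Theorem~\ref{thm:clt1}, derived there from the gradient conditions of the NKP problem (the ``cleaner alternative'' you mention at the end). The remaining steps---vectorizing to get $\hP\vect(\tilde\Phi-\halpha\hubeta')$, applying the CLT, and using the Moore--Penrose quadratic-form identity with $\rk(\hP\Xi_1\hP)=\rk(\hP)=(m^2-1)(n^2-1)$---match the paper's argument, and you make explicit the positive definiteness of $\Xi_1$ that the paper only uses implicitly.
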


Recall the notations introduced before Theorem~\ref{thm:clt1}: $\Xi_1$ is the asymptotic covariance matrix of $\vect(\tilde\Phi)$, where $\tilde\Phi={\cal G}(\hat\Phi)$ is the rearranged version of $\hat\Phi$. The matrix $\Xi_1$ is obtained by {rearranging} the entries of $\Gamma_0^{-1}\otimes\Sigma$. Note that both $\Gamma_0$ and $\Sigma$ can be estimated by their sample versions. We denote by $\hat\Xi_1$ the corresponding estimator of the asymptotic covariance matrix of $\vect(\tilde\Phi)$. On the other hand, if $\halpha$ and $\hbeta_1$ are in the matrix $\hP$ are substituted by $\vect(\hat\hA_1)$ (note that we have made the convention that $\|\hat\hA_1\|_F=1$)
and $\vect(\hat\hB_1)/\|\hat\hB_1\|_F$ respectively, we have the estimator $\hat\hP$ for $\hP$. We consider the VAR(1) model \eqref{eq:var1} for $\vect(\hX_t)$, and test the hypothesis:
\begin{equation*}
    {H}_0:\; \Phi\hbox{ takes the form }\hB\otimes\hA \quad\hbox{vs}\quad
    {H}_1:\; \Phi\hbox{ cannot be expressed as } \hB\otimes\hA.
\end{equation*}
Motivated by Corollary~\ref{thm:test}, we use the test statistic
\begin{equation*}
    T\cdot\vect(\hat\hD)'\;(\hat\hP\,\hat\Xi_1\hat\hP)^{+}\;\vect(\hat\hD).
\end{equation*}
As an immediate consequence of Corollary~\ref{thm:test}, the test statistic also has the limiting distribution $\chi^2_{(m^2-1)(n^2-1)}$, based on which we are able to calculate the $p$-value.

\section{Numerical Results}
\label{sec:num}

\subsection{Simulations}

\begin{figure}[h]
  \centering
  \includegraphics[width=\textwidth]{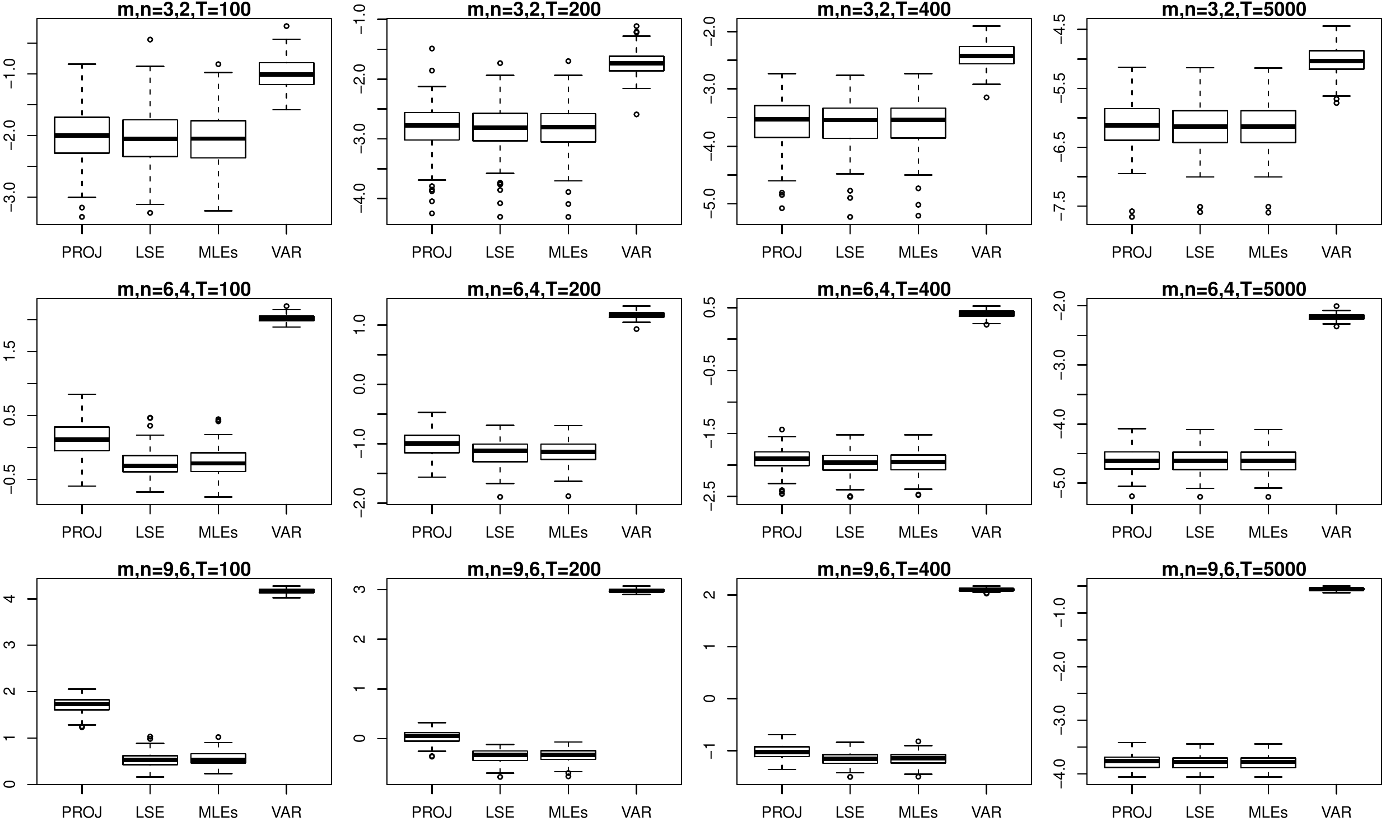}
  \caption{Comparison of four estimators,
PROJ, LSE, MLEs, and VAR, under Setting I. The three rows correspond to $(m,n)=(3,2), ~(6,4)$ and $(9,6)$ respectively, and the four columns $T=$ 100, 200, 400 and 5000 respectively.}
  \label{fig:sim1}
\end{figure}

\begin{figure}[h]
  \centering
  \includegraphics[width=\textwidth]{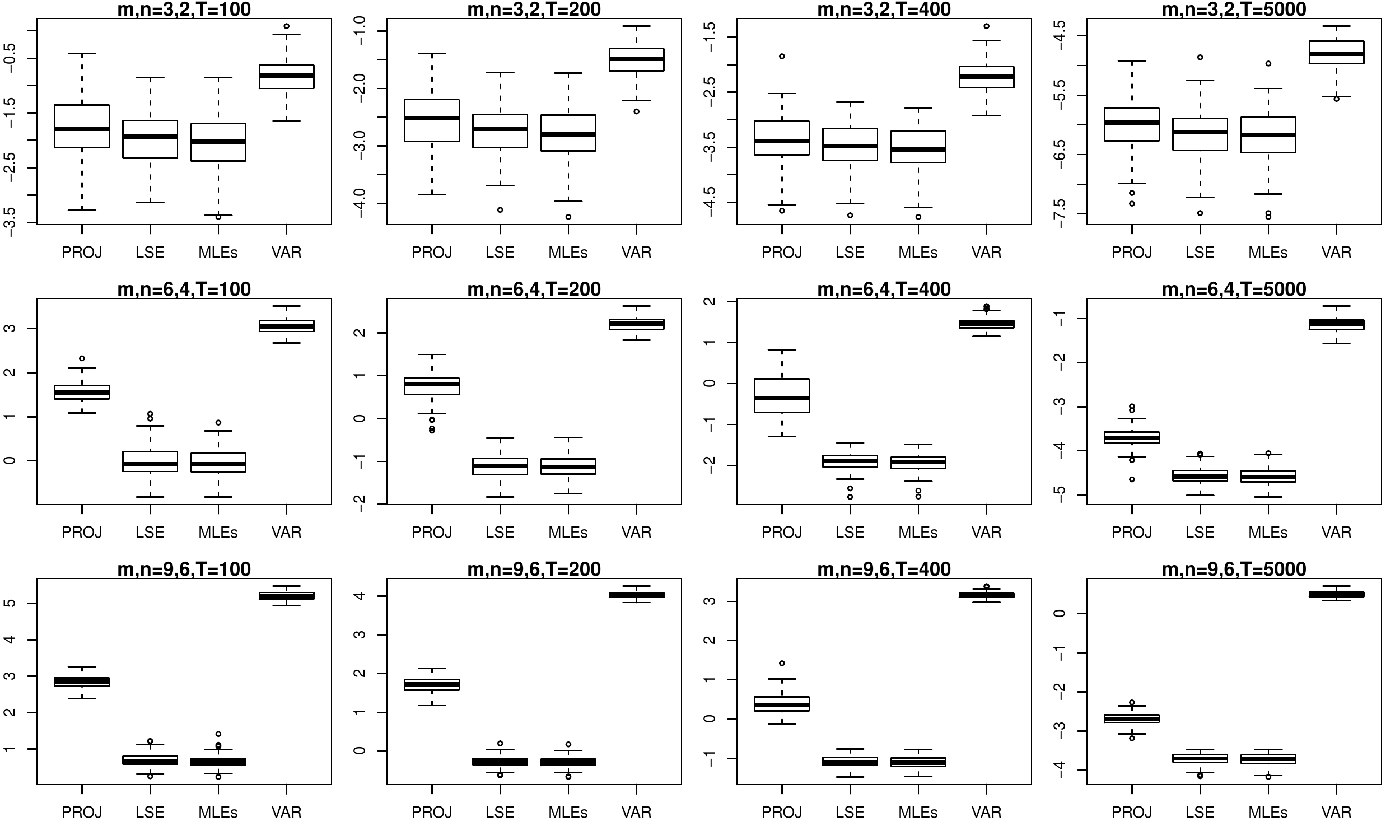}
  \caption{Comparison of four estimators, PROJ, LSE, MLEs, and VAR, under Setting II. The three rows correspond to $(m,n)=(3,2), ~(6,4)$ and $(9,6)$ respectively, and the four columns $T=$ 100, 200, 400 and 5000 respectively.}
  \label{fig:sim2}
\end{figure}

\begin{figure}[h]
  \centering
  \includegraphics[width=\textwidth]{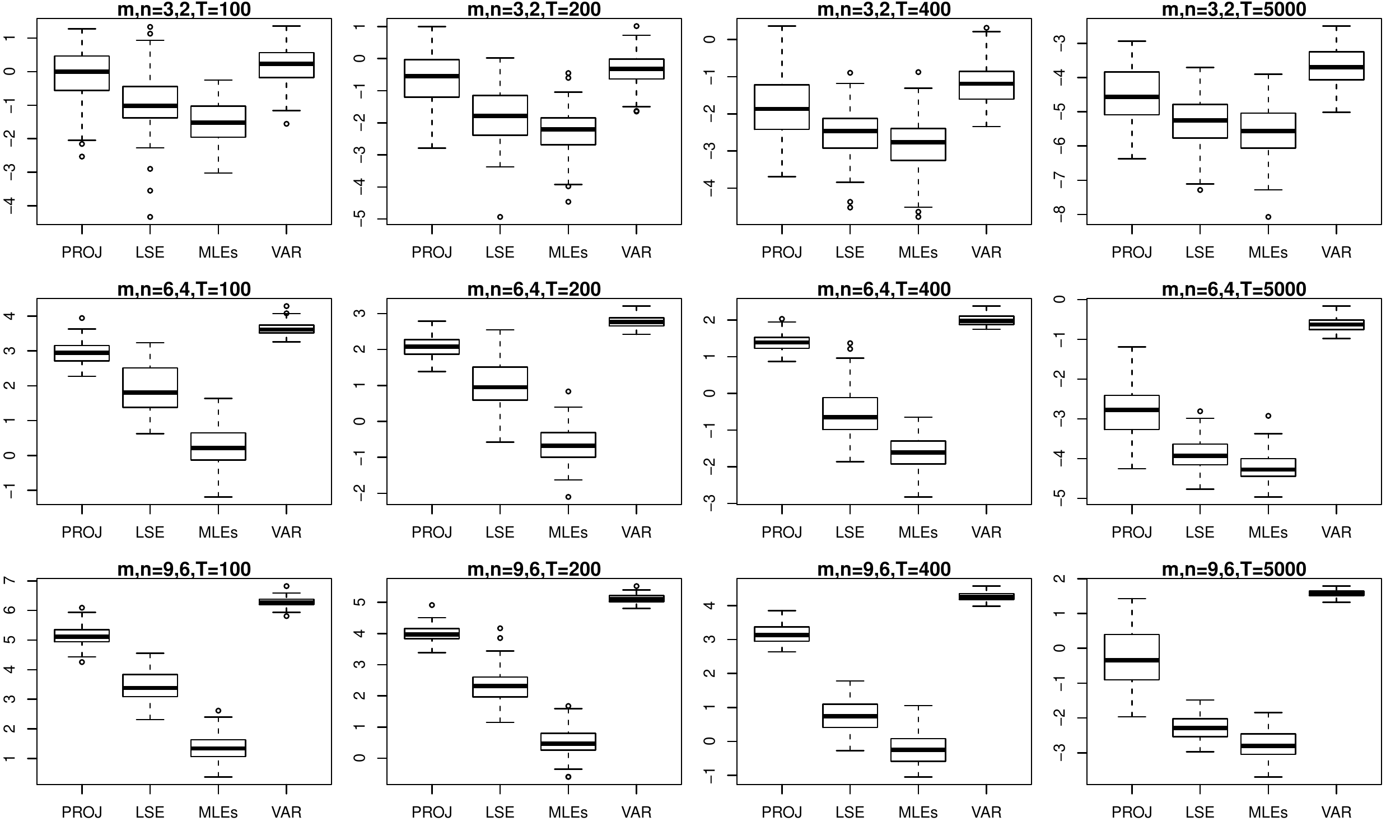}
  \caption{Comparison of four estimators, PROJ, LSE, MLEs, and VAR, under Setting III. The three rows correspond to $(m,n)=(3,2), ~(6,4)$ and $(9,6)$ respectively, and the four columns $T=$ 100, 200, 400 and 5000 respectively.}
  \label{fig:sim3}
\end{figure}

\begin{figure}[h]
  \centering
  \includegraphics[width=\textwidth]{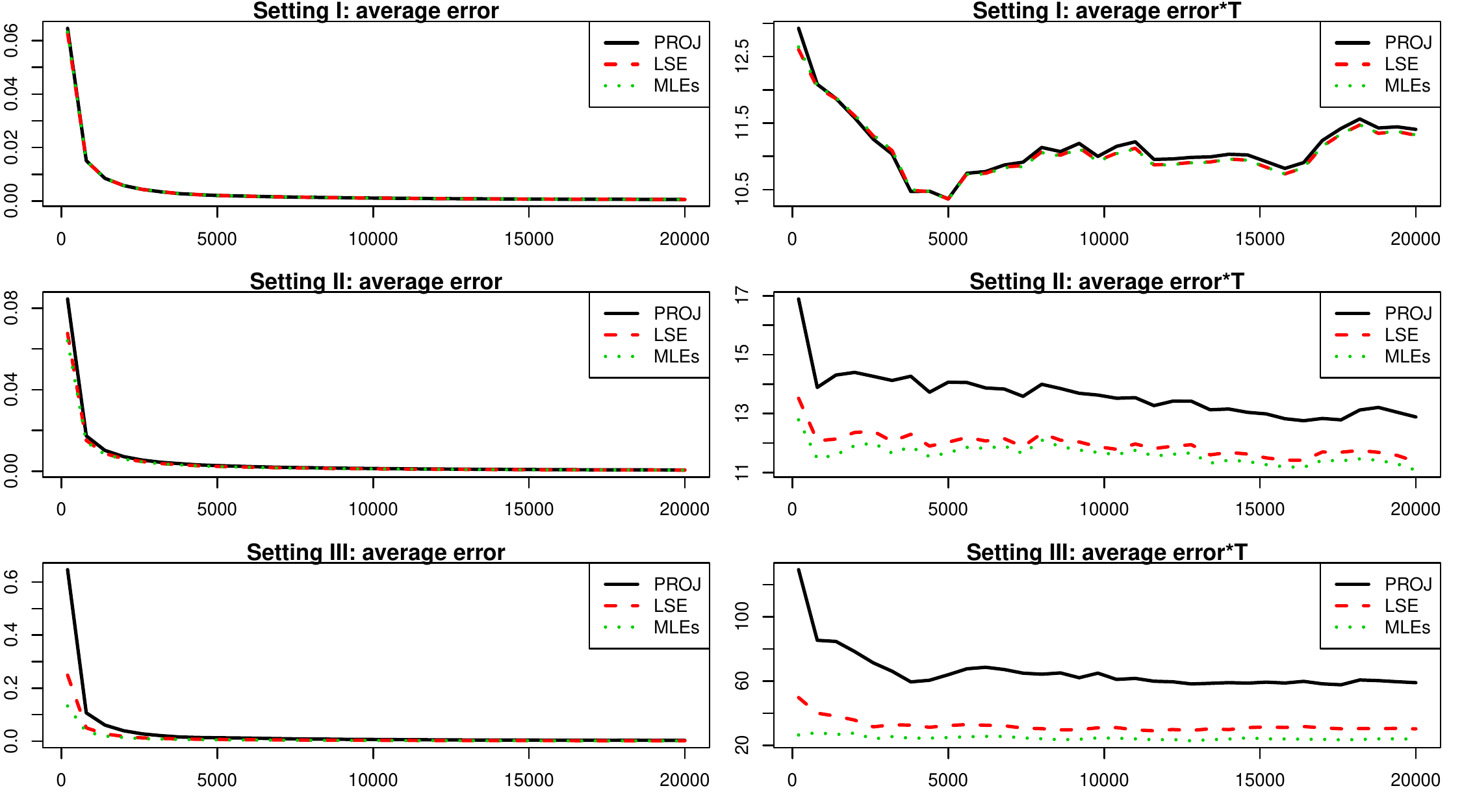}
  \caption{Comparison of the asymptotic efficiencies of three estimators, PROJ, LSE, and MLEs, under three settings. The left column shows the average error over 100 repetitions for $\|\hat\hB\otimes\hat\hA-\hB\otimes\hA\|_F^2$ and the right for $T\|\hat\hB\otimes\hat\hA-\hB\otimes\hA\|_F^2$. 
  }
  \label{fig:sim4} 
\end{figure}


In this section, we compare the performances of the aforementioned
estimators and the stacked VAR(1) estimator under different settings
for various choices of the matrix dimensions $m$ and $n$, as well as
the length of the time series $T$.

Specifically, for given dimensions $m$ and $n$, the observed
data $\hX_t$ are simulated according to model \eqref{eq:mar1}, where the
entries of $\hA$ and $\hB$ are generated randomly and then rescaled so
that $\rho(\hA)\rho(\hB)=.5$ to
guarantee the fulfillment of the causality condition and the constraint
$||\hA||_F=1$. For a
particular simulation setting with multiple repetitions, the
coefficient matrices $\hA$ and $\hB$ remain fixed.

In what follows, we perform six experiments: the first three
experiments demonstrate the finite-sample comparisons under three
settings of the covariance structure of the innovation matrix $\hE_t$
respectively, the fourth one compares the asymptotic properties of
all estimators when $T\rightarrow\infty$ under these three settings, 
the fifth one studies the finite-sample behavior of the asymptotic 
variance of the estimators, and the sixth one investigates the performance
of the specification test.

\begin{itemize}
\item Setting I: The covariance matrix $\Cov(\vect(\hE_t))$ is set to
$\Sigma=\mathbf{I}$. 

\item Setting II: The covariance matrix $\Cov(\vect(\hE_t))=\Sigma$ is
  randomly generated according to
  $\Cov(\vect(\hE_t))=\mathbf{Q}\Lambda\mathbf{Q}'$, where the
  eigenvalues in the diagonal matrix $\Lambda$ are the absolute values
  of i.i.d. standard normal random variates, and the eigenvector matrix
  $\mathbf{Q}$ is a random orthonormal matrix.
\item Setting III: The covariance matrix $\Cov(\vect(\hE_t))$ takes
  the Kronecker product form \eqref{eq:ocov}, where $\Sigma_c$ and
  $\Sigma_r$ are generated similarly as the $\Sigma$ in Setting II.
\end{itemize}

In addition to
the three estimators (PROJ, LSE, and MLEs)
discussed in Section~\ref{sec:est},
we also include the MLE under the stacked VAR(1)
model in \eqref{eq:var1}, with the acronym VAR, as a benchmark for comparison. For each
configuration, we repeat the simulation 100 times, and show a box plot
of
\begin{equation*}
  \log(\|\hat\hB\otimes\hat\hA-\hB\otimes\hA\|_F^2).
\end{equation*}

Figures~\ref{fig:sim1} to \ref{fig:sim3} show the simulation results
under three settings respectively for relatively small sample
sizes. For each of these three figures, the dimensions $m,n$ increase
from top to bottom, taking values in $(m,n)=(3,2), ~(6,4)$ and
$(9,6)$. The sample size $T$ increases from left to right at
$T=100, 200, 400$ and $5000$, respectively.
One common finding from these three
figures is that all three estimators, PROJ, LSE, and MLEs, obtained
under the MAR(1) model in \eqref{eq:mar1} outperform the stacked VAR
estimator.

In the first experiment under Setting I, Figure~\ref{fig:sim1} shows
that LSE is the best estimator when the covariance matrix is
indeed a diagonal matrix. This is intuitive since LSE is the maximum
likelihood
estimator under this setting. The very close second best is the MLEs,
which is comparable with LSE throughout different combinations of
$m,~n,~T$ and only performs slightly worse when $m,~n$ are large and
$T$ is small. This is expected since MLEs has to estimate the additional
row and column covariance matrices of sizes $m\times m$ and $n\times
n$ when it is not necessary.
Both LSE and MLEs are superior over the PROJ, especially when the
sample size is small and the dimensions are large as seen in the lower
left corner of the figure. As sample size increases, the advantage
becomes less obvious.

In the second experiment under Setting II,
the overall pattern of Figure~\ref{fig:sim2} is similar to that in
Figure~\ref{fig:sim1}. The VAR estimator performs the worst;
PROJ is the second worst, and LSE and MLEs are very much similar.
Note that under Setting II, the covariance structure is arbitrary and
does not follow the Kronecker structure, which is the underlying assumption
for the MLEs. The LSE does not assume any covariance structure in the
 estimation process. Hence one would expect MLEs, obtained under the wrong
assumption, should perform worse than LSE. However,
the simulation results show that they perform similarly.

In the third experiment under Setting III, Figure~\ref{fig:sim3} shows
that MLEs dominates LSE for any choice of $m,~n,~T$, as
Corollary \ref{thm:eff} predicts. LSE in turn prevails PROJ, which in turn
always leads the stacked VAR estimator.

In the fourth experiment, we compare the asymptotic efficiencies of
PROJ, LSE, and MLEs by letting the length of the time series $T$ go to
infinity. The main purpose of this experiment is to obtain qualitative
understanding of the asymptotic covariances of different estimators.
Although Theorems \ref{thm:clt1} to \ref{thm:mle} provide the
theoretical form of the asymptotic covariances and Corollary
\ref{thm:eff} ascertains the relative magnitude of the errors from LSE
and MLEs under Setting III, we have little concrete insight on the
relative performances of the three estimators under other
settings. For this purpose, we fix the dimensions $(m,n)=(3,2)$ for
all three settings in this experiment.  Figure~\ref{fig:sim4} shows
the results. The left three panels show the average estimation errors
over 100 repetitions of $\|\hat\hB\otimes\hat\hA-\hB\otimes\hA\|_F^2$
for different $T$.  The right panels shows
$T\|\hat\hB\otimes\hat\hA-\hB\otimes\hA\|_F^2$ as a function of
$T$. The three rows correspond to the three settings respectively. In
each of the six panels, the solid line, dashed line, and dotted line
correspond to PROJ, LSE, and MLEs respectively.  The three figures on
the left show the decreasing trend of all three estimators as $T$
grows. The three figures on the right magnify the differences of the
three estimators, with a clear ordering.  PROJ estimator clearly has
the lowest efficiency.  Under Setting 1 (top panels), LSE and MLEs
performs similarly, since LSE is the maximum likelihood estimators under
the setting. MLEs estimates in total 4 more parameters in $\Sigma_r$ and
$\Sigma_c$. The bottom panels in the figure show that MLEs is more
efficient than LSE under Setting III, which is expected by
Corollary~\ref{thm:eff}. However, under Setting II (the middle
panels), it is interesting to observe that MLEs outperforms LSE
slightly but consistently, although MLEs is obtained under a wrong
model assumption.  This is probably due to the use of a regularized
covariance structure \eqref{eq:ocov}, which is beneficial because of
the dimension reduction from 21 parameters in an arbitrary
$\Sigma$ to $8$ in $\Sigma_c\otimes\Sigma_r$. Note that the performance 
also depends on how close the Kronecker product approximation is to 
the arbitrary (random) covariance matrix used in the simulation.

In the fifth experiment, the finite-sample performance of the asymptotic 
covariance matrices is demonstrated. We fix the dimensions to be $m=3$ 
and $n=2$, and the results are similar for larger dimensions. 
Under each of the three aforementioned settings,  
combining the three estimators, PROJ, LSE, and MLEs, and their
corresponding standard errors from Theorems \ref{thm:clt1} to \ref{thm:mle},
we create 95\% confidence interval of each parameter
based on the asymptotic normality distribution. 
In particular, two types of confidence intervals are constructed:
one for the entries of the matrices $\hA$ and $\hB$ separately, and
the other for the entries of $\vect(\hB)\otimes \vect(\hA)$. 
We repeat the experiment 1000 times.
Table~\ref{table:SE} shows the percentage that the true parameter 
falls within the marginal 95\% confidence interval of each parameter
for the three different estimators, under three different settings and different 
sample sizes. It can be seen from the table that the
coverage is quite accurate, especially in large sample cases. 
The properties for other nominal confidence levels, for example, 90\% and 99\%,
are similar in nature.

\begin{table}[h]
\centering
{\small 
\begin{tabular}{|c|c|ccc|ccc|ccc|}\hline
      &Setting&&I&&    &II&&    &III&\\\hline
      &Estimator& PROJ&  LSE& MLEs& PROJ&  LSE& MLEs& PROJ&  LSE& MLEs\\\hline
&T=100 &0.926&0.934&0.932&0.913&0.935&0.923&0.872&0.906&0.947\\
$(\vect'(\hat\hA),\vect'(\hat\hB))'$&T=200 &0.938&0.941&0.941&0.937&0.944&0.932&0.915&0.934&0.950\\
&T=1000&0.950&0.951&0.951&0.947&0.947&0.933&0.946&0.949&0.953\\\hline
&T=100 &0.915&0.923&0.921&0.905&0.922&0.911&0.860&0.885&0.936\\
$\vect(\hat\hB)\otimes\vect(\hat\hA)$&T=200 &0.935&0.938&0.937&0.930&0.939&0.928&0.903&0.923&0.945\\
&T=1000&0.950&0.952&0.951&0.946&0.945&0.932&0.942&0.944&0.950\\\hline
\end{tabular}
}
\caption{Percentage of coverages of 95\% confidence intervals.}
\label{table:SE}
\end{table}

In the sixth experiment, the performance of the specification test in Corollary \ref{thm:test} is investigated. To that end, the samples are generated according to the following models
\[
  \h{X}_t = .5\h{A}_1\h{X}_{t-1}\h{B}_1' + .5\eta\h{A}_2\h{X}_{t-1}\h{B}_2' + \h{E}_t,
\]
where $\rho(\hA_1)=\rho(\hB_1)=\rho(\hA_2)=\rho(\hB_2)=1, ~\eta=0,~0.05,~0.10,~0.15,~...~,~0.50$. When $\eta=0$, the null hypothesis is valid and when $\eta=0.05,~0.10,~...~,~0.50$, the alternative is true. The larger the value of $\eta$ is, the more severe the deviation from the null hypothesis. Again, we fix the dimensions to be $m=3$ and $n=2$, and the results are similar for larger dimensions. The significance level is set to be 0.05 and we perform the specification test for 10,000 replications of the data with five choices of length $T$ in each of the three aforementioned settings. Figure \ref{fig:sim6} shows the empirical sizes and powers as a function of $\eta$. It can be seen that the when $\eta=0$, the empirical sizes are close to 0.05 and the powers increase to 1 as $\eta$ increases under all three settings. It is also shown that as $T$ increases, the powers increase from 0 to 1 more quickly as a function of $\eta$.

\begin{figure}[h]
  \centering
  \includegraphics[width=\textwidth]{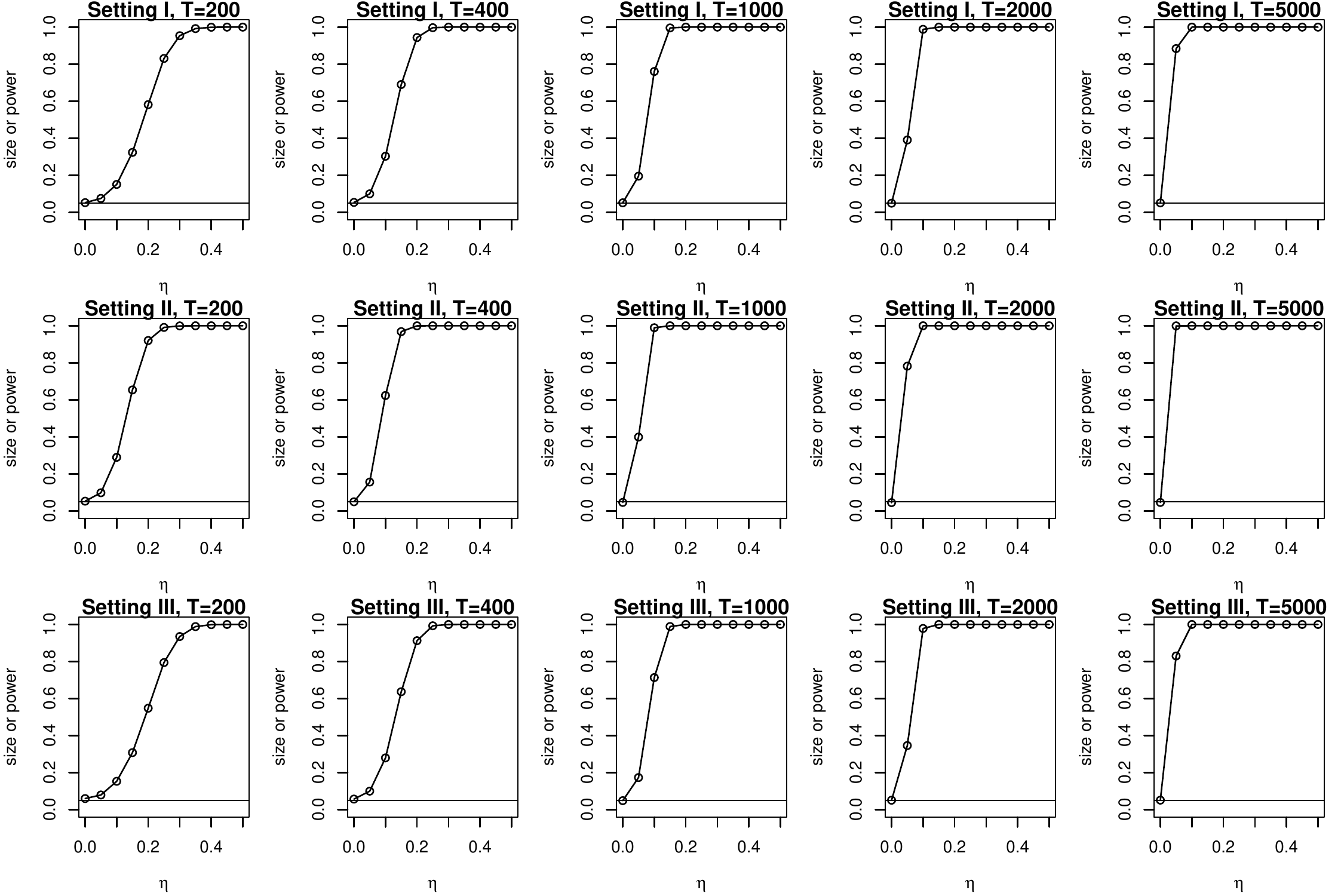}
  \caption{Power of the specification test for varying time series lengths in three settings. $\eta$ is a measure of how far the alternative hypothesis is away from the null hypothesis. The heights of the horizontal lines are 0.05.}
  \label{fig:sim6} 
\end{figure}

\subsection{Economic indicator from five countries}

We now revisit the example shown in Figure~\ref{fig:1}. The data
consists of quarterly observations of four economic indicators:
3-month interbank interest rate (first order differenced series), 
GDP growth (first order differenced log of GDP series), Total manufacture Production growth 
(first order differenced log of Production series) and total consumer price index (growth from
the last period) from five countries: Canada, France, Germany, United Kingdom and
United States. It ranges from 1990 to 2016. The data was obtained from
Organisation for Economic Co-operation and Development (OECD) at {\tt
  https://data.oecd.org/}.  Before fitting the autoregressive models,
we adjusted the seasonality of CPI by subtracting the sample quarterly
means. All series are normalized so that the combined variance of each
indicator (each row) is 1.

MAR(1) model was estimated using the three estimation methods. We also fitted
a stacked VAR(1) model, and
univariate AR(1) and AR(2) models for each individual time series. The residual
sum of squares of each model and the sum of squares of the (normalized)
original data are listed in Table~\ref{table:SS}. The MAR(1) estimated
using the least squares method has the smallest residual sum of squares,
among all models and methods,
except the VAR(1) model.  Note
that MAR(1) model uses $16+25-1=40$ parameters in the two coefficient
matrices, comparing to 20 and 40 parameters in
fitting 20
univariate AR(1) and AR(2) models to each series, respectively. The VAR(1)
model has total 400 parameters in the AR coefficient matrix. The large
number of parameters results in a small residual sum of squares. It
is deemed to be overfitting as we will show later in out-sample rolling
forecasting performance evaluation.

\begin{table}[h]
\centering
\begin{tabular}{ccc|c|cc|c}\hline
  MAR(1) PROJ & MAR(1) LSE & MAR(1) MLEs & VAR(1) & iAR(1) & iAR(2) & original \\ \hline
 1572 & 1468 & 1487 & 1121& 1724 & 1663 & 2116 \\ \hline
\end{tabular}
\caption{Residual sum of squares of MAR(1) model using three different
estimators and the stacked VAR(1) estimator;
and the total residual sum of squares of fitting
univariate AR(1) and AR(2) to each individual time series; and
the total sum of squares of the original (normalized) data.}
\label{table:SS}
\end{table}

Tables~\ref{table:LL} and \ref{table:RR} show the estimated parameters
and their corresponding standard errors (in the parentheses) of $\hA$
and $\hB$ using the least squares method. Due to ambiguity between the
two matrices, the left matrix is scaled so that its Frobenius norm is
one. On the right of the
table we also indicate the positively significant, negatively
significant and insignificant parameters (at 5\% level) using symbols
$(+,-,0)$, respectively.

\begin{table}[h]
\centering
\begin{tabular}{c|cccc||cccc}
 &      Int &    GDP &   Prod &    CPI &  Int &    GDP &   Prod &    CPI
 \\ \hline
Int &0.272  &0.304 &0.066 &0.018 &$+$ &$+$ &$0$ &$0$\\
    &(0.063) &(0.079) &(0.097) &(0.05)\\
GDP &-0.164 &0.447 &0.32  &-0.036 &$-$ &$+$ &$+$ &$0$\\
    &(0.052) &(0.077) &(0.094) &(0.045)\\
Prod &-0.198 &0.393 &0.429 &0.009 &$-$ &$+$ &$+$ &$0$\\
     &(0.057) &(0.083) &(0.101) &(0.05)\\
CPI &-0.108 &0.031 &0.036 &0.327  &$0$ &$0$ &$0$ &$+$\\
    &(0.072) &(0.106) &(0.118) &(0.059)\\
\end{tabular}
\caption{Estimated left coefficient matrix $\hA$ of MAR(1) using LS method.
Standard errors are shown in the parentheses. The right panel indicates
the positively significant, negatively significant and
insignificant  parameters at 5\% level using symbols $(+,-,0)$, respectively.}
\label{table:LL}
\end{table}

The left coefficient matrix shows an interesting pattern. 
For example, the first column in Table~\ref{table:LL} shows the influence on the current economic indicators from the past quarter's interest rate. The influence on the current GDP growth, Production growth and CPI are all negative, meaning that a higher interest rate will make the GDP growth and Production growth slower. Current CPI is also negatively related to a higher past interest rate.
The second column in Table~\ref{table:LL} shows that the influence on the current economic indicators from the past quarter's GDP growth. They are all positive, except the insignificant influence on CPI. 
The last row of Table~\ref{table:LL} shows that the past economic
indicators do not have significant influence on the current
CPI, except its own past; whilst the last column indicates that the past CPI does not have signifiant influence on all current indicators, except itself.

\begin{table}[h]
\centering
\begin{tabular}{c|ccccc||ccccc}
&    USA &  DEU &  FRA  &  GBR &  CAN & USA &  DEU &  FRA  &  GBR & CAN \\
\hline
USA &0.753 &-0.132 &0.159 &0.462 &-0.057 &$+$ &$0$ &$0$ &$+$ &$0$\\
    &(0.128) &(0.182) &(0.126) &(0.121) &(0.146)\\
DEU &0.45  &0.179  &0.678 &0.359 &-0.387 &$+$ &$0$ &$+$ &$+$ &$-$\\
    &(0.083) &(0.131) &(0.085) &(0.079) &(0.096)\\
FRA &0.363 &0.073  &0.292 &0.254 &0.041 &$+$ &$0$ &$+$ &$+$ &$0$\\
    &(0.128) &(0.198) &(0.139) &(0.126) &(0.154)\\
GBR &0.385 &-0.098 &0.099 &0.686 &0.069 &$+$ &$0$ &$0$ &$+$ &$0$\\
    &(0.115) &(0.159) &(0.109) &(0.115) &(0.132)\\
CAN &0.511 &-0.083 &0.054 &0.634 &0.308 &$+$ &$0$ &$0$ &$+$ &$+$\\
    &(0.098) &(0.145) &(0.1)   &(0.094) &(0.115)\\
\end{tabular}
\caption{Estimated right coefficient matrix $\hB$ of MAR(1) using LS method.
Standard errors are shown in parentheses. The right panel indicates
the positively significant, negatively significant and
insignificant  parameters at 5\% level using symbols $(+,-,0)$, respectively.}
\label{table:RR}
\end{table}

Table~\ref{table:RR} shows the estimated $\hB$. Its effect
should be considered in the view of $\hB\hX_t'$. It is seen that
the influence of US's last quarter's indicators
on the current quarter's indicators
of all countries (shown by the first column in $\hat{\hB}$)
are very significantly positive and all larger than those of all other
countries. This is intuitively correct as US is the world's largest
economy. Although it is understandable that Canada has a relatively
small influence on other countries (shown by the last column),
it is surprising to see that Germany
has almost no influence (shown by the second column).
Most of the large coefficients are positive,
showing positive influences among the countries. On the other hand, UK 
has a similar influence pattern as the US (the fourth column), a feature 
that is intuitively difficult to explain. 

Figures~\ref{fig:11}, \ref{fig:12} and \ref{fig:13} show the shock-first impulse
response functions with orthogonal innovations (s1-oIRF) with one standard
deviation shock on US interest rate, US GDP and US CPI,
respectively, using that given in Section~2.2. 
The dotted horizontal lines mark the
values $(0.1,0,-0.1)$ and the dotted vertical lines marks the time
$(0,2,4,6,8,10)$.  
It can be seen from Figure~\ref{fig:11} that a US interest rate shock would be responded positively by the interest rates in other countries in similar patterns, and the impact lasts about a year. 
It is interesting to see that GDP of all countries responds positively to the interest rate shocks at first, and then negatively after two quarters, though very slightly.  
CPI does not have much response to interest rate shocks, but mostly in the negative direction.

From Figure~\ref{fig:12}, it is seen that the interest rate, GDP growth and Production growth of all countries respond positively to a US GDP shock, whose impacts last about 10 quarters. Again, CPI almost does not respond.

On the other hand, Figure~\ref{fig:13} shows that a shock on US CPI generates strong positive responses from CPI of all other countries, while its impacts on interest rates, GDP growth and Production growth are also positive, but relatively small. These patterns are consistent with our interpretations on the matrix $\hA$, reported in Table~\ref{table:LL}. 

\begin{figure}[h]
  \centering
  \includegraphics[width=13cm]{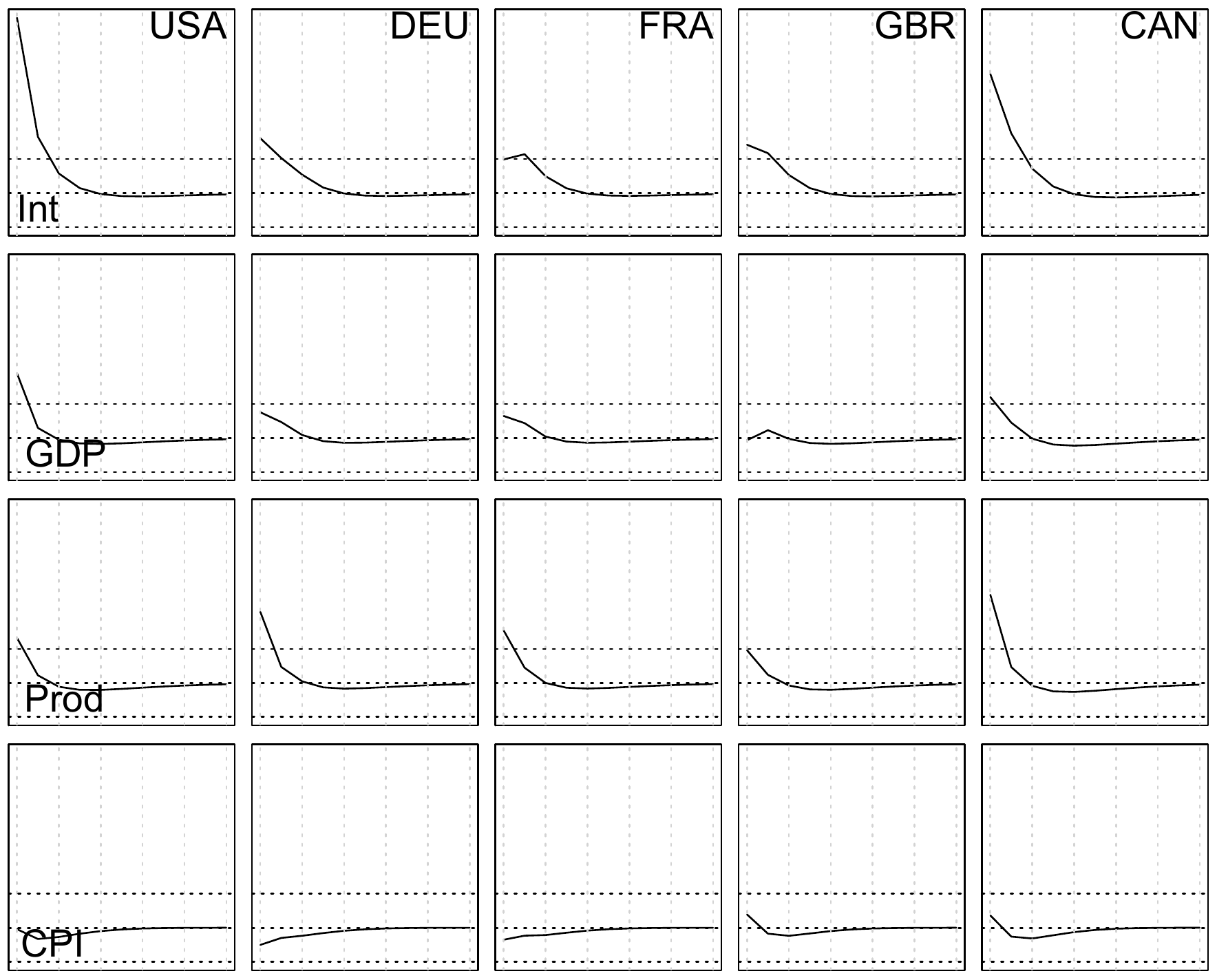}
  \caption{s1-oIRF of MAR(1) model with a unit variance
shock on US interest rate.}
  \label{fig:11}
\end{figure}

\begin{figure}[h]
  \centering
  \includegraphics[width=13cm]{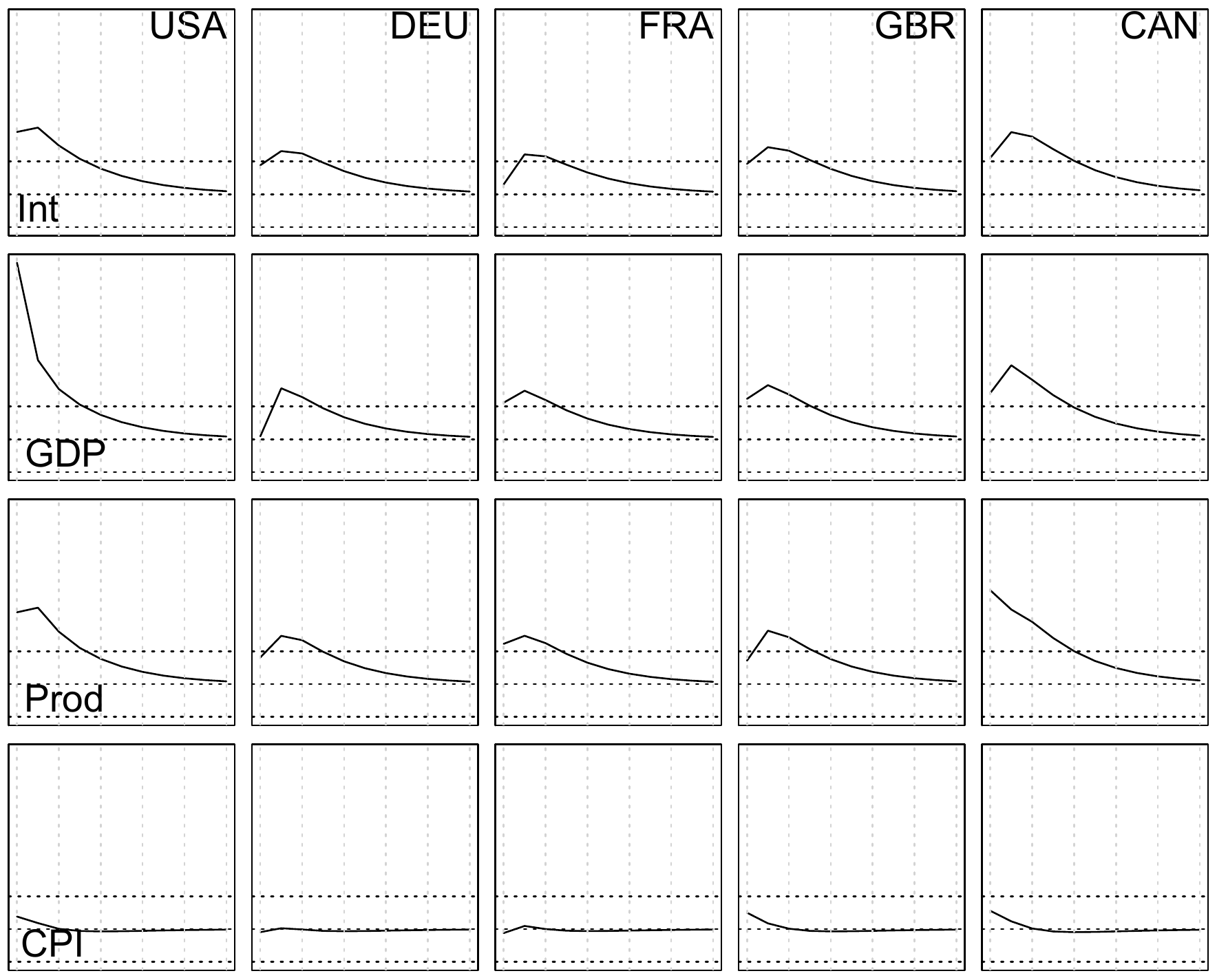}
  \caption{s1-oIRF of MAR(1) model with a unit variance
shock on US GDP rate.}
  \label{fig:12}
\end{figure}

\begin{figure}[h]
  \centering
  \includegraphics[width=13cm]{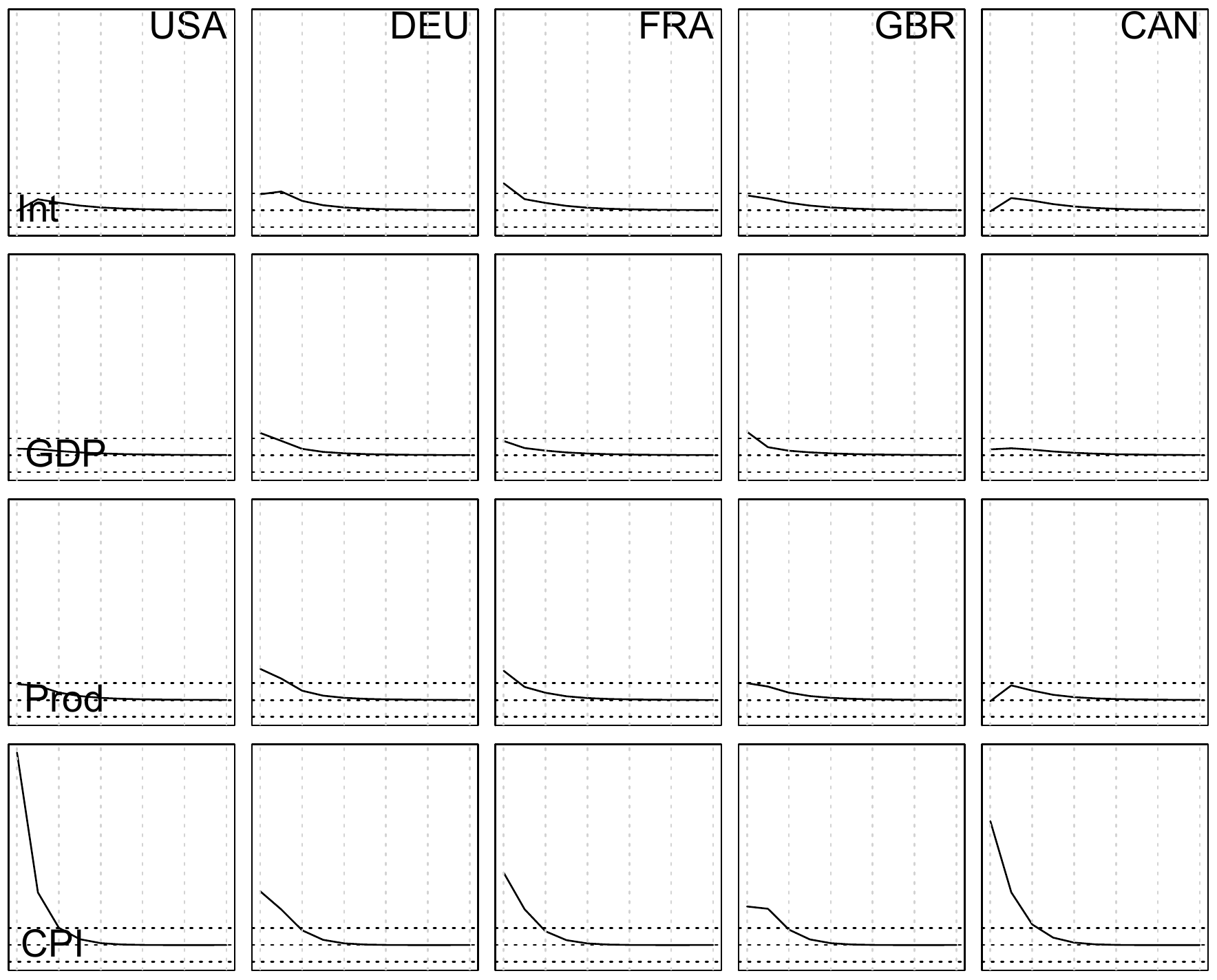}
  \caption{s1-oIRF function of MAR(1) model with a unit variance
shock on US CPI rate.}
  \label{fig:13}
\end{figure}

Figure~\ref{fig:19} shows the residual plots of the MAR(1) estimated
using LS method. There are some outliers.  
Note that the analysis was done by scaling each indicator of all countries (each row) to unit sample variance. Hence the scale of the residuals (Figure 9) are different from the original data plot (Figure 1). As an illustration of the
MAR(1) model, in this analysis we do not try to do any adjustment.  In
Figure~\ref{fig:20} we plot the autocorrelation function (ACF) of the
20 residual series, after fitting the MAR(1) model using the least
squares method. Figure~\ref{fig:21} shows the ACF plots of the 20
original series. It is seen that the MAR(1) model is able to capture
the serial correlations in the 20 time series simultaneously, and lead
to relatively clean ACF plots of the residuals.
Further model checking excises such as the standard portmanteau test may also be applied to assess the adequacy of the model, 
though more investigation needs to be done for its properties for high dimensional cases such as the model used. Note that 
this example is mainly for demonstration. A more thorough analysis of the data may require a model with more 
Kronecker production terms as in (\ref{multi-term}), or with higher AR orders. 

\begin{figure}[h]
  \centering
  \includegraphics[width=13cm]{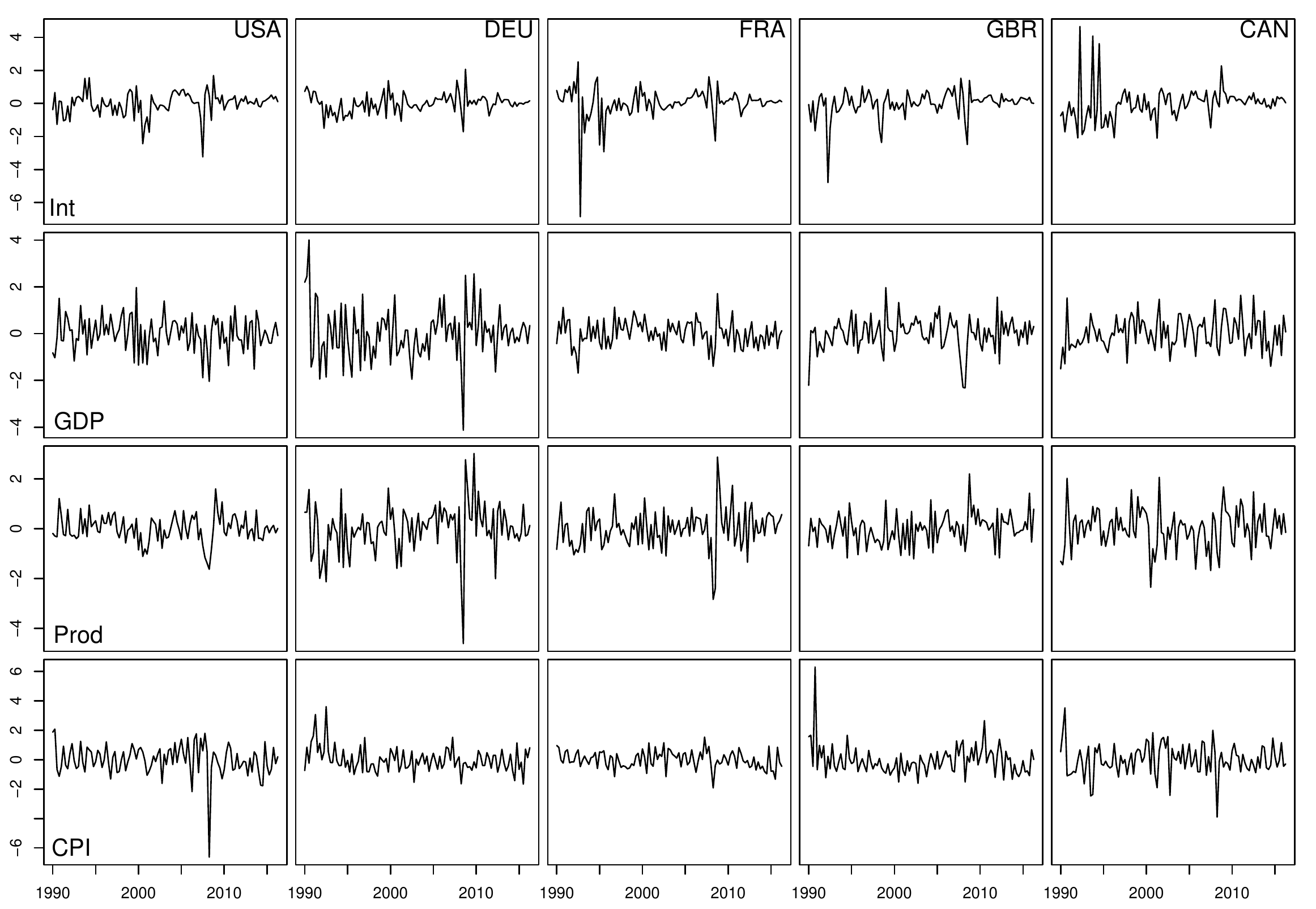}
  \caption{Residual plot of the MAR(1) model.}
  \label{fig:19}
\end{figure}

\begin{figure}[h]
  \centering
  \includegraphics[width=13cm]{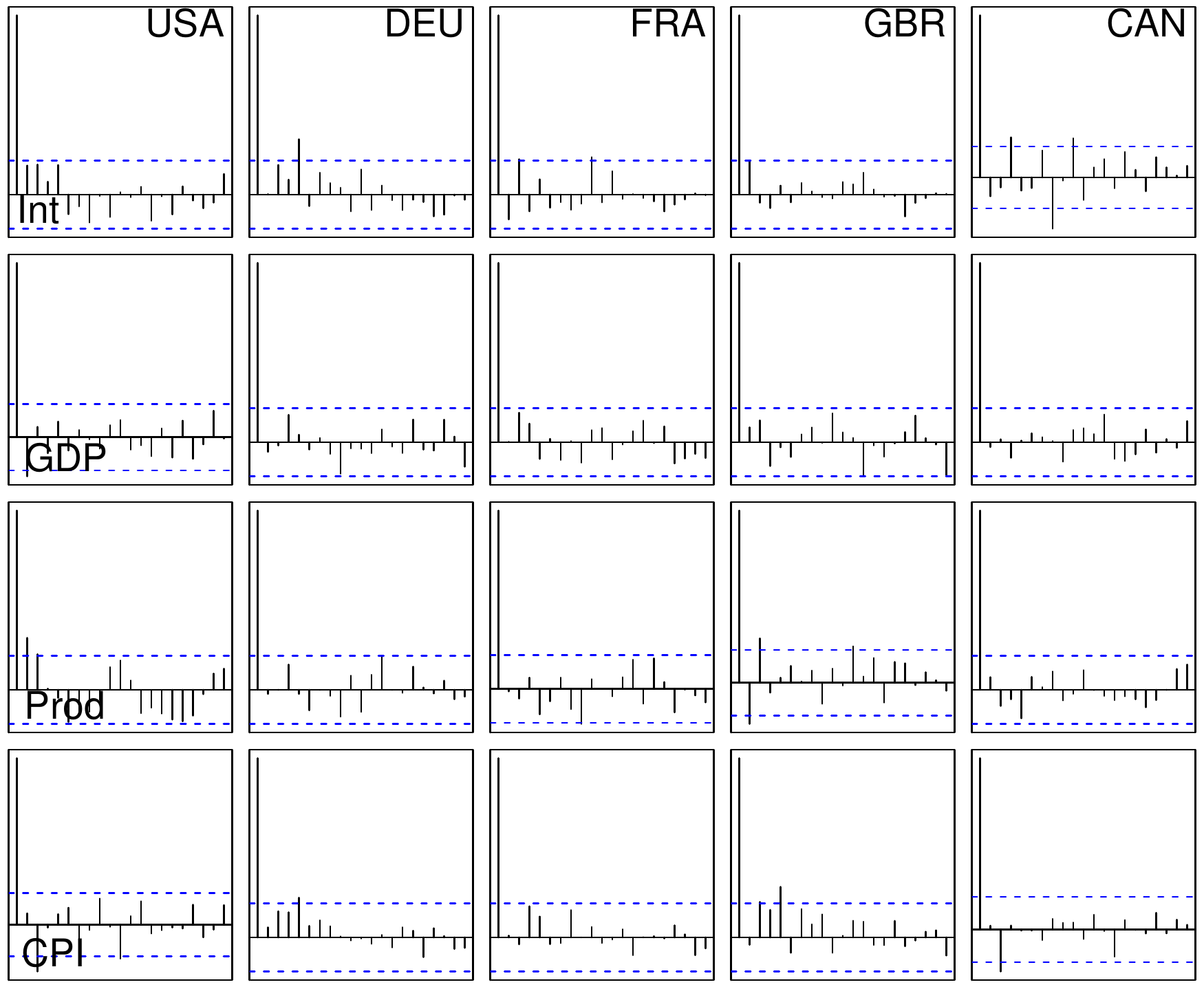}
  \caption{ACF of residuals after fitting MAR(1) model using least squares
method.}
  \label{fig:20}
\end{figure}

\begin{figure}[h]
  \centering
  \includegraphics[width=13cm]{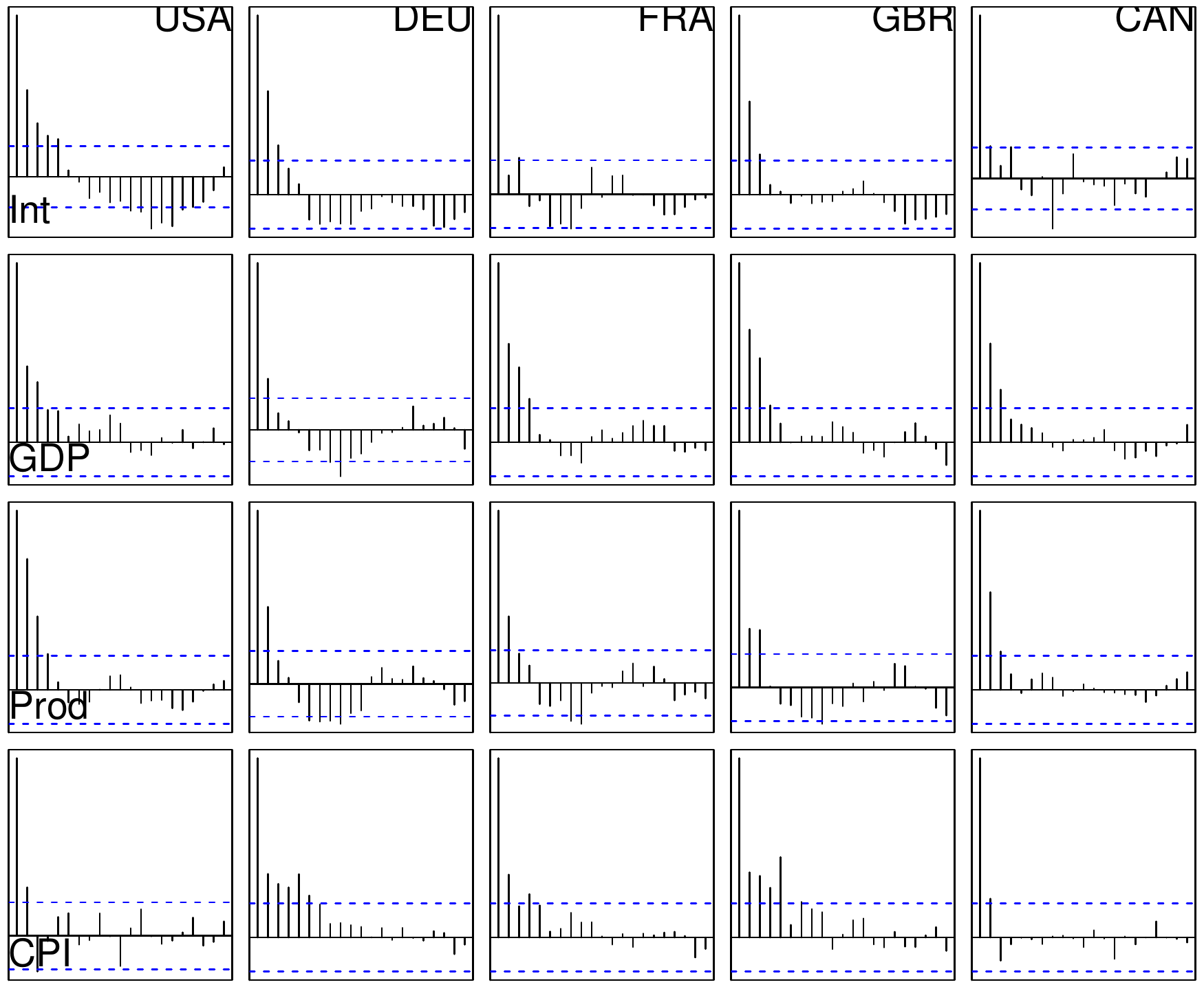}
  \caption{ACF of original series.}
  \label{fig:21}
\end{figure}

We also obtain out-sample rolling forecast performances of the MAR(1)
model as well as univariate AR(1) and AR(2) models for comparison.
Specifically, starting from the last quarter of 2011 ($t=87$) to the
end of the series (the last quarter of 2016, $t=107$), we fit the
corresponding models using all available data at time $t-1$ and
obtained the one step ahead prediction $\hat{\hX}_{t-1}(1)$ for
$\hX_t$ at time $t$.  Sum of prediction error squares
$||\hat{\hX}_{t-1}(1)-\hX_t||_F^2$ of all methods are shown in
Table~\ref{table:pred_SS}. 
It seems that MAR(1) with least squares estimation performs
about 3\% worse than the individual AR(1) models. It is commonly observed in
multivariate time series that the joint model often performs worse
than individual AR models in prediction.  Figure~\ref{indicator_pred}
shows the difference between the sum of squares of prediction error
(or all countries and all indicators) for each quarter between the
MAR(1) model and the individual AR(1) model. It is seen that although
MAR(1) model performs quite poorly in three out of the 20 quarters, it
performs better in the later three years.

Table~\ref{table:pred_SS} also shows that the stacked VAR(1) model performed
terribly in prediction, due to overfitting.

\begin{table}[h] 
\centering
\begin{tabular}{ccc|cc|c}
  MAR(1) PROJ & MAR(1) LSE & MAR(1) MLEs & iAR(1) & iAR(2) & VAR(1) \\ \hline
 146.89 & 141.82 & 136.69 & 136.00 & 135.72 & 296.62\\
\end{tabular}
\caption{Sum of out-of-sample prediction error
squares of MAR(1) model using three different
estimators and the stacked VAR(1) estimator,
and the total sum of out-of-sample prediction error squares of fitting
univariate AR(1) and AR(2) to each individual time series.}
\label{table:pred_SS}
\end{table}

\begin{figure}[h]
  \centering
  \includegraphics[width=13cm,height=11cm]{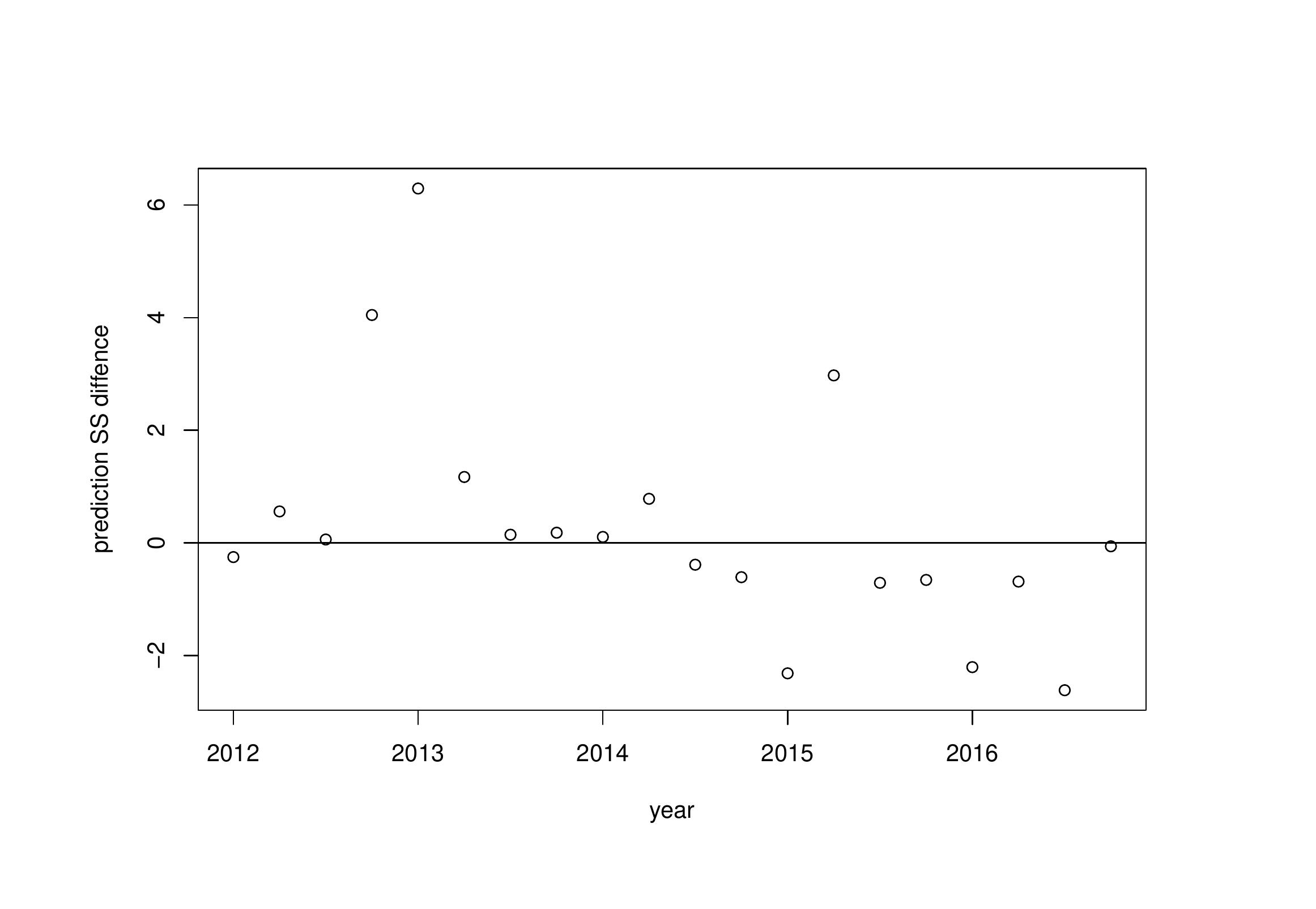}
  \caption{The difference of the sum of prediction error squares between
the MAR(1) model and individual AR(1) model at each quarter.
}
  \label{indicator_pred}
\end{figure}

\section{Conclusion}
\label{sec:con}

We proposed an autoregressive model for matrix-valued time
series in a bilinear form.
It respects the original matrix structure, and provides a much
more parsimonious model, comparing with the direct VAR
approach by stacking the matrix into a long vector. 
Several interpretations of the model, along
with possible extensions are discussed.
Different estimation methods are studied under different
covariance structures of the error matrix. Asymptotic distributions of
the estimators are established, which facilitate the
statistical inferences.

On the other hand, when the matrix observation has large dimensions
itself, our model still involves a large number of parameters,
although much less than that of the corresponding stacked VAR model.
Note that it is natural to have relatively large total
number of parameters, due to the large number of time series involved.
For example, fitting univariate AR(2) models to the $(mn)$ time series
individually  would
require total $2mn$ AR coefficients, while MAR(1) involves $m^2+n^2-1$ AR
coefficients. When $m\sim n$, they use about the same number of parameters.
Also, the structured covariance (\ref{eq:ocov}) involves $m(m+1)/2+n(n+1)/2-1$
parameters while individual AR models involve $mn$ variance parameters,
without considering any correlation among the series. Of course,
regularized estimation approach can be used for MAR(1) model, potentially
shrinking some of the insignificant parameters in the coefficient
matrices to zero, as we have done in a rather {\it ad hoc} way in
Tables~\ref{table:LL} and \ref{table:RR} in the real example.

The impact of dimension $m$ and $n$ on the accuracy of the estimated parameters 
are hidden in the asymptotic variances of the estimators. Of course the larger 
the dimension, the larger the sample size $T$ is required to obtain accurate 
estimates. For very large dimensional matrix time series, \cite{Wang+2018} proposed
a factor model in a bilinear form. The MAR(1) can be used to model the
factor matrix in that of \cite{Wang+2018} to build a dynamic factor model
in matrix form.

There are a number of directions to extend the scope of the proposed model. Sparsity, group sparsity or other structures might be imposed on $\hA$ and $\hB$ to reach a further dimension reduction, so that the model is better suited when both $\hA$ and $\hB$ are of large dimensions. We will also consider MAR models of order larger than one in the future. Furthermore, the idea of MAR can be applied for volatility modeling \citep{engle:1982,bollerslev:1986} as well.

\vspace{0.2in}
\bibliographystyle{apalike}
\bibliography{mybib}

\def\polhk#1{\setbox0=\hbox{#1}{\ooalign{\hidewidth
  \lower1.5ex\hbox{`}\hidewidth\crcr\unhbox0}}}
\begin{thebibliography}{}

\bibitem[Anderson, 1951]{anderson:1951}
Anderson, T.~W. (1951).
\newblock Estimating linear restrictions on regression coefficients for
  multivariate normal distributions.
\newblock {\em Ann. Math. Statistics}, 22:327--351.

\bibitem[Bai and Ng, 2002]{bai:2002}
Bai, J. and Ng, S. (2002).
\newblock Determining the number of factors in approximate factor models.
\newblock {\em Econometrica}, 70(1):191--221.

\bibitem[Basu and Michailidis, 2015]{basu:2015}
Basu, S. and Michailidis, G. (2015).
\newblock Regularized estimation in sparse high-dimensional time series models.
\newblock {\em Ann. Statist.}, 43(4):1535--1567.

\bibitem[Bollerslev, 1986]{bollerslev:1986}
Bollerslev, T. (1986).
\newblock Generalized autoregressive conditional heteroskedasticity.
\newblock {\em J. Econometrics}, 31(3):307--327.

\bibitem[Brockwell and Davis, 1991]{brockwell:1991}
Brockwell, P.~J. and Davis, R.~A. (1991).
\newblock {\em Time series: theory and methods}.
\newblock Springer Series in Statistics. Springer-Verlag, New York, second
  edition.

\bibitem[Cox et~al., 2015]{cox:2015}
Cox, D.~A., Little, J., and O'Shea, D. (2015).
\newblock {\em Ideals, varieties, and algorithms}.
\newblock Undergraduate Texts in Mathematics. Springer, Cham, fourth edition.
\newblock An introduction to computational algebraic geometry and commutative
  algebra.

\bibitem[Davis and Song, 2012]{davis2012noncausal}
Davis, R. and Song, L. (2012).
\newblock Noncausal vector ar processes with application to economic time
  series.
\newblock {\em DP Columbia University}.

\bibitem[{Davis} et~al., 2012]{davis:2012}
{Davis}, R.~A., {Zang}, P., and {Zheng}, T. (2012).
\newblock {Sparse Vector Autoregressive Modeling}.
\newblock {\em ArXiv e-prints}.

\bibitem[Deistler et~al., 1978]{deistler:1978}
Deistler, M., Dunsmuir, W., and Hannan, E.~J. (1978).
\newblock Vector linear time series models: corrections and extensions.
\newblock {\em Adv. in Appl. Probab.}, 10(2):360--372.

\bibitem[Diebold et~al., 2008]{diebold2008global}
Diebold, F.~X., Li, C., and Yue, V.~Z. (2008).
\newblock Global yield curve dynamics and interactions: a dynamic
  nelson--siegel approach.
\newblock {\em Journal of Econometrics}, 146(2):351--363.

\bibitem[Dunsmuir and Hannan, 1976]{dunsmuir:1976}
Dunsmuir, W. and Hannan, E.~J. (1976).
\newblock Vector linear time series models.
\newblock {\em Advances in Appl. Probability}, 8(2):339--364.

\bibitem[Engle, 1982]{engle:1982}
Engle, R.~F. (1982).
\newblock Autoregressive conditional heteroscedasticity with estimates of the
  variance of {U}nited {K}ingdom inflation.
\newblock {\em Econometrica}, 50(4):987--1007.

\bibitem[Fan et~al., 2013]{fan:2013}
Fan, J., Liao, Y., and Mincheva, M. (2013).
\newblock Large covariance estimation by thresholding principal orthogonal
  complements.
\newblock {\em J. R. Stat. Soc. Ser. B. Stat. Methodol.}, 75(4):603--680.

\bibitem[Forni et~al., 2005]{forni:2005}
Forni, M., Hallin, M., Lippi, M., and Reichlin, L. (2005).
\newblock The generalized dynamic factor model: one-sided estimation and
  forecasting.
\newblock {\em J. Amer. Statist. Assoc.}, 100(471):830--840.

\bibitem[Giannone et~al., 2008]{giannone2008nowcasting}
Giannone, D., Reichlin, L., and Small, D. (2008).
\newblock Nowcasting: The real-time informational content of macroeconomic
  data.
\newblock {\em Journal of Monetary Economics}, 55(4):665--676.

\bibitem[Golub and Pereyra, 1973]{golub:1973}
Golub, G.~H. and Pereyra, V. (1973).
\newblock The differentiation of pseudo-inverses and nonlinear least squares
  problems whose variables separate.
\newblock {\em SIAM Journal on Numerical Analysis}, 10(2):413--432.

\bibitem[{Guo} et~al., 2015]{guo:2015}
{Guo}, S., {Wang}, Y., and {Yao}, Q. (2015).
\newblock {High Dimensional and Banded Vector Autoregressions}.
\newblock {\em ArXiv e-prints}.

\bibitem[Hall and Heyde, 1980]{hall:1980}
Hall, P. and Heyde, C.~C. (1980).
\newblock {\em Martingale limit theory and its application}.
\newblock Academic Press Inc. [Harcourt Brace Jovanovich Publishers], New York.
\newblock Probability and Mathematical Statistics.

\bibitem[Hallin and Li{\v{s}}ka, 2011]{hallin2011dynamic}
Hallin, M. and Li{\v{s}}ka, R. (2011).
\newblock Dynamic factors in the presence of blocks.
\newblock {\em Journal of Econometrics}, 163(1):29--41.

\bibitem[Han et~al., 2015]{han:2015}
Han, F., Lu, H., and Liu, H. (2015).
\newblock A direct estimation of high dimensional stationary vector
  autoregressions.
\newblock {\em J. Mach. Learn. Res.}, 16:3115--3150.

\bibitem[Han et~al., 2016]{han:2016}
Han, F., Xu, S., and Liu, H. (2016).
\newblock Rate-optimal estimation of high dimensional time series.
\newblock Technical report, Washington University, Department of Statistics.

\bibitem[Hannan, 1970]{hannan:1970}
Hannan, E.~J. (1970).
\newblock {\em Multiple time series}.
\newblock John Wiley and Sons, Inc., New York-London-Sydney.

\bibitem[Horn and Johnson, 1994]{horn:1994}
Horn, R.~A. and Johnson, C.~R. (1994).
\newblock {\em Topics in matrix analysis}.
\newblock Cambridge University Press, Cambridge.
\newblock Corrected reprint of the 1991 original.

\bibitem[Horn and Johnson, 2012]{horn2012matrix}
Horn, R.~A. and Johnson, C.~R. (2012).
\newblock {\em Matrix analysis}.
\newblock Cambridge university press.

\bibitem[Hosking, 1980]{hosking:1980}
Hosking, J. R.~M. (1980).
\newblock The multivariate portmanteau statistic.
\newblock {\em J. Amer. Statist. Assoc.}, 75(371):602--608.

\bibitem[Hosking, 1981a]{hosking:1981}
Hosking, J. R.~M. (1981a).
\newblock Equivalent forms of the multivariate portmanteau statistic.
\newblock {\em J. Roy. Statist. Soc. Ser. B}, 43(2):261--262.

\bibitem[Hosking, 1981b]{hosking:1981lm}
Hosking, J. R.~M. (1981b).
\newblock Lagrange-multiplier tests of multivariate time-series models.
\newblock {\em J. Roy. Statist. Soc. Ser. B}, 43(2):219--230.

\bibitem[Izenman, 1975]{izenman:1975}
Izenman, A.~J. (1975).
\newblock Reduced-rank regression for the multivariate linear model.
\newblock {\em J. Multivariate Anal.}, 5:248--264.

\bibitem[Kaufman, 1975]{kaufman:1975}
Kaufman, L. (1975).
\newblock A variable projection method for solving separable nonlinear least
  squares problems.
\newblock {\em BIT Numerical Mathematics}, 15(1):49--57.

\bibitem[Kock and Callot, 2015]{kock:2015}
Kock, A.~B. and Callot, L. (2015).
\newblock Oracle inequalities for high dimensional vector autoregressions.
\newblock {\em J. Econometrics}, 186(2):325--344.

\bibitem[Lam and Yao, 2012]{lam:2012}
Lam, C. and Yao, Q. (2012).
\newblock Factor modeling for high-dimensional time series: inference for the
  number of factors.
\newblock {\em Ann. Statist.}, 40(2):694--726.

\bibitem[Lam et~al., 2011]{lam:2011}
Lam, C., Yao, Q., and Bathia, N. (2011).
\newblock Estimation of latent factors for high-dimensional time series.
\newblock {\em Biometrika}, 98(4):901--918.

\bibitem[Lanne and Saikkonen, 2013]{lanne2013noncausal}
Lanne, M. and Saikkonen, P. (2013).
\newblock Noncausal vector autoregression.
\newblock {\em Econometric Theory}, 29(3):447--481.

\bibitem[Li and McLeod, 1981]{li:1981}
Li, W.~K. and McLeod, A.~I. (1981).
\newblock Distribution of the residual autocorrelations in multivariate {ARMA}
  time series models.
\newblock {\em J. Roy. Statist. Soc. Ser. B}, 43(2):231--239.

\bibitem[L{\"u}tkepohl, 2005]{lutkepohl:2005}
L{\"u}tkepohl, H. (2005).
\newblock {\em New introduction to multiple time series analysis}.
\newblock Springer-Verlag, Berlin.

\bibitem[Moench et~al., 2013]{moench2013dynamic}
Moench, E., Ng, S., and Potter, S. (2013).
\newblock Dynamic hierarchical factor models.
\newblock {\em Review of Economics and Statistics}, 95(5):1811--1817.

\bibitem[Nardi and Rinaldo, 2011]{nardi:2011}
Nardi, Y. and Rinaldo, A. (2011).
\newblock Autoregressive process modeling via the {L}asso procedure.
\newblock {\em J. Multivariate Anal.}, 102(3):528--549.

\bibitem[Negahban and Wainwright, 2011]{negahban:2011}
Negahban, S. and Wainwright, M.~J. (2011).
\newblock Estimation of (near) low-rank matrices with noise and
  high-dimensional scaling.
\newblock {\em Ann. Statist.}, 39(2):1069--1097.

\bibitem[{Nicholson} et~al., 2015]{nicholson:2015}
{Nicholson}, W., {Matteson}, D., and {Bien}, J. (2015).
\newblock {VARX-L: Structured Regularization for Large Vector Autoregressions
  with Exogenous Variables}.
\newblock {\em ArXiv e-prints}.

\bibitem[Poskitt and Tremayne, 1982]{poskitt:1982}
Poskitt, D.~S. and Tremayne, A.~R. (1982).
\newblock Diagnostic test for multiple time series models.
\newblock {\em Ann. Statist.}, 10(1):114--120.

\bibitem[Raskutti et~al., 2015]{Raskutti+2015}
Raskutti, G., Yuan, M., and Chen, H. (2015).
\newblock {Convex Regularization for High-Dimensional Multi-Response Tensor
  Regression}.
\newblock Technical report.

\bibitem[{Song} and {Bickel}, 2011]{song:2011b}
{Song}, S. and {Bickel}, P.~J. (2011).
\newblock {Large Vector Auto Regressions}.
\newblock {\em ArXiv e-prints}.

\bibitem[Tiao and Box, 1981]{tiao:1981}
Tiao, G.~C. and Box, G. E.~P. (1981).
\newblock Modeling multiple time series with applications.
\newblock {\em J. Amer. Statist. Assoc.}, 76(376):802--816.

\bibitem[Tsai and Tsay, 2010]{tsai:2010}
Tsai, H. and Tsay, R.~S. (2010).
\newblock Constrained factor models.
\newblock {\em J. Amer. Statist. Assoc.}, 105(492):1593--1605.
\newblock Supplementary materials available online.

\bibitem[Tsay, 2014]{tsay:2014}
Tsay, R.~S. (2014).
\newblock {\em Multivariate time series analysis}.
\newblock Wiley Series in Probability and Statistics. John Wiley \& Sons, Inc.,
  Hoboken, NJ.

\bibitem[Van~Loan, 2000]{vanloan:2000}
Van~Loan, C. (2000).
\newblock The ubiquitous kronecker product.
\newblock {\em Journal of Computational and Applied Mathematics}, 123(1):85 --
  100.
\newblock Numerical Analysis 2000. Vol. III: Linear Algebra.

\bibitem[Van~Loan and Pitsianis, 1993]{vanloan:1993}
Van~Loan, C. and Pitsianis, N. (1993).
\newblock Approximation with kronecker products.
\newblock In Moonen, M. and Golub, G., editors, {\em Linear Algebra for Large
  Scale and Real Time Applications}, pages 293--314. Kluwer Publications,
  Dordrecht.

\bibitem[Wang et~al., 2019]{wang:2018}
Wang, D., Liu, X., and Chen, R. (2019).
\newblock Factor models for matrix-valued high-dimensional time series.
\newblock {\em Journal of Econometrics}, 208(1):231 -- 248.

\bibitem[Wang et~al., 2018]{Wang+2018}
Wang, D., Yang, D., Shen, H., and Zhu, H. (2018).
\newblock {On scalar-on-matrix bilinear regression analysis}.
\newblock Technical report.

\bibitem[Zhao and Leng, 2014]{Zhao+2014a}
Zhao, J. and Leng, C. (2014).
\newblock {Structured lasso for regression with matrix covariates}.
\newblock {\em Statist. Sinica}, 24:799--814.

\bibitem[Zhou et~al., 2013]{Zhou+2013a}
Zhou, H., Li, L., and Zhu, H. (2013).
\newblock {Tensor Regression with Applications in Neuroimaging Data Analysis}.
\newblock {\em J. Amer. Statist. Assoc.}, 108(502):540--552.

\end{thebibliography}

\newpage

\noindent
{\bf \LARGE Appendix: Proofs of the Theorems}

\vspace{0.2in}

We collect the proofs of Proposition~\ref{thm:causal},
Theorem~\ref{thm:clt1}, Theorem~\ref{thm:lse}, Theorem~\ref{thm:mle}, Corollary~\ref{thm:eff} and Corollary~\ref{thm:test} in this
section. 

\newcommand{\hG}{{\h{G}}}

\setcounter{subsection}{0}
\def\thesubsection{A.\arabic{subsection}}

\subsection{Basics}

We begin by listing some basic properties of the Kronecker product and
its relationship with linear matrix equations. Let $M_{m,n}$ be the
set of all $m\times n$ matrices over the field of complex numbers
$\C$. The Kronecker product of $\hC=(c_{ij})\in M_{m,n}$, and
$\hD=(d_{ij})\in M_{p,q}$, denoted by $\hC\otimes\hD$, is defined to be
the block matrix
\begin{equation*}
  \hC\otimes\hD =
  \begin{pmatrix}
    c_{11}\hD & \cdots & c_{1n}\hD \\
    \vdots & \ldots & \vdots \\
    c_{m1}\hD & \cdots & c_{mn}\hD
  \end{pmatrix}\in M_{mp,nq}.
\end{equation*}
In the following proposition, we list some facts regarding the
Kronecker product, which are used in this section at various places
without specific references. Proofs of these facts can be found in
Chapter~4 of \cite{horn:1994}.
\begin{proposition}
  \label{thm:kronecker}
  Let $\hC\in M_{m,n}$, $\hD\in M_{p,q}$, $\hF\in M_{n,k}$,
  $\hG \in M_{q,l}$ and $\hZ\in M_{n,p}$.
  \begin{enumerate}
    \renewcommand{\labelenumi}{(\roman{enumi})}
  \item $(\hC\otimes\hD)'=\hC'\otimes\hD'$.
  \item If both $\hC$ and $\hD$ are invertible square matrices, then
    $\hC\otimes\hD$ is also invertible, and
    $(\hC\otimes\hD)^{-1}=\hC^{-1}\otimes\hD^{-1}$.
  \item $(\hC\otimes\hD)(\hF\otimes\hG)=(\hC\hF)\otimes(\hD\hG)$.
  \item $\vect(\hC\hZ\hD)=(\hD'\otimes\hC)\vect(\hZ)$.
  \item $\mathrm{rank}(\hC\otimes\hD)=\mathrm{rank}(\hD\otimes\hC)=\mathrm{rank}(\hC)\cdot\mathrm{rank}(\hD)$.
  \item Let $\hC\in M_{m,m}$ and $\hD\in M_{n,n}$. Let
    $\{\lambda_1,\lambda_2,\ldots,\lambda_m\}$ be eigenvalues of $\hC$
    (including multiplicities), and $\{\eta_1,\eta_2,\ldots,\eta_n\}$
    be eigenvalues of $\hD$. The $mn$ eigenvalues (including
    multiplicities) of $\hC\otimes\hD$ are
    $\{\lambda_i\eta_j:\;1\leq i\leq m,\,1\leq j\leq n\}$.
  \end{enumerate}
\end{proposition}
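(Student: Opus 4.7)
The plan is to verify the six parts sequentially, since each subsequent claim tends to reuse the earlier ones. Throughout, I would work directly from the block definition of $\hC\otimes\hD$, which makes (i) and (iii) essentially combinatorial bookkeeping rather than substantive arguments.

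First I would dispatch (i) and (iii). For (i), reading block $(i,j)$ of $(\hC\otimes\hD)'$ is literally $c_{ji}\hD'$, matching $\hC'\otimes\hD'$. For (iii), I would compute the $(i,j)$ block of $(\hC\otimes\hD)(\hF\otimes\hG)$, which is $\sum_k (c_{ik}\hD)(f_{kj}\hG)=\bigl(\sum_k c_{ik}f_{kj}\bigr)\hD\hG$, matching the definition of $(\hC\hF)\otimes(\hD\hG)$. With (iii) in hand, (ii) is immediate: $(\hC\otimes\hD)(\hC^{-1}\otimes\hD^{-1})=(\hC\hC^{-1})\otimes(\hD\hD^{-1})=\hI_m\otimes\hI_p=\hI_{mp}$.

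Next I would establish (iv) by columnwise computation. The $j$-th column of $\hC\hZ\hD$ is $\hC\hZ\hD_{\cdot j}=\sum_k d_{kj}\hC\hZ_{\cdot k}$, so stacking gives $\vect(\hC\hZ\hD)$ whose $j$-th block equals $\sum_k d_{kj}(\hC\hZ_{\cdot k})=\bigl(\sum_k d_{kj}\hC\bigr)\vect(\hZ)\big|_{\text{appropriate block}}$; assembling these blocks is exactly the action of the matrix $\hD'\otimes\hC$ on $\vect(\hZ)$. This is a routine but slightly fiddly indexing argument.

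For (v), I would use the rank-factorization viewpoint. Writing $\hC=\hU_1\hV_1'$ and $\hD=\hU_2\hV_2'$ with $\hU_i,\hV_i$ of full column rank equal to $\rk(\hC),\rk(\hD)$ respectively, property (iii) gives $\hC\otimes\hD=(\hU_1\otimes\hU_2)(\hV_1\otimes\hV_2)'$, and one checks $\hU_1\otimes\hU_2$ and $\hV_1\otimes\hV_2$ are of full column rank (e.g., by applying (iii) to a left inverse). Finally, for (vi) I would invoke the Schur decomposition: write $\hC=\hU_1\h{T}_1\hU_1^*$ and $\hD=\hU_2\h{T}_2\hU_2^*$ with $\h{T}_1,\h{T}_2$ upper triangular carrying the eigenvalues on their diagonals. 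Then by (iii),
\begin{equation*}
\hC\otimes\hD=(\hU_1\otimes\hU_2)(\h{T}_1\otimes\h{T}_2)(\hU_1\otimes\hU_2)^*,
\end{equation*}
and $\hU_1\otimes\hU_2$ is unitary by (ii)/(i). It remains to observe that $\h{T}_1\otimes\h{T}_2$ is upper triangular with diagonal entries $\lambda_i\eta_j$, so these are the eigenvalues of $\hC\otimes\hD$ with multiplicity.

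The only genuine obstacle is (vi): if I wanted to avoid Schur (e.g., over a field where it is unavailable) I would have to argue via a density/limit argument from the diagonalizable case, or via the characteristic polynomial. In the complex setting assumed here, the Schur route makes it clean, so the proposition reduces almost entirely to careful block-matrix bookkeeping.
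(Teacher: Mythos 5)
Your proof is correct and complete. The paper itself does not prove this proposition---it simply defers to Chapter~4 of \cite{horn:1994}---and your arguments (block bookkeeping for (i)--(iv), the full-rank factorization for (v), and Schur triangularization combined with the mixed-product property for (vi)) are precisely the standard ones given in that reference, so there is nothing to add beyond noting that the conjugate-transpose analogue of (i) needed to see that $\hU_1\otimes\hU_2$ is unitary in (vi) follows by the same block computation.
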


\begin{proof}[Proof of Proposition~\ref{thm:causal}]
  It is known that the VAR(1) model in \eqref{eq:var1} admits a
  stationary and causal solution if the spectral radius of the
  coefficient matrix $\Phi$ is strictly less than 1 \citep[See for
  example, \S11.3 of ][]{brockwell:1991}.  By
  Proposition~\ref{thm:kronecker}$\sim$(vi), all the eigenvalues of
  $\hB\otimes\hA$ are of the form $\lambda_i\eta_j$, where $\lambda_i$
  and $\eta_j$ are the eigenvalues of $\hA$ and $\hB$ respectively. As
  a consequence, $\rho(\hB\otimes\hA)=\rho(\hA)\cdot\rho(\hB)$. Since
  the MAR(1) model in \eqref{eq:mar1} can be represented as a VAR
  model as given by \eqref{eq:ovar1}, the proposition then follows.
\end{proof}

\subsection{Proof of Theorem~\ref{thm:clt1}}

\newcommand{\hubeta}{\underline{\hbeta}}

\begin{proof}[Proof of Theorem~\ref{thm:clt1}]
  Let $\halpha=\vect(\hA)$ and $\hubeta=\vect(\hB)$. Note that the
  convention that $\|\hA\|_F=1$ is equivalent to $\|\halpha\|=1$. Also note that since $\hbeta$ is used to denote $\vect(\hB')$ in Theorem~\ref{thm:lse}, we use $\hubeta$ here for $\vect(\hB)$. Recall that $\hbeta_1$ is the normalized version of $\hubeta$. The gradient condition of the NKP problem \eqref{eq:nkp} is given by
  \begin{equation}
    \label{eq:nkp_grad}
    \begin{aligned}
      \hat\halpha\hat\hubeta'\hat\hubeta-\tilde\Phi\hat\hubeta&=0 \\
      \hat\hubeta\hat\halpha'\hat\halpha-\tilde\Phi'\hat\halpha&=0.
    \end{aligned}
  \end{equation}
  Recall that we require $\|\hat\hA_1\|_F=1$, so we similarly also
  require that the solution of the NKP problem satisfies
  $\|\hat\halpha\|=1$. Since both $\|\halpha\|=1$ and
  $\|\hat\halpha\|=1$ it follows that
  $(\hat\halpha-\halpha)'\halpha=o_P(T^{-1/2})$. Replacing
  $\tilde\Phi$ by $\halpha\hubeta'+(\tilde\Phi-\halpha\hubeta')$ in the
  gradient conditions, we have
  \begin{equation}
    \label{eq:nkp_grad1}
    \begin{aligned}
      (\hat\halpha-\halpha)\hubeta'\hubeta+\halpha(\hat\hubeta-\hubeta)'\hubeta
      & = (\tilde\Phi-\halpha\hubeta')\hubeta + o_P(T^{-1/2})\\
      \hat\hubeta-\hubeta&=(\tilde\Phi-\halpha\hubeta')'\halpha + o_P(T^{-1/2}).
    \end{aligned}
  \end{equation}
  It follows that
  \begin{equation}
    \label{eq:3}
    \begin{pmatrix}
      \vect(\hat\hA_1-\hA) \\
      \vect(\hat\hB_1-\hB)
    \end{pmatrix} =\hV_0\mathrm{vec}(\tilde\Phi-\halpha\hubeta') + o_P(T^{-1/2}),
  \end{equation}
  and
  \begin{equation}
    \label{eq:4}
    \hat\halpha\hat\hubeta' - \halpha\hubeta' =
    (\tilde\Phi-\halpha\hubeta')\hbeta_1\hbeta_1'+\alpha\alpha'(\tilde\Phi-\halpha\hubeta')
    -\alpha\alpha'(\tilde\Phi-\halpha\hubeta')\hbeta_1\hbeta_1' + o_P(T^{-1/2}).
  \end{equation}
  The first central limit theorem stated in Theorem~\ref{thm:clt1} is
  an immediate consequence of \eqref{eq:3}. Taking vectorization on
  both sides of \eqref{eq:4}, we have
  \begin{equation*}
    \hat\hubeta\otimes\hat\halpha - \hubeta\otimes\halpha
    = \hV_1\mathrm{vec}(\tilde\Phi-\halpha\hubeta') + o_P(T^{-1/2}),
  \end{equation*}
  and the second central limit theorem follows.
\end{proof}

\subsection{Proof of Theorem~\ref{thm:lse}}

To prove Theorem~\ref{thm:lse}, we first state and prove the
following lemma.
\begin{lemma}
  \label{thm:lemma1}
  Consider the VAR(1) representation of \eqref{eq:ovar1}, and let
  $\Phi=\hB\otimes\hA$. Assume the conditions of
  Theorem~\ref{thm:lse}. Then for any sequence $\{c_T\}$ such that $c_T\rightarrow\infty$,
  \begin{equation}
    \label{eq:diff}
    P\left[\inf_{\sqrt{T}\|\bar\Phi-\Phi\|_F\geq c_T}\sum_{t=2}^T
      \left\|\vect(\hX_t)-\bar\Phi\vect(\hX_{t-1})\right\|^2
      \leq\sum_{t=2}^T\left\|\vect(\hE_t)\right\|^2\right] \rightarrow 0.
  \end{equation}
\end{lemma}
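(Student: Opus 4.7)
The plan is to expand the squared residuals using the true model, and then show that when $\sqrt{T}\|\bar\Phi-\Phi\|_F$ is large, a dominant quadratic form forces the residual sum of squares to exceed $\sum\|\vect(\hE_t)\|^2$. Write $\Delta:=\bar\Phi-\Phi$, and $\hx_t:=\vect(\hX_t)$, $\he_t:=\vect(\hE_t)$. From the VAR(1) representation $\hx_t=\Phi\hx_{t-1}+\he_t$, we have
\begin{equation*}
  \hx_t-\bar\Phi\hx_{t-1}=\he_t-\Delta\hx_{t-1},
\end{equation*}
hence expanding the square and summing over $t=2,\ldots,T$ gives
\begin{equation*}
  \sum_{t=2}^T\|\hx_t-\bar\Phi\hx_{t-1}\|^2-\sum_{t=2}^T\|\he_t\|^2
  = \tr\!\left(\Delta\,\hS_T\,\Delta'\right) - 2\,\tr\!\left(\Delta\,\hC_T\right),
\end{equation*}
where $\hS_T:=\sum_{t=2}^T\hx_{t-1}\hx_{t-1}'$ and $\hC_T:=\sum_{t=2}^T\hx_{t-1}\he_t'$. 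The lemma reduces to showing that, uniformly over $\sqrt{T}\|\Delta\|_F\geq c_T$, the first term dominates the second with probability tending to one.

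Next I would control the two terms separately. For the quadratic term, by the causality condition $\rho(\hA)\rho(\hB)<1$ and Proposition~\ref{thm:causal}, the stacked process $\{\hx_t\}$ is stationary and ergodic with finite second moments, so $T^{-1}\hS_T\to\Gamma_0$ almost surely, and $\Gamma_0$ is nonsingular because $\Sigma$ is (one can verify this from the summation form \eqref{eq:autocov}, since the $k=0$ term is already positive definite). Consequently, on an event of probability tending to one, $T^{-1}\hS_T\succeq \tfrac{1}{2}\lambda_{\min}(\Gamma_0)\,\hI$, giving
\begin{equation*}
  \tr(\Delta\,\hS_T\,\Delta')\;\geq\;\tfrac{1}{2}\lambda_{\min}(\Gamma_0)\,T\,\|\Delta\|_F^2.
\end{equation*}
For the cross term, note that $\he_t$ is an iid sequence with finite second moments and $\hx_{t-1}\in\sigma(\he_s:s\leq t-1)$, so each entry of $\hC_T$ is a square-integrable martingale with $\E\|\hC_T\|_F^2=O(T)$. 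Hence $\|\hC_T\|_F=O_P(\sqrt{T})$, and by Cauchy--Schwarz,
\begin{equation*}
  \left|\tr(\Delta\hC_T)\right|\;\leq\;\|\Delta\|_F\cdot\|\hC_T\|_F\;=\;O_P\!\bigl(\sqrt{T}\,\|\Delta\|_F\bigr).
\end{equation*}

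Combining these bounds, the difference in \eqref{eq:diff} is at least
\begin{equation*}
  T\,\|\Delta\|_F^2\left[\tfrac{1}{2}\lambda_{\min}(\Gamma_0)-\frac{O_P(1)}{\sqrt{T}\,\|\Delta\|_F}\right],
\end{equation*}
and on the set $\{\sqrt{T}\|\Delta\|_F\geq c_T\}$ with $c_T\to\infty$, the bracketed quantity is strictly positive with probability tending to one, uniformly in $\Delta$. This gives the desired conclusion. The main obstacle is making the control of $\hC_T$ genuinely uniform in $\Delta$: since the bound on $|\tr(\Delta\hC_T)|$ factors as $\|\Delta\|_F\cdot\|\hC_T\|_F$ and the latter is a single random variable not depending on $\Delta$, uniformity is automatic; the only delicate point is checking that the $O_P(\sqrt{T})$ martingale bound holds under the mild iid/finite-second-moment assumption on $\hE_t$, which follows from orthogonality of martingale increments rather than a full CLT.
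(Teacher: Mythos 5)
Your proof is correct, and while it starts from the same decomposition as the paper's proof — writing the excess residual sum of squares as $\tr(\Delta\,\hS_T\,\Delta')-2\tr(\Delta\,\hC_T)$ with $\Delta=\bar\Phi-\Phi$ — it handles the uniformity over $\Delta$ quite differently, and more directly. The paper first establishes the inequality only on the boundary sphere $\sqrt{T}\|\Delta\|_F=c_T'$ for an intermediate sequence $c_T'\leq c_T$ (chosen so that an almost-sure uniform ergodic limit of the quadratic form over shrinking balls can be upgraded to an in-probability statement), and then extends to the whole exterior region $\sqrt{T}\|\Delta\|_F\geq c_T$ by invoking convexity of $\bar\Phi\mapsto\sum_t\|\vect(\hX_t)-\bar\Phi\vect(\hX_{t-1})\|^2$. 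You avoid both the intermediate sequence and the convexity step: your lower bound $\tr(\Delta\,\hS_T\,\Delta')\geq\lambda_{\min}(\hS_T)\|\Delta\|_F^2\geq\tfrac12\lambda_{\min}(\Gamma_0)\,T\,\|\Delta\|_F^2$ is a single spectral event that holds simultaneously for all $\Delta$, and your Cauchy--Schwarz bound $|\tr(\Delta\hC_T)|\leq\|\Delta\|_F\|\hC_T\|_F$ with $\|\hC_T\|_F=O_P(\sqrt T)$ factors the randomness out of $\Delta$ entirely, so the bracketed quantity $\tfrac12\lambda_{\min}(\Gamma_0)-O_P(1)/(\sqrt T\|\Delta\|_F)$ is positive uniformly on $\{\sqrt T\|\Delta\|_F\geq c_T\}$ once $c_T\to\infty$ (the critical radius being determined by the smallest $\|\Delta\|_F$ in the set, since the bound is an upward parabola in $\|\Delta\|_F$). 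What the paper's route buys is robustness in settings where one only has control of the quadratic form on a bounded region and must lean on convexity to reach the exterior; what your route buys is a shorter, more elementary argument that needs only the ergodic theorem for $\hS_T/T$, positive definiteness of $\Gamma_0$ (which indeed follows from $\Gamma_0\succeq\Sigma\succ 0$ in \eqref{eq:autocov}), and the second-moment martingale orthogonality bound for $\hC_T$. Both are valid under the stated hypotheses.
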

\begin{proof}
  First of all, by the ergodic theorem
  \begin{equation*}
    \frac{1}{T}\sum_{t=2}^T\vect(\hX_{t-1})\vect(\hX_{t-1})' \rightarrow \Gamma_0\quad\hbox{a.s.}
  \end{equation*}
  It follows that for any constant $c>0$,
  \begin{equation}
    \label{eq:aslim1}
  \begin{aligned}
    \sup_{\sqrt{T}\|\bar\Phi-\Phi\|_F\leq c}
    & \left|\sum_{t=2}^T\tr\left[(\bar\Phi-\Phi)\vect(\hX_{t-1})\vect(\hX_{t-1})'(\bar\Phi-\Phi)'\right]\right.\\
    & \left.\phantom{\sum_{t=2}^T\!\!\!\!}-T\cdot\tr\left[(\bar\Phi-\Phi)\Gamma_0(\bar\Phi-\Phi)'\right]\right| \rightarrow 0\quad\hbox{a.s.}
  \end{aligned}    
  \end{equation}
  In \eqref{eq:aslim1}, the superme is taken over $\bar\Phi$. As a
  consequence of \eqref{eq:aslim1}, there exists a sequence $\{c_T'\}$
  such that $c_T'\rightarrow \infty$, $c_T'\leq c_T$, and
  \begin{equation}
    \label{eq:ip1}
  \begin{aligned}
    \sup_{\sqrt{T}\|\bar\Phi-\Phi\|_F\leq c_T'}
    &\left|\sum_{t=2}^T\tr\left[(\bar\Phi-\Phi)\vect(\hX_{t-1})\vect(\hX_{t-1})'(\bar\Phi-\Phi)'\right]\right.\\
    &\left.\phantom{\sum_{t=2}^T\!\!\!\!}-T\cdot\tr\left[(\bar\Phi-\Phi)\Gamma_0(\bar\Phi-\Phi)'\right]\right| \rightarrow 0\quad\hbox{in probability}.
  \end{aligned}    
  \end{equation}
  Now we write
  \begin{equation}
    \label{eq:diff1}
    \begin{aligned}
      &\sum_{t=2}^T\left\|\vect(\hX_t)-\bar\Phi\vect(\hX_{t-1})\right\|^2 -
      \sum_{t=2}^T\left\|\vect(\hE_t)\right\|^2 \\
      =&-2\sum_{t=2}^T\tr\left[(\bar\Phi-\Phi)\vect(\hX_{t-1})\vect(\hE_t)'\right]
      +\sum_{t=2}^T\tr\left[(\bar\Phi-\Phi)\vect(\hX_{t-1})\vect(\hX_{t-1})'(\bar\Phi-\Phi)'\right].
    \end{aligned}
  \end{equation}
  On the boundary set $\sqrt{T}\|\bar\Phi-\Phi\|_F=c_T'$, by
  calculating the variance, we know that
  \begin{equation}
    \label{eq:martingale}
    \sum_{t=2}^T\tr\left[(\bar\Phi-\Phi)\vect(\hX_{t-1})\vect(\hE_t)'\right] = O_P(c_T').
  \end{equation}
  On the other hand, on the boundary set
  $\sqrt{T}\|\bar\Phi-\Phi\|_F=c_T'$,
  \begin{equation}
    \label{eq:1}
    T\cdot\tr\left[(\bar\Phi-\Phi)\Gamma_0(\bar\Phi-\Phi)'\right]
    \geq{\lambda_{\min}}(\Gamma_0)(c_T')^2,
  \end{equation}
  where ${\lambda_{\min}}(\Gamma_0)$ is the minimum eigenvalue of
  $\Gamma_0$, which is strictly positive under the condition that
  $\hA$, $\hB$ and $\Sigma$ are nonsingular. Combining
  \eqref{eq:ip1}$\sim$\eqref{eq:1}, and with fact that
  $c_T'\rightarrow\infty$, we have
  \begin{equation}
    \label{eq:5}
    P\left[\inf_{\sqrt{T}\|\bar\Phi-\Phi\|_F= c_T'}\sum_{t=2}^T
      \left\|\vect(\hX_t)-\bar\Phi\vect(\hX_{t-1})\right\|^2
      \leq\sum_{t=2}^T\left\|\vect(\hE_t)\right\|^2\right] \rightarrow 0.
  \end{equation}
  Observe that
  $\sum_{t=2}^T\left\|\vect(\hX_t)-\bar\Phi\vect(\hX_{t-1})\right\|^2$
  is a convex function of $\bar\Phi$, so \eqref{eq:diff} is implied by
  \eqref{eq:5} and the convexity.
\end{proof}

Now we are ready to give the proof of Theorem~\ref{thm:lse}.
\begin{proof}[Proof of Theorem~\ref{thm:lse}]
  Let
  $\mathbb{S}=\{\hC:\,\hC \hbox{ is a } m\times m \hbox{ matrix, and }
  \|\hC\|_F=1\}$. Let $\{c_T\}$ be any sequence such that
  $c_T\rightarrow\infty$, and $c_T/\sqrt{T}\rightarrow 0$. By the
  conditions that $\hA$ and $\hB$ are nonsingular, and
  $\hA\in\mathbb{S}$, it can be show that if $\bar\hA$ and $\bar\hB$
  are such that $\bar\hA\in\mathbb{S}$, and
  $T\|\bar\hA-\hA\|_F^2+T\|\bar\hB-\hB\|_F^2\geq c_T^2$, then
  $\|\bar\hB\otimes\bar\hA-\hB\otimes\hA\|_F\geq C\cdot c_T$, where
  $C$ is a constant determined by $\hA$ and $\hB$. By
  Lemma~\ref{thm:lemma1}, we have
  \begin{equation*}
    P\left[\min_{T\|\bar\hA-\hA\|_F^2+T\|\bar\hB-\hB\|_F^2\geq c_T^2}\sum_{t=2}^T
      \left\|\vect(\hX_t)-(\bar\hB\otimes\bar\hA)\vect(\hX_{t-1})\right\|^2
      \leq\sum_{t=2}^T\left\|\vect(\hE_t)\right\|^2\right] \rightarrow 0,
  \end{equation*}
  with the implicit requirement that $\bar\hA\in\mathbb{S}$. It
  follows that
  \begin{equation}
    \label{eq:2}
    P\left[T\|\hat\hA-\hA\|_F^2+T\|\hat\hB-\hB\|_F^2\leq c_T^2\right] \rightarrow 1,
  \end{equation}
  also with the implicit requirement that $\hat\hA\in\mathbb{S}$.
  Since \eqref{eq:2} holds for any sequence $\{c_T\}$ such that
  $c_T\rightarrow\infty$, and $c_T/\sqrt{T}\rightarrow 0$, we have
  \begin{equation*}
    \hat\hA= \hA+O_P(T^{-1/2}),\quad\hbox{and}\quad\hat\hB= \hB+O_P(T^{-1/2}).
  \end{equation*}
  
  We now repeat the gradient condition \eqref{eq:lse_grad} here:
  \begin{equation}
    \label{eq:lse_grad1}
    \begin{aligned}
      \sum_t\hat\hA_2\hX_{t-1}\hat\hB_2'\hat\hB_2\hX_{t-1}'-\sum_t\hX_t\hat\hB_2\hX_{t-1}'&=\hzero \\
      \sum_t\hat\hB_2\hX_{t-1}'\hat\hA_2'\hat\hA_2\hX_{t-1}-\sum_t\hX_t'\hat\hA_2\hX_{t-1}&=\hzero.
    \end{aligned}
  \end{equation}
  Replacing each $\hX_t$ by $\hA\hX_{t-1}\hB'+\hE_t$ in
  \eqref{eq:lse_grad1}, we have
  \begin{equation*}
    \label{eq:lse_grad2}
    \begin{aligned}
      \sum_t(\hat\hA_2-\hA)\hX_{t-1}\hB'\hB\hX_{t-1}'+  \sum_t\hA\hX_{t-1}(\hat\hB_2-\hB)'\hB\hX_{t-1}'
      &=\sum_t\hE_t\hB\hX_{t-1}' + o_P(\sqrt{T})\\
      \sum_t\hX_{t-1}'\hA'(\hat\hA_2-\hA)\hX_{t-1}\hB'+ \sum_t\hX_{t-1}'\hA'\hA\hX_{t-1}(\hat\hB_2-\hB)'
      &=\sum_t\hX_{t-1}'\hA'\hE_t + o_P(\sqrt{T}).
    \end{aligned}
  \end{equation*}
  Taking vectorization on both sides, we have
  \begin{equation*}
    \begin{aligned}
      &\begin{pmatrix}
      \left(\sum_t\hX_{t-1}\hB'\hB\hX_{t-1}'\right)\otimes \hI & \sum_t(\hX_{t-1}\hB')\otimes(\hA\hX_{t-1}) \\
      \sum_t(\hB\hX_{t-1}')\otimes(\hX_{t-1}'\hA') & \hI\otimes\left( \sum_t\hX_{t-1}'\hA'\hA\hX_{t-1} \right)
    \end{pmatrix}
    \begin{pmatrix}
      \vect(\hat\hA_2-\hA) \\
      \vect(\hat\hB_2'-\hB')
    \end{pmatrix} \\
    =&\sum_t
    \begin{pmatrix}
      (\hX_{t-1}\hB')\otimes \hI \\
      \hI\otimes (\hX_{t-1}'\hA')
    \end{pmatrix}
    \vect(\hE_t)+ o_P(\sqrt{T}),
    \end{aligned}
  \end{equation*}
  which can be rewritten as
  \begin{equation}
    \label{eq:lse_grad3}
    \left(\sum_t\hW_{t-1}\hW_{t-1}'\right)
    \begin{pmatrix}
      \vect(\hat\hA_2-\hA) \\
      \vect(\hat\hB_2'-\hB')
    \end{pmatrix}
    =\sum_t\hW_{t-1}\vect(\hE_t)+ o_P(\sqrt{T}).
  \end{equation}
  By the ergodic theorem as $\hX_t$ is strictly stationary with
  i.i.d. innovations under the conditions, we have
  \begin{equation*}
    \frac{1}{T}\sum_t\hW_{t-1}\hW_{t-1}'\rightarrow \E(\hW_t\hW_t'),\quad\hbox{a.s.}
  \end{equation*}
  Observe that $\E(\hW_t\hW_t')$ is not a full rank matrix, because
  $\E(\hW_t\hW_t')(\halpha',-\hbeta')'=\hzero$. On the other hand,
  since we require $\|\hA\|_F=1$ and $\|\hat\hA_2\|_F=1$, it holds
  that $\halpha'(\vect(\hat\hA_2)-\halpha)=o_P(T^{-1/2})$.  and
  consequently
  \begin{equation*}
   \hH \begin{pmatrix}
      \vect(\hat\hA_2-\hA) \\
      \vect(\hat\hB_2'-\hB')
    \end{pmatrix} = \frac{1}{T}\sum_t\hW_{t-1}\vect(\hE_t)+ o_P(T^{-1/2}),
  \end{equation*}
  where $\hH:=\E(\hW_t\hW_t')+\hgamma\hgamma'$.  By martingale central
  limit theorem \citep{hall:1980}
  \begin{equation*}
    \sum_t\hW_{t-1}\vect(\hE_t)\Rightarrow N(\hzero, \E(\hW_t\Sigma\hW_t')).
  \end{equation*}
  It follows that
  \begin{equation*}
    \sqrt{T}\begin{pmatrix}
      \vect(\hat\hA_2-\hA) \\
      \vect(\hat\hB_2'-\hB')
    \end{pmatrix}
    \Rightarrow N(\hzero, \Xi_2),
  \end{equation*}
  where $\Xi_2 := \hH^{-1}\E(\hW_t\Sigma\hW_t')\hH^{-1}$.
  Furthermore, noting that
  \begin{eqnarray*}
 \lefteqn{\vect(\hat{\hB}_2')\otimes\vect(\hat{\hA}_2)
          -\vect(\hB')\otimes\vect(\hA)} \\
 &=& \vect(\hat\hB_2'-\hB')\otimes\halpha + \hbeta\otimes\vect(\hat\hA_2-\hA)\\
 &=& \hV\begin{pmatrix}
      \vect(\hat\hA_2-\hA) \\
      \vect(\hat\hB_2'-\hB')
    \end{pmatrix}
    + o_P(T^{-1/2}),
  \end{eqnarray*}
  where $\hV:=[\hbeta\otimes\hI,\hI\otimes\halpha]$, the statement
about $\hB\otimes\hA$ in the theorem follows. The proof is
  complete.
\end{proof}

\subsection{Proof of Theorem~\ref{thm:mle}}

To prove Theorem~\ref{thm:mle}, we first list some properties of the function:
\begin{equation}
    \label{eq:pres_lik}
    h(\Omega,\hS)=-\log|\Omega|+\tr(\Omega \hS),
\end{equation}
where both $\Omega$ and $\hS$ are positive definite matrices. The first property is adapted from Theorem~7.6.6 of \cite{horn2012matrix}; and the second one can be proved by straightforward arguments, so we skip the proof.
\begin{proposition}
  \label{prop:pres_lik}
  Assume both $\Omega$ and $\hS$ are positive definite matrices of the same dimension.
  \begin{enumerate}
      \item [(i)] Fix $\hS$, the function $h(\Omega,\hS)$ is convex in $\Omega$ over the cone of positive definite matrices.
      \item [(ii)] Fix $\hS$, the second order Taylor expansion of $h(\Omega,\hS)$ around $\Omega$ is given by
      \begin{equation*}
          h(\bar\Omega,\hS) \approx h(\Omega,\hS) - \tr\left[(\Omega^{-1}-\hS)(\bar\Omega-\Omega)\right] +\tfrac{1}{2}[\vect(\bar\Omega-\Omega)]'(\Omega^{-1}\otimes\Omega^{-1})\vect(\bar\Omega-\Omega).
      \end{equation*}
  \end{enumerate}
\end{proposition}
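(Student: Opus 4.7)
The plan is to prove part (ii) first by direct matrix calculus, then obtain part (i) as a corollary since a twice continuously differentiable function on a convex open set with positive semidefinite Hessian throughout is convex.

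For part (ii), I would write $h(\bar\Omega,\hS)=-\log|\bar\Omega|+\tr(\bar\Omega\hS)$ and expand each piece separately. The linear term $\tr(\bar\Omega\hS)$ contributes $\tr(\hS(\bar\Omega-\Omega))$ exactly, with no higher-order remainder. For the logarithmic term, I would use the standard identity $\mathrm{d}\log|\Omega|=\tr(\Omega^{-1}\mathrm{d}\Omega)$ together with $\mathrm{d}(\Omega^{-1})=-\Omega^{-1}(\mathrm{d}\Omega)\Omega^{-1}$ to compute the second differential. Setting $\Delta:=\bar\Omega-\Omega$, the first-order term is $-\tr(\Omega^{-1}\Delta)$ and the second-order term is
\begin{equation*}
\tfrac{1}{2}\tr\bigl(\Omega^{-1}\Delta\,\Omega^{-1}\Delta\bigr).
\end{equation*}
Combining the two pieces gives the first-order term $-\tr[(\Omega^{-1}-\hS)(\bar\Omega-\Omega)]$ stated in the proposition. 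To convert the quadratic term into the claimed Kronecker form, I would apply the standard vec identity $\tr(\hC'\hA\hC\hB)=[\vect(\hC)]'(\hB'\otimes\hA)\vect(\hC)$, using the symmetry of $\Delta$, $\Omega^{-1}$ (so that $\Delta'=\Delta$ and $(\Omega^{-1})'=\Omega^{-1}$), which yields $[\vect(\Delta)]'(\Omega^{-1}\otimes\Omega^{-1})\vect(\Delta)$, exactly the claimed expression.

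For part (i), I would argue that the Hessian of $h(\cdot,\hS)$ at any positive definite $\Omega$ is given by the matrix $\Omega^{-1}\otimes\Omega^{-1}$ (acting on $\vect$ of symmetric perturbations), as computed in part (ii). Since $\Omega^{-1}$ is positive definite whenever $\Omega$ is, Proposition~\ref{thm:kronecker}(vi) implies that all eigenvalues of $\Omega^{-1}\otimes\Omega^{-1}$ are products of pairs of positive eigenvalues of $\Omega^{-1}$, hence strictly positive. Thus the Hessian is positive definite at every $\Omega$ in the positive definite cone, which is an open convex set in the space of symmetric matrices; a standard calculus fact then gives convexity of $h(\cdot,\hS)$.

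I do not anticipate a serious obstacle. The only step requiring a bit of care is the conversion of $\tr(\Omega^{-1}\Delta\,\Omega^{-1}\Delta)$ into the Kronecker quadratic form, because one must invoke the symmetry of $\Delta$ (guaranteed since both $\bar\Omega$ and $\Omega$ are symmetric matrices in the domain of $h$) to line up the vec identity correctly. Everything else is textbook matrix calculus together with basic properties of the Kronecker product already collected in Proposition~\ref{thm:kronecker}.
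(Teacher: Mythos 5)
Your proof is correct. Note, however, that the paper does not actually supply a proof of this proposition: it cites Theorem~7.6.6 of Horn and Johnson for part (i) (the classical concavity of $\log\det$ on the positive definite cone, usually proved by restricting to a line $\Omega+t\Delta$ and reducing to concavity of $\sum_i\log(1+t\mu_i)$) and dismisses part (ii) as "straightforward arguments." Your part (ii) is exactly the straightforward argument the authors have in mind, and your conversion of $\tr(\Omega^{-1}\Delta\,\Omega^{-1}\Delta)$ to the Kronecker quadratic form via the symmetry of $\Delta$ and $\Omega^{-1}$ is the one point that genuinely needs care; you handle it correctly. Where you depart from the paper is in deducing (i) from the Hessian computed in (ii) rather than invoking the textbook concavity result. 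This buys a self-contained and unified treatment (and in fact yields strict convexity, since $\vect(\Delta)'(\Omega^{-1}\otimes\Omega^{-1})\vect(\Delta)=\|\Omega^{-1/2}\Delta\,\Omega^{-1/2}\|_F^2>0$ for $\Delta\neq 0$), at the cost of having to note that $h(\cdot,\hS)$ is $C^2$ on the open convex cone of positive definite matrices and that the Hessian acts on symmetric perturbations, which you do. One small remark for context: in the proof of Theorem~3 the paper actually uses the expansion of (ii) in exact Lagrange form, with the Hessian evaluated at an intermediate point $\tilde\Omega$, so if you wanted your proposition to support that use you would state the remainder explicitly rather than writing "$\approx$"; but as a proof of the proposition as stated, nothing is missing.
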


\begin{proof}[Proof of Theorem~\ref{thm:mle}]
  To avoid the unidentifiability regarding $\Sigma_r$ and $\Sigma_c$, we make the convention that $\|\Sigma_r\|_F=1$. We will only prove that $\hat\hA_3=\hA+O_P(T^{-1/2})$, $\hat\hB_3=\hB+O_P(T^{-1/2})$, $\hat\Sigma_c=\Sigma_c+o_P(1)$, and $\hat\Sigma_r=\Sigma_r+o_P(1)$; and omit the rest of the proof, which is very similar with that of
  Theorem~\ref{thm:lse}.
  
  We first set up the notations for the proof. The matrices $\hA$, $\hB$, $\Phi:=\hB\otimes\hA$, $\Sigma_r$, $\Sigma_c$ and $\Sigma:=\Sigma_c\otimes\Sigma_r$ are used for the true parameters. We use $\hat\hA_3$, $\hat\hB_3$, $\hat\Phi_3=\hat\hB_3\otimes\hat\hA_3$, $\hat\Sigma_r$, $\hat\Sigma_c$, and $\hat\Sigma=\hat\Sigma_c\otimes\hat\Sigma_r$ to denote the MLE under \eqref{eq:loglik}. By the invariance property of MLE, finding the MLE of the covariance matrix $\Sigma=\Sigma_c\otimes\Sigma_r$ is equivalent as finding the MLE of the precision matrix $\Omega:=\Sigma^{-1}=\Omega_c\otimes\Omega_r$, where $\Omega_r=\Sigma_r^{-1}$ and $\Omega_c=\Sigma_c^{-1}$. Again $\hat\Omega$, $\hat\Omega_r$ and $\hat\Omega_c$ will denote the corresponding MLE under \eqref{eq:loglik}. Recall the definition of $\h{\mathcal{X}}$ and $\h{\mathcal{Y}}$ in \eqref{eq:design.mat}. For the unrestriced VAR(1) model \eqref{eq:var1} for $\vect(\hX_t)$, its log likelihood at the parameters $(\bar\Phi,\bar\Sigma)$, is 
  \begin{equation}
      \label{eq:var_loglik}
      \ell(\bar\Phi,\bar\Sigma)=-\tfrac{T-1}{2}\cdot h(\bar\Omega,\hS(\bar\Phi)),
  \end{equation}
  where $\bar\Omega:=\bar\Sigma^{-1}$, and $\hS(\bar\Phi):=(\h{\mathcal{Y}}-\bar\Phi\h{\mathcal{X}})(\h{\mathcal{Y}}-\bar\Phi\h{\mathcal{X}})'/(T-1)$. Let $\check\Phi:=\h{\mathcal{Y}}\h{\mathcal{X}}'(\h{\mathcal{X}}\h{\mathcal{X}}')^{-1}$, and $\check\hS:=\hS(\check\Phi)=(\h{\mathcal{Y}}-\check\Phi\h{\mathcal{X}})(\h{\mathcal{Y}}-\check\Phi\h{\mathcal{X}})'/(T-1)$. Note that $\hat\Phi$ and $\hat\hS$ are MLE for the unrestricted VAR(1) model \eqref{eq:var1}. 
  
  For a given $\bar\Omega$, the function $h(\bar\Omega,\bar\hS)$ is minimized at $\bar\hS=\check\hS$, with the minimum value $h(\bar\Omega,\check\hS)$.
  Let $\check\Omega:=\check\hS^{-1}$ be the MLE of $\Omega$ under \eqref{eq:var1}. We now prove that for any constant $c>0$
  \begin{equation}
      \label{eq:sigma_consistency}
      P\left[\inf_{\|\bar\Omega-\Omega\|_F\geq c}h(\bar\Omega,\check\hS)\leq h(\Omega,\hS(\Phi))\right]\rightarrow 0,
  \end{equation}
  which implies that $\hat\Sigma_c$ and $\hat\Sigma_r$ are consistent for $\Sigma_c$ and $\Sigma_r$.
  
  By Proposition~\ref{prop:pres_lik}, the multivariate Taylor expansion of the function $h(\bar\Omega,\check\hS)$ around $\check\Omega$ gives
  \begin{equation*}
      h(\bar\Omega,\check\hS)=h(\check\Omega,\check\hS)+\tfrac{1}{2}[\vect(\bar\Omega-\check\Omega)]'(\tilde\Omega^{-1}\otimes\tilde\Omega^{-1})\vect(\bar\Omega-\check\Omega),
  \end{equation*}
  where $\tilde\Omega=\check\Omega+\theta(\bar\Omega-\check\Omega)$ for some $0<\theta<1$. By the ergodic theorem, $\hS(\Phi)\stackrel{\hbox{\footnotesize a.s.}}{\longrightarrow}\Sigma$, and $\check\hS\stackrel{\hbox{\footnotesize a.s.}}{\longrightarrow}\Sigma$, and consequently $\check\Omega\stackrel{\hbox{\footnotesize a.s.}}{\longrightarrow}\Omega$. 
  It follows that for any $c>0$ that is small enough, both the following two events hold with probability approaching one:
  \begin{align*}
      & [\|\bar\Omega-\check\Omega\|\geq c/2 \hbox{ for all } \bar\Omega \hbox{ on the circle } \|\bar\Omega-\Omega\|_F=c],\\
      & [\lambda_{\min}(\tilde\Omega^{-1}\otimes\tilde\Omega^{-1})>\tfrac{1}{2}\lambda_{\min}^2(\Omega^{-1}) \hbox{ for all } \bar\Omega \hbox{ on the circle } \|\bar\Omega-\Omega\|_F=c],
  \end{align*}
  where $\lambda_{\min}(\cdot)$ denotes the minimum eigenvalue of a symmetric matrix. On the intersection of these two events, 
  \begin{equation*}
      \tfrac{1}{2}\vect(\bar\Omega-\check\Omega)(\tilde\Omega^{-1}\otimes\tilde\Omega^{-1})[\vect(\bar\Omega-\check\Omega)]'\geq \tfrac{1}{16}\lambda_{\min}^2(\Omega^{-1})\cdot c^2.
  \end{equation*}
  Since $h(\Omega,\hS(\Phi))\stackrel{\hbox{\footnotesize a.s.}}{\longrightarrow}h(\Omega,\Sigma)$ and $h(\check\Omega,\check\hS)\stackrel{\hbox{\footnotesize a.s.}}{\longrightarrow}h(\Omega,\Sigma)$, it follows that for any $c>0$ that is small enough
  \begin{equation*}
      P\left[\sup_{\|\bar\Omega-\Omega\|_F= c}h(\bar\Omega,\check\hS)\leq h(\Omega,\hS(\Phi))\right]\rightarrow 0.
  \end{equation*}
  Therefore, \eqref{eq:sigma_consistency} holds by the convexity of $h(\bar\Omega,\check\hS)$, when viewed as a function of $\bar\Omega$ (see Proposition~\ref{prop:pres_lik}).

 \medskip 
  We now prove that $\hat\hA_3=\hA+O_P(T^{-1/2})$, and $\hat\hB_3=\hB+O_P(T^{-1/2})$. Define the set $\mathcal{H}$ to be the collection of all positive definite matrices of the Kronecker product form: $$\mathcal{H}=\{\bar\Omega:\;\bar\Omega=
  \bar\Omega_c\otimes\bar\Omega_r \hbox{ for some $m\times m$ and $n\times n$ positive definite matrices } \bar\Omega_r \hbox{ and } \bar\Omega_c \}.$$ It suffices to show that for any seuqnce $c_T\rightarrow\infty$,
  \begin{equation}
      \label{eq:ABRootT}
      P\left[\inf_{\sqrt{T}\|\bar\Phi-\Phi\|_F\geq c_T} \inf_{\bar\Omega\in\mathcal{H}} h(\bar\Omega,\hS(\bar\Phi)) \leq \inf_{\bar\Omega\in\mathcal{H}} h(\bar\Omega,\hS(\Phi))\right] \rightarrow 0.
  \end{equation}
  Note that for any $\bar\Phi$, 
  \begin{equation*}
      (\h{\mathcal{Y}}-\bar\Phi\h{\mathcal{X}})(\h{\mathcal{Y}}-\bar\Phi\h{\mathcal{X}})' = (\h{\mathcal{Y}}-\check\Phi\h{\mathcal{X}})(\h{\mathcal{Y}}-\check\Phi\h{\mathcal{X}})' + (\bar\Phi-\hat\Phi)\h{\mathcal{X}}\h{\mathcal{X}}'(\bar\Phi-\check\Phi)'.
  \end{equation*}
  Let $\hat\Gamma_0:=\h{\mathcal{X}}\h{\mathcal{X}}'/(T-1)$ be the sample covariance matrix of $\vect(\hX_t)$, then the preceding equation can be written in the compact form
  \begin{equation*}
      \hS(\bar\Phi) = \check\hS + (\bar\Phi-\hat\Phi)\hat\Gamma_0(\bar\Phi-\check\Phi)',
  \end{equation*}
  which leads to
  \begin{align*}
      h[\bar\Omega,\hS(\bar\Phi)]&=h(\bar\Omega,\check\hS)+\tr[\bar\Omega(\bar\Phi-\hat\Phi)\hat\Gamma_0(\bar\Phi-\check\Phi)'] \\
      &\geq h(\bar\Omega,\check\hS) + \lambda_{\min}(\bar\Omega)\cdot\lambda_{\min}(\hat\Gamma_0)\cdot\|\bar\Phi-\check\Phi\|_F^2.
  \end{align*}
  According to \eqref{eq:sigma_consistency}, for any constant $c>0$, the following event holds with probability tending to 1:
  \begin{equation}
  \label{eq:4.2}
      \left[\inf_{\|\bar\Omega-\Omega\|_F\geq c}h(\bar\Omega,\check\hS)> h(\Omega,\hS(\Phi)) \geq \inf_{\bar\Omega\in\mathcal{H}} h(\bar\Omega,\hS(\Phi))\right].
  \end{equation}
  On the other hand, there exists a constant $C_1>0$, such that
  \begin{equation*}
      \inf_{\{\bar\Omega\in\mathcal{H}:\,\|\bar\Omega-\Omega\|_F\leq C_1\}} \lambda_{\min}(\bar\Omega) \geq \tfrac{1}{2}\lambda_{\min}(\Omega).
  \end{equation*}
  It follows that
  \begin{equation}
  \label{eq:4.3}
      \inf_{\{\bar\Omega\in\mathcal{H}:\,\|\bar\Omega-\Omega\|_F\leq C_1\}}h[\bar\Omega,\hS(\bar\Phi)]\geq      \inf_{\{\bar\Omega\in\mathcal{H}:\,\|\bar\Omega-\Omega\|_F\leq C_1\}}h[\bar\Omega,\check\hS]+ \tfrac{1}{2}\lambda_{\min}(\Omega)\cdot\lambda_{\min}(\hat\Gamma_0)\cdot\|\bar\Phi-\check\Phi\|_F^2.
  \end{equation}
  Since $\hat\Gamma_0\stackrel{\hbox{\footnotesize a.s.}}{\longrightarrow}\Gamma_0$ and $\check\Phi=\Phi+O_P(1/\sqrt{T})$, we know
  \begin{equation}
  \label{eq:4.4}
      P\left[\inf_{\sqrt{T}\|\bar\Phi-\Phi\|_F\geq c_T}\lambda_{\min}(\hat\Gamma_0)\cdot\|\bar\Phi-\check\Phi\|_F^2 \geq \tfrac{1}{2}\lambda_{\min}(\Gamma_0)\cdot c_T^2/T\right]\rightarrow 1.
  \end{equation}
  Consider the function $h(\bar\Omega,\hS(\Phi))$. Since
  \begin{equation*}
      \hS(\Phi) = \check\hS + (\Phi-\hat\Phi)\hat\Gamma_0(\Phi-\check\Phi)',
  \end{equation*}
  it holds that
  \begin{equation}
   \label{eq:4.5}   
  \begin{aligned}
      \inf_{\{\bar\Omega\in\mathcal{H}:\,\|\bar\Omega-\Omega\|_F\leq C_1\}}h[\bar\Omega,\hS(\Phi)]&=\inf_{\{\bar\Omega\in\mathcal{H}:\,\|\bar\Omega-\Omega\|_F\leq C_1\}} \left\{h(\bar\Omega,\check\hS)+\tr[\bar\Omega(\Phi-\check\Phi)\hat\Gamma_0(\Phi-\check\Phi)']\right\} \\
      & = \inf_{\{\bar\Omega\in\mathcal{H}:\,\|\bar\Omega-\Omega\|_F\leq C_1\}}h(\bar\Omega,\check\hS) + O_P(T^{-1}).
  \end{aligned}
  \end{equation}
  Combining \eqref{eq:4.3} \eqref{eq:4.4} and \eqref{eq:4.5}, and noting that $c_T\rightarrow\infty$, we have established that with probability converging to 1,
  \begin{equation*}
  \label{eq:4.6}
      \inf_{\sqrt{T}\|\bar\Phi-\Phi\|_F\geq c_T}\inf_{\{\bar\Omega\in\mathcal{H}:\,\|\bar\Omega-\Omega\|_F\leq C_1\}}h[\bar\Omega,\hS(\bar\Phi)] >     \inf_{\{\bar\Omega\in\mathcal{H}:\,\|\bar\Omega-\Omega\|_F\leq C_1\}}h[\bar\Omega,\hS(\Phi)].
  \end{equation*}
  The preceding equation, together with \eqref{eq:4.2}, implies \eqref{eq:ABRootT}; and the proof is therefore complete.
\end{proof}

\subsection{Proof of Corollaries}

\begin{proof}[Proof of Corollary~\ref{thm:eff}]
  From Theorem~\ref{thm:lse} and Theorem~\ref{thm:mle}, we have 
  \begin{align*}
    \Xi_2 & = [\E(\hW_t\hW_t')+\hgamma\hgamma']^{-1}\E(\hW_t\Sigma\hW_t')
            [\E(\hW_t\hW_t')+\hgamma\hgamma']^{-1},\\
    \Xi_3 & = [\E(\hW_t\Sigma^{-1}\hW_t')+\hgamma\hgamma']^{-1}\E(\hW_t\Sigma^{-1}\hW_t')
            [\E(\hW_t\Sigma^{-1}\hW_t')+\hgamma\hgamma']^{-1}.
  \end{align*}
  Let $\Sigma=\hQ\diag\{\lambda_1,\lambda_2,\ldots,\lambda_{mn}\}\hQ'$
  be the spectral decomposition of $\Sigma$, and use
  $\hu_1,\hu_2,\ldots,\hu_{mn}$ to denote the $mn$ columns of the
  matrix $\hW_t\hQ$. We further let $\Theta_i=\E(\hu_i\hu_i')$, and
  note that $\Theta_i$ is a symmetric positive semi-definite
  matrix. Then the preceding equation becomes
  \begin{equation*}
    \begin{aligned}
      \Xi_2 & = \left(\sum_{i=1}^{mn}\Theta_i+\hgamma\hgamma'\right)^{-1}
      \left(\sum_{i=1}^{mn}\lambda_i\Theta_i\right)
      \left(\sum_{i=1}^{mn}\Theta_i+\hgamma\hgamma'\right)^{-1},\\
      \Xi_3 & = \left(\sum_{i=1}^{mn}\lambda_i^{-1}\Theta_i+\hgamma\hgamma'\right)^{-1}
      \left(\sum_{i=1}^{mn}\lambda_i^{-1}\Theta_i\right)
      \left(\sum_{i=1}^{mn}\lambda_i^{-1}\Theta_i+\hgamma\hgamma'\right)^{-1}.
    \end{aligned}
  \end{equation*}
  Since $\hW_t'\hgamma=\hzero$, it holds that $\Theta_i\hgamma=\hzero$
  for all $1\leq i\leq mn$. It follows that
  \begin{equation}
    \label{eq:eff1}
    \begin{aligned}
      \Xi_2+\hgamma\hgamma' & = \left(\sum_{i=1}^{mn}\Theta_i+\hgamma\hgamma'\right)^{-1}
      \left(\sum_{i=1}^{mn}\lambda_i\Theta_i+\hgamma\hgamma'\right)
      \left(\sum_{i=1}^{mn}\Theta_i+\hgamma\hgamma'\right)^{-1}\\
      \Xi_3 + \hgamma\hgamma' & = \left(\sum_{i=1}^{mn}\lambda_i^{-1}\Theta_i+\hgamma\hgamma'\right)^{-1}.
    \end{aligned}
  \end{equation}
  To simplify the long equations, we let
  $\Theta_{mn+1}=\hgamma\hgamma'$, $\lambda_{mn+1}=1$, and make the
  convention that all the sums over $i$ runs from $i=1$ to
  $i=mn+1$. The equation \eqref{eq:eff1} becomes
  \begin{equation}
    \label{eq:eff2}
    \begin{aligned}
      \Xi_2+\hgamma\hgamma' & = \left(\sum_i\Theta_i\right)^{-1}
      \left(\sum_i\lambda_i\Theta_i\right)
      \left(\sum_i\Theta_i\right)^{-1},\\
      \Xi_3 + \hgamma\hgamma' & = \left(\sum_i\lambda_i^{-1}\Theta_i\right)^{-1}.
    \end{aligned}
  \end{equation}
  From \eqref{eq:eff2}, we see that in order to show that
  $\Xi_2\geq\Xi_3$, it suffices to show that
  \begin{equation}
    \label{eq:eff3}
    \sum_i\lambda_i^{-1}\Theta_i - \left(\sum_i\Theta_i\right)
    \left(\sum_i\lambda_i\Theta_i\right)^{-1}
    \left(\sum_i\Theta_i\right)
  \end{equation}
  is positive semi-definite. For this purpose, we construct the matrix
  \begin{equation}
    \label{eq:eff4}
    \begin{pmatrix}
      \sum_i\lambda_i^{-1}\Theta_i & \sum_i\Theta_i \\
      \sum_i\Theta_i & \sum_i\lambda_i\Theta_i
    \end{pmatrix} = \sum_i
    \begin{pmatrix}
      \lambda_i^{-1}\Theta_i & \Theta_i \\
      \Theta_i & \lambda_i\Theta_i
    \end{pmatrix}.
  \end{equation}
  Since $\Theta_i$ is positive semi-definite, each term on the right
  hand side of \eqref{eq:eff4} is also positive semi-definite, and so
  is the sum in \eqref{eq:eff4}. If we view the matrix in
  \eqref{eq:eff4} as a covariance matrix, then the matrix in
  \eqref{eq:eff3} is the conditional covariance matrix of the first
  half given the second half, so it is positive semi-definite, and the
  proof is complete.
\end{proof}

\begin{proof}[Proof of Corollary~\ref{thm:test}]
Following standard theory of
multivariate ARMA models \citep{hannan:1970,dunsmuir:1976}, the
conditions of Theorem~\ref{thm:clt1} guarantees that $\hat\Phi$
converges to a multivariate normal distribution:
\nocite{deistler:1978}
\begin{equation*}
  \sqrt{T}\,\mathrm{vec}(\hat\Phi-\hB\otimes\hA) \Rightarrow
  N(\hzero,\Gamma_0^{-1}\otimes\Sigma),
\end{equation*}
where $\Sigma$ is the covariance matrix of $\vect(\hE_t)$, and
$\Gamma_0$ is given in \eqref{eq:autocov}. 
Recall that $\halpha=\vect(\hA)$ is a unit vector, $\hubeta=\vect(\hB)$, and $\hbeta_1$ is the normalized version of $\hubeta$. Since $\tilde\Phi$ is the rearranged version of $\hat\Phi$, it follows that
\begin{equation*}
    \sqrt{T}\,\mathrm{vec}(\tilde\Phi-\halpha\hubeta') \Rightarrow
  N(\hzero,\Xi_1).
\end{equation*}
According to \eqref{eq:4}, it holds that
  \begin{equation*}
    \hat\hD=\tilde\Phi-\hat\halpha\hat\hubeta'  = (\hI-\halpha\halpha')(\tilde\Phi-\halpha\hubeta')(\hI-\hbeta_1\hbeta_1') + o_P(T^{-1/2}).
  \end{equation*}
Recall that $\hP$ is defined as $(\hI-\hbeta_1\hbeta_1')\otimes(\hI-\halpha\halpha')$, so after taking vectorization on both sides, 
\begin{equation*}
    \vect(\hat \hD) = \hP\,\vect(\tilde\Phi-\halpha\hubeta') + o_P(T^{-1/2}),
\end{equation*}
and consequently
\begin{equation*}
    \sqrt{T}\,\vect(\hat \hD)\Rightarrow N(\hzero,\hP\,\Xi_1\hP).
\end{equation*}
Note that the matrix $\hP_1$ is an orthogonal projection matrix with rank $(m^2-1)(n^2-1)$, therefore it follows that
\begin{equation*}
      T\cdot\vect(\hat\hD)'\;(\hP\,\Xi_1\,\hP)^+\;\vect(\hat\hD)\Rightarrow\chi^2_{(m^2-1)(n^2-1)},
\end{equation*}
and the proof is complete.  
\end{proof}


\end{document}